\documentclass[12pt]{article}

\usepackage[margin=1in]{geometry}
\usepackage{amsmath,amssymb}
\usepackage{booktabs}
\usepackage{multirow}
\usepackage{graphicx}
\usepackage{float}
\usepackage{placeins}
\usepackage{enumerate}
\usepackage{setspace}
\usepackage{natbib}
\usepackage{url}
\usepackage{xcolor}
\usepackage[hidelinks]{hyperref}

\newtheorem{theorem}{Theorem}
\newtheorem{prop}{Proposition}

\newtheorem{assumption}{Assumption}

\newtheorem{remark}{Remark}
\newenvironment{proof}{\medskip\noindent\textbf{Proof: }}{\hfill$\square$}
\newcommand{\E}{\mathbb{E}}

\renewcommand{\arraystretch}{1.0}

\title{Pseudo-Panel Event Studies:\\ Econometric Foundations for Credible Applications}
\author{Dongwoo Kim\thanks{Department of Economics, Simon Fraser University and Korea University. I thank Michele Battisti for introducing me to this topic. I also thank Louise Laage for helpful comments and discussions that helped shape the ideas in this paper. I gratefully acknowledge support from the Social Sciences and Humanities Research Council of Canada under the Insight Grant (435-2024-0322). All errors are my own.}\\ Simon Fraser University \& Korea University}
\date{October 12, 2026}

\begin{document}
\maketitle

\begin{abstract}
Event studies with panel data are widely used to estimate treatment effects. With repeated cross-sections, future treatment dates are unobserved for untreated individuals. Pseudo-panel event studies impute pre-treatment outcomes of the treated using matched untreated individuals. This paper establishes identification in such designs and characterizes bias when identification conditions fail. I show that the pseudo-panel regression is equivalent to weighted least squares with group means. This representation simplifies computation and yields analytical inference. In simulations, analytical confidence intervals attain near-nominal coverage, whereas conventional intervals substantially undercover. I also develop a doubly robust estimator using richer covariates, a symmetric matching design with synthetic type fixed effects, and a fully nonparametric estimator that accommodates heterogeneous treatment effects. In a Korean panel survey, the pseudo-panel regression yields an employment child penalty of 42.1\%, close to 46.9\% in the actual panel. Applications to repeated cross-sections yield employment penalties of about one-third.
\end{abstract}

\noindent{\it Keywords:} pseudo-panel, event study, repeated cross-section, child penalty, doubly robust estimation.

\newpage

\section{Introduction}

The labor-market impact of having a child is typically estimated by comparing outcomes before and after childbirth. For instance, using Danish administrative data and an event study, \citet*{kleven2019children} estimate sharp and persistent ``child penalties'' in earnings, employment, and hours worked. \citet{kleven2019cross} apply the same approach and find similar patterns across six countries. However, panel data that follow individuals long enough to estimate child penalties are rare. In many settings, researchers have access only to repeated cross-sections, where different individuals are sampled in each wave. In such data, a childless respondent's future birth date (and hence their negative event time) is never observed. To overcome this limitation, \cite{kleven2025child} and \citet*{kleven2025atlas} develop a pseudo-panel event study design that imputes pre-birth outcomes of parents from observed outcomes of childless individuals matched on age, survey year, and discrete characteristics, under the assumption that the childless provide a valid comparison for parents. The design is attractive because it makes event studies possible with repeated cross-sections, which are widely available and typically have much larger samples than panels.\footnote{The approach builds on the classic pseudo-panel method of \cite{deaton1985panel}, who proposed tracking cohort means through repeated cross-sectional surveys. \cite{verbeek1992can}, \cite{moffitt1993identification}, and \cite{mckenzie2004asymptotic} develop identification and asymptotic theory for cohort-based pseudo-panels.}

This paper develops an econometric framework for pseudo-panel event studies. I establish identification in such designs under conditions that ensure matched childless individuals serve as a valid control group for parents, and characterize the bias when these conditions fail. I then show that the pseudo-panel regression is equivalent to a weighted least squares (WLS) regression with group means, and use this representation to derive its analytical variance. I also extend the design in three directions: (i) a doubly robust estimator that admits richer covariates in pre-birth outcome imputation; (ii) symmetric matching with synthetic type fixed effects (FE); and (iii) a fully nonparametric estimator that accommodates heterogeneous treatment effects. 

The design has become popular in applied research. \cite{zhou2022rising} estimate the rising child penalty in China and \cite{huang2025china} investigate the association between the penalty and declining fertility. \cite{marchionni2025last} find persistent earnings losses after childbirth in Latin America and \cite{nguyen2026vietnam} in Vietnam. \cite{jessen2025child} apply the approach to a single cross-section and document larger cognitive skill losses for women after childbirth. Beyond childbirth, \cite{wu2025marriage} finds that both men and women in China become more supportive of traditional gender roles after marriage. In policy research, \cite{bussolo2024marriage} estimate marriage and child penalties in South Asia, and \cite{baez2026motherhood} document a lasting decline in mothers' labor force participation after childbirth in Turkey. More generally, the framework applies to cross-sectional data in which treatment dates are observed only for treated individuals.

A pseudo-panel is constructed by matching parents observed at birth to childless individuals (donors). The donor pool is a mixture of future parents and people who will never have a child. If these groups differ in unobserved determinants of the outcome, the imputed pre-birth outcome can be biased. I establish identification of the average treatment effect on the treated (ATT) in a potential-outcomes framework under no anticipation, parallel trends, and comparability between parents and childless donors conditional on the matching characteristics, and derive the bias when that comparability condition fails. This bias cancels in the gender difference only if it is the same for women and men. I also clarify the additional restrictions on untreated outcomes and treatment-effect heterogeneity needed to interpret the resulting regression coefficients as causal effects.

I then show that the ordinary least squares (OLS) regression using the pseudo-panel yields the same coefficients as WLS using mean outcomes within cells defined by discrete matching variables. I use this equivalence to derive the estimator's probability limit and asymptotic variance. A donor can supply several imputed outcomes, and an individual can appear in different years as both donor and parent. Conventional inference using heteroskedasticity-robust standard errors ignores this dependence because it treats the imputed pre-birth outcomes as independent observations. The analytical variance I derive accounts for this dependence, along with sampling variation in cell weights. Under regularity conditions, the approach applies to the baseline estimator and all three extensions. Using cell means also reduces regression time in large samples.

The original design matches on discrete variables, so continuous covariates must be discretized. Matching on many variables or variables with many categories can produce sparse or empty cells. I use a doubly robust (DR) estimator to impute the pre-birth cell means that enter the pseudo-panel regression. It combines an outcome model estimated on the donor sample with inverse-probability weights from a propensity score model for group membership. When parents and donors are comparable conditional on the richer covariates and overlap holds, the estimator consistently estimates these cell means if either model is correctly specified. With both models correctly specified, the estimator is semiparametrically efficient for the imputed mean when the parent and donor samples are independent.

I also consider an alternative symmetric pseudo-panel construction that matches parents observed at birth both backward to donors and forward to parents of the same observed type, defined by gender, first-birth year, age at first birth and matching characteristics. In the baseline design, post-birth outcomes come from whichever parents appear at each event time, so the composition of parents can change over event time. This symmetric matching extension, developed from the idea in \cite{kleven2025child}, weights each type at each event time by the number of parents within the type observed at birth. I include synthetic type fixed effects to absorb persistent differences in untreated outcome levels and state the additional identifying restriction they require.

For the last extension, I consider a fully nonparametric estimator in the spirit of \citet{callaway2021difference}. Under conditional parallel trends and the same comparability condition, I estimate the ATT by comparing parents and childless individuals observed in the same year, at the same age and with the same matching characteristics. The estimator needs no pre-birth donor observations and aggregates type-specific effects with positive weights, thereby accommodating heterogeneous treatment effects.

I conduct Monte Carlo simulations in a child-penalty setting to examine the performance of the proposed estimators. When selection into parenthood differs between women and men, the DR estimator reduces root mean squared error using additional covariates in the pre-birth imputation. Synthetic type fixed effects remove bias caused by persistent differences in untreated outcome levels across types. Restricting the sample to a balanced panel can increase selection bias. When treatment effects differ across cohorts, the pseudo-panel regressions remain biased even with type fixed effects, whereas the nonparametric estimator recovers the average effect. Conventional 95\% confidence intervals for the pseudo-panel regression cover the true penalty in only 47--51\% of replications. Analytical inference and the cluster bootstrap show coverage close to the nominal level. With the matching cells held fixed, the individual-level pseudo-panel regression takes 1.8 times as long as the cell-mean regression with 10,000 people, and 25.5 times as long with 200,000, demonstrating computational improvements delivered by the cell-mean approach.

The proposed estimators are then applied to three South Korean datasets. Korea has the largest gender wage gap among OECD countries \citep{oecd2026korea}, and offers both a long-running household panel and several repeated cross-sectional surveys. The Korean Labor and Income Panel Study (KLIPS) provides the panel validation. Over years 0--10 after birth, the standard pseudo-panel regression, matched on age, year, marital status and education, yields an employment penalty of 42.1\%, close to the 46.9\% from the actual-panel regression. For annual earnings, the pseudo-panel penalty (64.7\%) is close to the panel penalty (65.6\%). The DR estimator and symmetric matching with synthetic type fixed effects give employment penalties of 40\% and 41\%, respectively, and earnings penalties of 64\% and 65\%. Across matching specifications, the nonparametric estimator gives employment penalties of 35--38\% and substantially lower earnings penalties of 42--47\%. The method is then applied to two repeated cross-sections, the Household Income and Expenditure Survey (HIES) and the Household Finance and Living Conditions Survey (HFLC). The pseudo-panel regression gives employment penalties of 33.5\% in HIES and 32.0\% in HFLC. 

Recent work clarifies several methodological issues of this design. \cite{worldbank2024nonparametric} and \cite{balcazar2025nonparametric} match mothers to fathers based on their observed characteristics at childbirth and reweight observations at each event time to match the covariate distribution of the matched mothers. \cite{balcazar2025nonparametric} use this approach to address negative regression weights. \cite{bussolo2024marriage} use matching in both directions, as in the symmetric matching design considered here, but without synthetic type fixed effects.\footnote{Their post-marriage matches come from earlier marriage cohorts observed in the same survey year. My post-birth matches retain the same first-birth cohort across survey years.} They also use cohort means weighted by cohort size and note that this gives the same regression coefficients as using individual observations, without a formal proof or asymptotic analysis. \cite{thakral2026child} derive identifying restrictions for child-penalty regressions with age-at-first-birth or individual fixed effects. None of these papers provides a unified framework for identification, estimation, and inference in pseudo-panel event studies and their extensions.

The remainder of the paper is organized as follows. Section~\ref{sec:framework} sets up the framework, establishes identification and characterizes selection bias. Section~\ref{sec:estimation} develops the cell-level WLS equivalence and analytical inference. Section~\ref{sec:extensions} develops extensions. Section~\ref{sec:simulation} reports the Monte Carlo evidence. Section~\ref{sec:empirical} applies the proposed estimators to Korean panel and repeated cross-sectional data. Section~\ref{sec:conclusion} concludes. Technical proofs, additional results for symmetric matching, additional simulations and further empirical results are in the Supplementary Appendix.

\section{Framework and Identification}\label{sec:framework}

I adopt the potential outcomes framework of the staggered treatment literature, closely following the notation in \citet{kleven2025child}. Let $F_i$ denote the calendar year of individual $i$'s first childbirth, with the convention that $F_i = \infty$ if $i$ never has children. Outcomes are observed in calendar years $t = 1, \ldots, T$. The potential outcome of $i$ at time $t$ when first birth occurs in year $f$ is $Y_{it}(f)$ and the untreated potential outcome is $Y_{it}(\infty)$. The realized outcome is $Y_{it} = Y_{it}(F_i)$. Individual $i$ has gender $g \in \{w, m\}$ (women, men) and a vector of discrete pre-treatment characteristics $X_i$, including year of birth.\footnote{The matching variables $X_i$ are considered fixed: their values must remain the same in the birth year and in the earlier years from which pre-birth outcomes are imputed. Education or marital status can enter this vector only if their measurements satisfy that requirement.} The age $a_{it}$ at time $t$ is determined by $(X_i, t)$. For parents, the age at first birth is $\tilde{a}_i = a_{i,F_i}$, and the event time is $\tau_{it} = t - F_i$.

The parameter of interest is the average treatment effect on the treated:
\begin{equation}\label{eq:ATT}
\text{ATT}(f,t) = \E[Y_t(f) - Y_t(\infty) \mid F = f], \quad \text{for } 2 \le f \le t \le T.
\end{equation}
I suppress the gender superscript and the individual subscript when the context is clear. In practice, the estimator is applied separately by gender, and the child penalty is defined as $\text{ATT}^m(f,t) - \text{ATT}^w(f,t)$.

For parents observed at childbirth, the pseudo-panel imputes pre-birth outcomes using the mean outcomes of childless donors observed in the corresponding earlier year with the same characteristics. I establish identification of the ATT under three assumptions governing untreated outcome trends, anticipation and donor comparability. I assume that the observed cross-sections identify the relevant conditional means and covariate distributions of a common target population. All conditional expectations used below are well defined on the relevant support.

\begin{assumption}[Parallel Trends, PT]\label{assu:PT}
For all dates $t \neq t'$ and all cohorts $f \neq f'$, including $f=\infty$:
\begin{equation}\label{eq:PT}
\E[Y_t(\infty) - Y_{t'}(\infty) \mid F = f] = \E[Y_t(\infty) - Y_{t'}(\infty) \mid F = f'].
\end{equation}
\end{assumption}

\begin{samepage}
\begin{assumption}[Conditional No Anticipation, NA]\label{assu:NA}
For all $t < f$ and all $x$ in the support of $X \mid F = f$:
\begin{equation}\label{eq:NA}
\E[Y_t(f) \mid F = f, \, X = x] = \E[Y_t(\infty) \mid F = f, \, X = x].
\end{equation}
\end{assumption}
\end{samepage}

\begin{samepage}
\begin{assumption}[Conditional Independence, CI]\label{assu:CI}
For every date $s$, cohorts $f,f'>s$, including $\infty$, and $x$ in their common support:
\begin{equation}\label{eq:CI-alt}
\E[Y_s(\infty) \mid F = f, \, X = x] = \E[Y_s(\infty) \mid F = f', \, X = x].
\end{equation}
\end{assumption}
\end{samepage}

Assumption~\ref{assu:PT} requires the same mean untreated outcome trends across first-birth cohorts. Assumption~\ref{assu:NA} requires that anticipation of parenthood have no effect on mean pre-birth outcomes within each first-birth cohort and covariate cell.\footnote{This is weaker than the pointwise no-anticipation condition $Y_t(f) = Y_t(\infty)$ for all $t < f$ used in \cite{roth2023s}.} Assumption~\ref{assu:CI} requires that, conditional on $X$, people who are childless at a given date have the same mean untreated outcome regardless of when or whether they later have a child. Since $X$ includes year of birth, this comparison also holds age fixed. Assumptions~\ref{assu:NA} and~\ref{assu:CI} state two economic restrictions separately. CI equates untreated means across future first-birth cohorts within $X$, while conditional NA sets each cohort's mean anticipation effect to zero within $X$.\footnote{Assumption~3 in \citet{kleven2025child}, Equation~(5), directly equates parents' pre-birth mean outcomes with those of matched donors. In my notation, $\E[Y_s(f) \mid F=f, X=x] = \E[Y_s(F) \mid F>s, X=x]$ for $s<f$. This equality can hold when differences in untreated means are offset by differences in anticipation effects. Assumptions~\ref{assu:NA} and~\ref{assu:CI} exclude such offsetting and imply Kleven's condition on the relevant support. Kleven's condition also suffices for Theorem~\ref{thm:identification} when combined with Assumption~\ref{assu:PT} and no anticipation in cohort means, $\E[Y_s(f)-Y_s(\infty) \mid F=f]=0$ for $s<f$.} The following theorem establishes identification of the ATT under these assumptions.

\begin{samepage}
\begin{theorem}[Identification of ATT]\label{thm:identification}
Let Assumptions~\ref{assu:PT}--\ref{assu:CI} hold. Then for any $(f,t)$ with $2 \le f \le t \le T$,
\begin{multline}\label{eq:identification}
\text{ATT}(f,t) = \E[Y_t \mid F = f] - \E[Y_t \mid F > t] \\
- \E_x\!\big[\E[Y_{f-1} \mid F > f-1, X = x]\big| F = f\big] + \E_x\!\big[\E[Y_{f-1} \mid F > f-1, X = x]\big| F > t\big].
\end{multline}
\end{theorem}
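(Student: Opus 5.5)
Starting from the definition \eqref{eq:ATT}, I write $\text{ATT}(f,t) = \E[Y_t \mid F=f] - \E[Y_t(\infty)\mid F=f]$, using $Y_t = Y_t(F)$ on $\{F=f\}$. The task is then to identify the counterfactual mean $\E[Y_t(\infty)\mid F=f]$. I will use the never-treated-by-$t$ comparison group $\{F>t\}$, whose observed outcomes at $t$ are untreated: $\E[Y_t\mid F>t]=\E[Y_t(\infty)\mid F>t]$.

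\textbf{Step 1 (trend transfer).} By Assumption~\ref{assu:PT} applied with dates $t$ and $f-1$, the untreated trend is the same for cohort $f$ as for every cohort $f'>t$ (including $\infty$). Averaging over the cohorts in $\{F>t\}$ with their conditional probabilities gives
\[
\E[Y_t(\infty)\mid F=f] = \E[Y_{f-1}(\infty)\mid F=f] + \E[Y_t(\infty)\mid F>t] - \E[Y_{f-1}(\infty)\mid F>t].
\]
For this I need PT to aggregate over a mixture of cohorts, which it does because each term is a conditional mean and $\{F>t\}$ is a disjoint union of $\{F=f'\}$, $f'>t$.

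\textbf{Step 2 (pre-period levels via matching).} I next express each level at $f-1$ through donor cell means. For the cohort-$f$ term, I iterate expectations over $X\mid F=f$. On $\{F=f\}$ the realized pre-birth outcome is $Y_{f-1}(f)$, and Assumption~\ref{assu:CI} with $s=f-1$ and the cohorts $f$ and $f'>f-1$ gives $\E[Y_{f-1}(\infty)\mid F=f,X=x]=\E[Y_{f-1}(\infty)\mid F=f',X=x]$ for every $f'\ge f$. For the cohorts $f'>f$, Assumption~\ref{assu:NA} turns $Y_{f-1}(\infty)$ into the realized $Y_{f-1}$; the cohort $f'=\infty$ is already untreated. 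Averaging over the cohorts in $\{F>f-1\}$ within cell $x$ then yields
\[
\E[Y_{f-1}(\infty)\mid F=f,X=x]=\E[Y_{f-1}\mid F>f-1,X=x].
\]
The same argument applies to each cohort $f'>t\ (\ge f)$, since $f'>f-1$ as well, so $\E[Y_{f-1}(\infty)\mid F=f',X=x]=\E[Y_{f-1}\mid F>f-1,X=x]$. Integrating over $X\mid F>t$ gives $\E[Y_{f-1}(\infty)\mid F>t]=\E_x[\E[Y_{f-1}\mid F>f-1,X=x]\mid F>t]$. Substituting both expressions and the observed $\E[Y_t\mid F>t]$ into Step 1 delivers \eqref{eq:identification}.

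\textbf{Main obstacle.} The delicate step is the within-cell averaging in Step 2. The donor pool $\{F>f-1\}$ includes cohort $f$ itself and future parents whose realized outcome at $f-1$ is $Y_{f-1}(f')$ rather than $Y_{f-1}(\infty)$. I must invoke Assumption~\ref{assu:NA} cohort by cohort at $X=x$, and Assumption~\ref{assu:CI} across all cohorts $f'\ge f$, to equate every component of the mixture to the same untreated cell mean. I also need the common-support caveat: the cells $x$ in the support of $X\mid F=f$ and $X\mid F>t$ must lie in the support of $X\mid F>f-1$, which holds because those sets are subsets of $\{F>f-1\}$.
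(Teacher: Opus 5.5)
Your proof is correct and follows essentially the paper's argument: parallel trends, aggregated over the cohorts in $\{F>t\}$, transfers the untreated trend, and conditional no anticipation plus conditional independence within each cell $x$, applied to every cohort $h\ge f$ in the donor pool, replace the levels at $f-1$ by donor-cell means integrated over $X\mid F=f$ and $X\mid F>t$. The only difference is bookkeeping: the paper first passes to realized outcomes (the panel DiD form of Remark~\ref{rem:panel-did}) and then identifies $\E[Y_{f-1}\mid F=f]$ and $\E[Y_{f-1}\mid F>t]$, while you keep these as untreated means; note also that in Step~2 no anticipation must be invoked for the donor cohort $f'=f$ too (as your final paragraph acknowledges), just as it is needed for $\E[Y_t\mid F>t]=\E[Y_t(\infty)\mid F>t]$.
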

\end{samepage}

All expectations on the right-hand side of \eqref{eq:identification} are identified from observable data. The first two terms are mean observed outcomes of parents and childless individuals at time $t$, and the last two terms are outcomes of childless individuals at time $f-1$ integrated over the covariate distributions of parents ($X \mid F = f$) and those still childless at $t$ ($X \mid F > t$), respectively. The last two terms subtract the pre-birth outcome difference attributable to the groups' different covariate distributions. If parents and childless individuals have the same covariate distribution, this adjustment is zero.

\begin{remark}[Comparison with panel data]\label{rem:panel-did}
With panel data, each parent's own pre-birth outcome is observed, and applying Assumptions~\ref{assu:PT} and~\ref{assu:NA} within gender $g$ yields
\begin{equation}\label{eq:panel-att-g}
\text{ATT}^g(f,t) = \E[Y_t - Y_{f-1} \mid F = f, \, g] - \E[Y_t - Y_{f-1} \mid F > t, \, g],
\end{equation}
so cohort-level parallel trends and no anticipation suffice and Assumption~\ref{assu:CI} is not needed. In repeated cross-sections, these pre-birth means must instead be recovered from matched childless donors. Assumption~\ref{assu:CI} justifies this substitution.
\end{remark}

I now derive the bias that arises when Assumption~\ref{assu:CI} fails. This assumption is restrictive because career and fertility decisions are jointly determined. For instance, women in high-paying professions may delay or forgo childbearing. If permanently childless individuals in the donor group differ in unobserved labor market ability from future parents, the imputed pre-birth outcomes are biased.

To illustrate the bias, consider the following additive model for observed outcomes:
\begin{equation}\label{eq:outcome-model}
Y_{it}^g = \pi_g(\tau_{it}) + \underbrace{\phi_{a_{it}}^g}_{\text{age FE}} + \underbrace{\psi_t^g}_{\text{time FE}} + \underbrace{\alpha_i^g}_{\text{ind.\ FE}} + X_i'\beta^g + \underbrace{\varepsilon_{it}^g}_{\text{shock}}.
\end{equation}
Here $\pi_g(\tau)$ is the common effect of parenthood at event time $\tau$. It is zero for never-parents and before childbirth. The terms $\phi_a^g$, $\psi_t^g$ and $\alpha_i^g$ are age, calendar-year and individual fixed effects.\footnote{The coefficient of year of birth in $\beta^g$ is normalized to zero because it is collinear with the age and calendar-year fixed effects.} I assume $\E[\varepsilon_{it}^g \mid F_i, X_i] = 0$, so selection on unobservables operates through $\alpha_i^g$.

With panel data, comparing each person's outcome after childbirth with their own outcome in the year before childbirth removes both $\alpha_i^g$ and $X_i'\beta^g$. For $\tau\geq0$ and $s=F_i-1=t-\tau-1$,
\[
\E[Y_{it}^g-Y_{is}^g\mid\tau_{it}=\tau,a_{it}=a]
=\pi_g(\tau)+(\phi_a^g-\phi_{a-\tau-1}^g)+(\psi_t^g-\psi_s^g).
\]
After accounting for age and calendar-year effects, this mean change equals $\pi_g(\tau)$ even if the timing of childbirth is related to $\alpha_i^g$.

In repeated cross-sections, however, parents observed at childbirth in year $t$ at age $a$ are compared with donors observed at $s=t-1$. This donor pool includes both future parents and never-parents. Conditional on $(a-1,x,g)$, let $\omega_k^g=\Pr(F=s+k\mid F>s,a-1,x,g)$ for $k\ge1$, and let $\omega_\infty^g=\Pr(F=\infty\mid F>s,a-1,x,g)$.\footnote{These shares sum to one. Their dependence on $s$, $a-1$ and $x$ is suppressed in the notation.} Define $d_g(t,a)=(\phi_a^g-\phi_{a-1}^g)+(\psi_t^g-\psi_{t-1}^g)$ and
\[
\Delta_g(t,a,x)=\E[\alpha_i^g\,|\,F=t,a,x]-\sum_{k\ge1}\omega_k^g\,\E[\alpha_i^g\,|\,F=t-1+k,a-1,x]-\omega_\infty^g\,\E[\alpha_i^g\,|\,F=\infty,a-1,x],
\]
which is the difference in means of individual fixed effects between parents observed at birth and their donor pool. $\Delta_g(t,a,x)$ also equals their gap in mean untreated outcomes at $t-1$. The pseudo-panel comparison of parents observed at birth with the donor pool one year earlier is
\begin{multline}\label{eq:pseudo-bias}
\E[Y_{it}^g\mid F=t,a,x]-\E[Y_{i,t-1}^g\mid F>t-1,a-1,x]
=\pi_g(0)+d_g(t,a)+\Delta_g(t,a,x).
\end{multline}
Within this model, Assumption~\ref{assu:CI} implies $\Delta_g(t,a,x)=0$.

As in the panel case, the pseudo-panel regression absorbs $d_g(t,a)$ through age and year fixed effects. However, $\Delta_g(t,a,x)$ may remain when Assumption~\ref{assu:CI} is violated, and this bias cancels in the gender difference only if it is the same for women and men. For example, if childless women have higher untreated outcomes than mothers and the corresponding male gap is zero, the child penalty from this comparison is overestimated. Richer matching variables can reduce this bias.\footnote{\citet{kleven2025child} finds positive jumps in men's outcomes at $\tau=0$ when matching only on age, year and gender, which he attributes to selection rather than childbirth. These jumps disappear when education, marital status, race and state are added.}

\section{Estimation}\label{sec:estimation}

I first describe the pseudo-panel regression of \citet{kleven2025child}. All regressions are estimated separately by gender, and $g$ is suppressed when no confusion arises. The first-birth year $F_i$ (and hence $\tau_{it}$) for parents is typically inferred from the oldest child's age. Define a \emph{cell} as $c = (X, t)$: all individuals with demographics $X$ observed in year $t$. Within each cell, let $N^C_c$ denote the number of childless individuals and $\bar{Y}^C_c$ their mean outcome. Estimation is restricted to an event-time window $\tau \in \mathcal{T} := \{-\underline{T}, \dots, \overline{T}\}$.

Consider a parent $i$ observed at $\tau = 0$ in cell $(X_i, F_i)$. For each pre-birth event time $\tau < 0$, the \emph{donor cell} is $(X_i, F_i + \tau)$: childless individuals with the same demographics observed in year $F_i + \tau$. Since $X_i$ includes year of birth, they have the same age as the parent had $|\tau|$ years earlier. The parent's pre-birth outcome is imputed as the donor-cell mean, $\tilde{Y}_{i\tau}=\bar{Y}^C_{(X_i,\, F_i + \tau)}$. This backward matching restricts which parents can be used. The donor cell at event time $\tau < 0$ requires childless observations in year $F_i + \tau$, before the parent's first child is born, so when the data begin in calendar year $\underline{t}$, parents with $F_i + \tau < \underline{t}$ have no donors at that event time.\footnote{With $\underline{T} = 5$ pre-birth years and data starting in 2006, for instance, only first-birth cohorts with $F_i \ge 2011$ have complete pre-birth donor coverage.} For post-birth event times $\tau \ge 0$, the observed outcomes of all parents are used. Any parent $j$ observed in year $t$ with $\tau_{jt} = \tau$ contributes $\tilde{Y}_{j\tau} = Y_{jt}$.

Using the constructed pseudo-panel, the following regression is estimated by OLS, and the estimator is referred to as pseudo-OLS:
\begin{equation}\label{eq:event-study}
	\tilde{Y}^{g}_{i\tau}
	=
	\sum_{\tau' \in \mathcal{T}} \beta^{g}_{\tau'}
	\, \mathbf{1}\{\tau = \tau'\}
	+ \phi^g_{a_{i\tau}} + \psi^g_{t_{i\tau}} + \nu^{g}_{i\tau}.
\end{equation}
Here $t_{i\tau}=F_i+\tau$ is the year of observation $(i,\tau)$, and $a_{i\tau}$ is the parent's age in that year. The terms $\phi_a^g$ and $\psi_t^g$ are age and calendar-year fixed effects. The coefficient $\beta^g_{\tau^\star}$ is normalized to zero for a base period $\tau^\star < 0$. The child penalty at event time $\tau$ is $\beta^m_\tau - \beta^w_\tau$. Penalties are also reported in percentage terms. For each gender, the estimated counterfactual mean $\hat c^g_\tau$ is obtained by averaging the predicted outcome net of the estimated event-time effect, $\widehat{\tilde Y}^g_{i\tau}-\hat\beta^g_\tau$, across pseudo-panel observations at event time $\tau$. The scaled penalty is $\hat\beta^m_\tau/\hat c^m_\tau-\hat\beta^w_\tau/\hat c^w_\tau$.\footnote{For event studies with panel data, \cite{leventer2026identification} shows that the gender gap in normalized event-study effects is not identified under his normalized triple-difference framework when parallel trends fails within gender, and proposes the effect of parenthood on the gender earnings ratio as an alternative estimand.} 

Assumptions~\ref{assu:PT}--\ref{assu:CI} identify the cohort-time ATT nonparametrically. The event-study regression in \eqref{eq:event-study} approximates untreated outcomes by additive age and year effects and aggregates treatment effects across cohorts into a common event-time profile. If this specification is correct, the population event-time coefficients equal the ATT. Otherwise, the coefficients need not recover the average causal effects.

Although the pseudo-panel is constructed at the individual level, its regressors and imputed pre-birth outcomes are constant within each \emph{regression cell} $r=(X,t,\tau)$. The pseudo-OLS regression can therefore be rewritten as an equivalent WLS regression using cell means. Let $\mathcal{R}^g$ denote the set of regression cells for gender $g$. For $r=(X,t,\tau)$ with $\tau\geq0$, let $N^P_r$ denote the number of parents observed in that cell. Define
  \[
  \bar{Y}_r =
  \begin{cases}
  \displaystyle
  \frac{1}{N^P_r}\sum_{j:(j,t)\in r} Y_{jt}, & \tau \ge 0,\\[2ex]
  \displaystyle
  \bar{Y}^{C}_{(X,\, t)}, & \tau < 0,
  \end{cases}
  \qquad
  w_r =
 \begin{cases}
   N^{P}_{r}, & \tau \ge 0,\\
   N^{P}_{(X,\, t-\tau,\, 0)}, & \tau < 0,
 \end{cases}
 \]
where $N^P_{(X,\, t-\tau,\, 0)}$ is the number of parents with characteristics $X$ observed at childbirth in year $F=t-\tau$. Each of these parents receives the same donor-cell mean as the imputed outcome at $\tau<0$, so the mean appears once per parent in the individual-level regression. Let $\mathbf d_r$ be the vector of event-time, age and year dummies for cell $r$, omitting the dummy for $\tau^\star$ and one year dummy, and let $\mathbf b^g$ collect their coefficients. The regression assigns the same value, $\mathbf d_r'\mathbf b^g$, to all observations in a cell. Since their deviations from the cell mean sum to zero,
\[
\begin{aligned}
\sum_{r\in\mathcal R^g}\sum_{(i,\tau)\in r}
(\tilde Y_{i\tau}-\mathbf d_r'\mathbf b^g)^2
&=\sum_{r\in\mathcal R^g}\sum_{(i,\tau)\in r}
(\tilde Y_{i\tau}-\bar Y_r)^2\\
&\quad+\sum_{r\in\mathcal R^g}w_r
(\bar Y_r-\mathbf d_r'\mathbf b^g)^2.
\end{aligned}
\]
The first term on the right does not depend on $\mathbf b^g$, while the second is the WLS objective using cell means and weights $w_r$. Minimizing either regression objective therefore gives the same coefficients under full column rank: $\widehat{\mathbf b}^{g}_{\mathrm{WLS}}=\widehat{\mathbf b}^{g}_{\mathrm{OLS}}$. 

This cell representation makes the regression weights explicit and expresses the estimator as a function of cell means and counts, which I use to derive its analytical variance. It also reduces the size of the regression. With the covariates, calendar years and event window held fixed, the number of regression cells is bounded, while the number of individual-level observations grows with the number of parents. The cell regression can therefore be much faster in large samples.

\subsection{Consistency of the Cell-Mean Estimator}\label{sec:consistency}

I now show that, as cell sizes grow, the pseudo-panel OLS estimator converges to the probability limit of the infeasible ``oracle'' OLS estimator that uses the true (never observed) panel data. The convergence relies on two population objects. For post-birth cells ($\tau \ge 0$), the cell mean converges to the population conditional mean among parents at that event time:
\[
\bar{Y}_r \to_p \E[Y_{it} \mid X_i = x, \, \tau_{it} = \tau].
\]
For pre-birth cells ($\tau < 0$), the donor-cell mean converges to the childless population mean:
\[
\bar{Y}_r \to_p \E[Y_{it} \mid X_i = x, \, F_i > t].
\]
Under Assumptions~\ref{assu:NA} and~\ref{assu:CI}, the childless conditional mean equals the untreated potential outcome mean for parents:
\begin{equation}\label{eq:CI-cell}
\begin{aligned}
\E[Y_{it} \mid X_i = x, \, F_i > t]
&= \E[Y_{it}(\infty) \mid X_i = x, \, F_i > t] \\
&= \E[Y_{it}(\infty) \mid X_i = x, \, F_i = f] \quad \text{for any } f > t,
\end{aligned}
\end{equation}
so that each pre-birth cell mean consistently estimates the mean untreated outcome of parents with the same characteristics.

Define the oracle outcome for each observation in the pseudo-panel as
\[
Y^*_{i\tau} =
\begin{cases}
Y_{i,F_i+\tau}(\infty), & \tau < 0,\\
Y_{i,F_i+\tau}(F_i), & \tau \ge 0.
\end{cases}
\]
Let $\mathbf d_{i\tau}$ denote the corresponding event-time, age, and year dummies. Expectations for the oracle regression are taken over the limiting distribution of its observations, with cell probabilities $p_r=\operatorname{plim}\{w_r/\sum_{r'}w_{r'}\}$.

\begin{assumption}[Sampling and Design]\label{assu:lln}
Let $n_s$ denote the sample size of the year-$s$ cross-section. (i) As $\min_s n_s \to \infty$, the parent and donor counts and outcome sums of every included cell, normalized by the cross-section sizes, converge jointly in probability to their population values, and the relative sizes $n_s/\sum_{s'}n_{s'}$ converge to positive limits. For each included pre-birth cell with first-birth cohort $f$, covariates $x$ and date $s<f$, the mean of $Y_{is}(\infty)$ among the parents counted by $N^P_{(x,f,0)}$ also converges in probability to $\E[Y_s(\infty)\mid F=f,X=x]$. (ii) The set of included cells is fixed in the limit, and every cell has positive population mass. (iii) The limiting design matrix $Q^* := \E[\mathbf d_{i\tau} \mathbf d_{i\tau}']$ is nonsingular.
\end{assumption}

\noindent Part (i) of Assumption~\ref{assu:lln} is a joint law of large numbers for the observed cell sums and the unobserved pre-birth means used in the oracle regression. The latter condition requires parents observed at birth to represent their cohort and covariate group in mean untreated pre-birth outcomes. Stable relative sizes of the cross-sections give limits for the regression weights. Part (ii) keeps every cell mean well defined in the limit, and part (iii) makes the probability limit unique. The following proposition establishes consistency.

\begin{prop}[Convergence to Oracle OLS]\label{prop:oracle}
Suppose Assumptions~\ref{assu:NA}--\ref{assu:lln} hold. Then, as $\min_s n_s \to \infty$,
\[
\widehat{\mathbf b}^{\textup{OLS}} \to_p \mathbf b^*_0 := (Q^*)^{-1} \E[\mathbf d_{i\tau} Y^*_{i\tau}],
\]
which is the probability limit of the infeasible oracle OLS regression of $Y^*_{i\tau}$ on $\mathbf d_{i\tau}$.
\end{prop}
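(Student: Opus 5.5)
The proof runs through the cell-level WLS representation established above: $\widehat{\mathbf b}^{\textup{OLS}}=\widehat{\mathbf b}^{\textup{WLS}}$. Normalizing the weights by $W:=\sum_{r'}w_{r'}$, write
\[
\widehat{\mathbf b}^{\textup{OLS}}=\Big(\sum_{r}\hat p_r\,\mathbf d_r\mathbf d_r'\Big)^{-1}\sum_r \hat p_r\,\mathbf d_r\bar Y_r,\qquad \hat p_r:=w_r/W .
\]
The design vectors $\mathbf d_r$ are nonrandom, and the set of cells is finite and fixed by Assumption~\ref{assu:lln}(ii). The estimator is therefore a continuous function of the finite vector $(\hat p_r,\bar Y_r)_{r\in\mathcal R}$. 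The plan has three steps: find the probability limits of these inputs, apply the continuous mapping theorem, and check that the resulting limit equals the oracle population coefficient.

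\emph{Step 1: weights.} By Assumption~\ref{assu:lln}(i), each count $w_r$, scaled by its cross-section size, converges in probability to a positive constant. The relative sizes $n_s/\sum_{s'}n_{s'}$ also converge to positive limits. Hence $\hat p_r\to_p p_r>0$, and these limits are exactly the cell probabilities that define the oracle expectation.

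\emph{Step 2: cell means.} Each $\bar Y_r$ is a ratio of a normalized outcome sum to a normalized count, and the count has a positive limit. Assumption~\ref{assu:lln}(i) and the continuous mapping theorem then give the two limits displayed before the proposition.

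For a post-birth cell $r=(x,t,\tau)$ with $\tau\ge0$, $\bar Y_r\to_p \E[Y_t\mid X=x,\tau_t=\tau]$. This is the cohort-$f=t-\tau$ mean of $Y_t(f)$, which is the mean of $Y^*$ in that cell because $Y^*=Y_{F+\tau}(F)$ for $\tau\ge 0$.

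For a pre-birth cell $r=(x,s,\tau)$ with $s=f+\tau<f$, $\bar Y_r\to_p\E[Y_s\mid X=x,F>s]$. Every individual with $F>s$ is untreated at $s$, so the realized outcome satisfies $Y_s=Y_s(\infty)$ among them. Equation \eqref{eq:CI-cell}, which uses Assumption~\ref{assu:CI}, then equates this limit with $\E[Y_s(\infty)\mid F=f,X=x]$. The last part of Assumption~\ref{assu:lln}(i) says this is also the limit of the average of the oracle outcomes $Y^*_{i\tau}=Y_{is}(\infty)$ over the $N^P_{(x,f,0)}$ parents assigned to the cell.

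Assumption~\ref{assu:NA} is not needed to define the oracle outcome, since $Y^*$ is specified directly as the untreated outcome for $\tau<0$. It enters only to interpret $Y^*$ as parents' actual pre-birth mean, so it can be noted without doing any work in the proof.

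\emph{Step 3: continuous mapping and identification of the limit.} Define $\mu_r:=\operatorname{plim}\bar Y_r$. By construction of the oracle distribution, $Q^*=\sum_r p_r\mathbf d_r\mathbf d_r'$. Since $\mathbf d_{i\tau}=\mathbf d_r$ is constant within a cell, iterated expectations give
\[
\E[\mathbf d_{i\tau}Y^*_{i\tau}]=\sum_r p_r\mathbf d_r\E[Y^*\mid r]=\sum_r p_r\mathbf d_r\mu_r .
\]
Matrix inversion is continuous at the nonsingular $Q^*$ by Assumption~\ref{assu:lln}(iii). The continuous mapping theorem therefore yields $\widehat{\mathbf b}^{\textup{OLS}}\to_p(Q^*)^{-1}\sum_rp_r\mathbf d_r\mu_r=\mathbf b_0^*$.

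The main obstacle is the bookkeeping in the pre-birth cells. Three populations must be aligned there: the observed donor pool at year $s$, the parents observed at birth in year $f$ who receive the imputed mean, and the oracle expectation. The weight of such a cell comes from the year-$f$ cross-section, while its mean comes from the year-$s$ cross-section. So I must check that the joint convergence in Assumption~\ref{assu:lln}(i) across different years delivers the pair $(p_r,\mu_r)$. I must also check that $\E[Y^*\mid r]$ equals the donor limit, which requires both CI and the representativeness clause of Assumption~\ref{assu:lln}(i). I will handle this cell by cell, writing $\E[Y^*\mid r]$ explicitly for $\tau<0$ and $\tau\ge0$ before aggregating.
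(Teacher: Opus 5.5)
Your argument follows the paper's proof: the WLS representation, cell-by-cell limits of $\bar Y_r$ and $w_r/\sum_{r'}w_{r'}$, the representativeness clause of Assumption~\ref{assu:lln}(i) for the oracle pre-birth means, and the continuous mapping theorem. One step is wrong as you justify it, and it leads you to conclude incorrectly that Assumption~\ref{assu:NA} does no work.

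In this framework the realized outcome is $Y_s=Y_s(F)$. The donor pool $\{F>s\}$ contains future parents with $s<F=f'<\infty$, whose realized outcome at $s$ is $Y_s(f')$. Being not yet treated at $s$ does not make $Y_s(f')$ equal to $Y_s(\infty)$. The gap is the anticipation effect, which the framework allows unless Assumption~\ref{assu:NA} is imposed. Without NA the donor limit is
\[
\E[Y_s\mid F>s,X=x]=\sum_{f'>s}\Pr(F=f'\mid F>s,X=x)\,\E[Y_s(f')\mid F=f',X=x],
\]
with the sum including $f'=\infty$. Assumption~\ref{assu:CI} restricts only untreated means, so it cannot equate this mixture with $\E[Y_s(\infty)\mid F=f,X=x]$.

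The repair is immediate because NA is a hypothesis. Apply it to each future cohort $f'>s$ (it is trivial for $f'=\infty$) and average over $f'$ given $X=x$ and $F>s$. This gives $\E[Y_s\mid F>s,X=x]=\E[Y_s(\infty)\mid F>s,X=x]$. Then CI, averaged over the same cohorts, gives $\E[Y_s(\infty)\mid F=f,X=x]$. This is the first equality of \eqref{eq:CI-cell}, which you cite but attribute to CI alone. NA is therefore used in the donor-cell limit, not merely to interpret $Y^*$. With that correction your proof coincides with the paper's.
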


\subsection{Asymptotic Theory for the Cell-Mean Estimator}\label{sec:cell-inference}

The sampling variation of the pseudo-panel comes from both outcomes and the counts that determine its weights. The same donor-cell mean can be used to impute pre-birth outcomes for parents from different first-birth cohorts. In surveys that reinterview respondents, an individual can contribute as a donor in one year and as a parent in another. Spouses can also contribute to opposite sides of the gender difference. Conventional standard errors treating the constructed cells as independent observations do not account for this variation.

Let $N$ denote the number of independent clusters, each of which contains all observed years and all parent and donor roles of its members.\footnote{In a repeated cross-section with independent individual sampling, each observation is its own cluster.} For cluster $\ell$, let $n^P_{r,\ell}$ be the number of its parent observations in regression cell $r$ with $\tau\ge0$ and $S^P_{r,\ell}$ the sum of their outcomes, and let $n^C_{c,\ell}$ and $S^C_{c,\ell}$ be the same quantities for its childless records in donor cell $c$. Collect these over all included cells of both genders in $H_\ell=\big((S^P_{r,\ell},n^P_{r,\ell})_{r},\,(S^C_{c,\ell},n^C_{c,\ell})_{c}\big)$ and write $\hat\eta=N^{-1}\sum_{\ell=1}^N H_\ell$, $\eta=\E[H_\ell]$. Cell means are ratios of coordinates of $\hat\eta$, $\bar Y_r=\sum_\ell S^P_{r,\ell}/\sum_\ell n^P_{r,\ell}$ for post-birth cells and $\bar Y^C_c=\sum_\ell S^C_{c,\ell}/\sum_\ell n^C_{c,\ell}$ for donor cells, and the regression weights are obtained from these parent counts.

\begin{assumption}[Regularity]\label{assu:reg}
(i) Clusters are i.i.d., and $H_\ell$ has finite second moments.
(ii) The event window and the set of included cells are fixed. Every denominator used for a cell mean or normalized weight has a strictly positive population limit.
(iii) The limiting weighted design matrix $Q^*=\sum_r p_r\mathbf d_r\mathbf d_r'$, where $p_r$ is the probability limit of $w_r/\sum_{r'}w_{r'}$, is nonsingular.
(iv) The sample rule that determines which cells are included selects the set of (ii) with probability tending to one.
\end{assumption}

Assumption~\ref{assu:reg} supplies the law of large numbers for the observed cell moments and the finite second moments and independence across clusters needed for a central limit theorem. The additional law of large numbers for unobserved pre-birth means in Assumption~\ref{assu:lln}(i) is used only for the oracle comparison. The sets of regression cells $r=(X,t,\tau)$ and donor cells $c=(X,t)$ retained in the limit remain fixed as $N$ grows.\footnote{Allowing the number of cells to grow with $N$ would require additional rate and bias conditions \citep{mckenzie2004asymptotic}.}

\begin{prop}[Asymptotic Normality]\label{prop:cell-clt}
Under Assumption~\ref{assu:reg}, write the child penalty at an event time, or a scaled penalty averaged over post-birth event times, as $\hat P=G(\hat\eta)$ and its probability limit as $P^*=G(\eta)$, for a $G$ that is continuously differentiable in a neighborhood of $\eta$. Then
\[
\sqrt N(\hat P-P^*)\ \xrightarrow{d}\ N(0,V),\qquad
V=J\Sigma J',\quad J=DG(\eta),\quad \Sigma=\operatorname{Var}(H_\ell).
\]
Partition the coordinates of $H_\ell$ into the parent coordinates $(S^P,n^P)$, labeled $P$, and the donor coordinates $(S^C,n^C)$, labeled $D$, and partition $J$ and $\Sigma$ conformably. Then
\begin{equation}\label{eq:two-sample-var}
V=J_P\Sigma_{PP}J_P'+J_D\Sigma_{DD}J_D'
       +J_P\Sigma_{PD}J_D'+J_D\Sigma_{PD}'J_P'.
\end{equation}
Moreover, let $\hat J=DG(\hat\eta)$, $\hat\psi_\ell=\hat J(H_\ell-\hat\eta)$ and $\bar\psi=N^{-1}\sum_\ell\hat\psi_\ell$. The covariance estimator
\begin{equation}\label{eq:analytical-block-cov}
\widehat{\operatorname{Var}}(\hat P)
=\frac{1}{N(N-1)}\sum_{\ell=1}^N
 (\hat\psi_\ell-\bar\psi)(\hat\psi_\ell-\bar\psi)'
\end{equation}
satisfies $N\widehat{\operatorname{Var}}(\hat P)\to_p V$.\footnote{For independent annual samples with fixed numbers of clusters, replace $\Sigma$ by the weighted average of the within-year covariances. Remark~\ref{rem:fixed-year-sampling} gives the variance estimator and the corresponding bootstrap.}
\end{prop}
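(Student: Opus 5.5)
The plan is a standard delta-method argument built on the cluster-level moment vector $H_\ell$. First, under Assumption~\ref{assu:reg}(i) the clusters are i.i.d.\ with finite second moments. The multivariate Lindeberg--L\'evy central limit theorem then gives $\sqrt N(\hat\eta-\eta)\xrightarrow{d}N(0,\Sigma)$ with $\Sigma=\operatorname{Var}(H_\ell)$, and the weak law of large numbers gives $\hat\eta\to_p\eta$.

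The second step is to verify that the penalty really is a map $G(\hat\eta)$ that is continuously differentiable near $\eta$. I would do this through the WLS representation established in Section~\ref{sec:estimation}. Each cell mean is a ratio of coordinates of $\hat\eta$, namely $\bar Y_r=\bar S^P_r/\bar n^P_r$ and $\bar Y^C_c=\bar S^C_c/\bar n^C_c$. Each normalized weight $w_r/\sum_{r'}w_{r'}$ is a ratio of parent counts. All of these denominators have positive limits by Assumption~\ref{assu:reg}(ii), so these maps are smooth near $\eta$.

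The WLS coefficient is
\[
\widehat{\mathbf b}=\Bigl(\sum_r \tilde w_r\mathbf d_r\mathbf d_r'\Bigr)^{-1}\sum_r\tilde w_r\mathbf d_r\bar Y_r ,
\]
where $\tilde w_r$ denotes the normalized weights. The design matrix here converges to $Q^*$, which is nonsingular by (iii). Matrix inversion is smooth on an open neighborhood of a nonsingular matrix, so $\widehat{\mathbf b}$ is a $C^1$ function of $\hat\eta$ on a neighborhood of $\eta$.

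The penalty $\beta^m_\tau-\beta^w_\tau$ is linear in $\mathbf b$. The scaled version uses $\hat c^g_\tau$, which is a weighted average of fitted values net of $\hat\beta_\tau$ and so is again smooth in $(\hat\eta,\mathbf b)$. Dividing by $\hat c^g_\tau$ requires $c^g_\tau\ne0$ in the limit; I would impose this as part of the hypothesis that $G$ is $C^1$ near $\eta$. One point must be handled before any of this: the cell set is selected from the data. Under (iv), this selection equals the fixed set with probability tending to one. On that event $\hat P=G(\hat\eta)$ exactly, and events of vanishing probability do not affect convergence in distribution.

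The third step combines these with the delta method. We have
\[
\sqrt N(\hat P-P^*)=J\sqrt N(\hat\eta-\eta)+o_p(1)\xrightarrow{d}N(0,J\Sigma J').
\]
The decomposition~\eqref{eq:two-sample-var} is pure algebra: write $J=(J_P,J_D)$ and $\Sigma$ in the corresponding block form, then expand $J\Sigma J'$ to get the four terms. The cross terms $\Sigma_{PD}$ are what capture individuals who act as donors in some years and parents in others, and spouses who enter both genders' cells. These terms vanish only when parent and donor roles never share a cluster.

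The last step is consistency of the variance estimator, and I expect it to be the main technical step, though a mild one. Write $\hat\psi_\ell-\bar\psi=\hat J(H_\ell-\hat\eta)$; the centering is exact because $\bar\psi=\hat J(\bar H-\hat\eta)=0$. Then
\[
N\widehat{\operatorname{Var}}(\hat P)=\hat J\,\hat\Sigma\,\hat J',\qquad \hat\Sigma=\frac{1}{N-1}\sum_\ell(H_\ell-\hat\eta)(H_\ell-\hat\eta)'.
\]
Finite second moments give $\hat\Sigma\to_p\Sigma$ by the law of large numbers, after adding and subtracting $\eta$. Continuity of $DG$ at $\eta$, together with $\hat\eta\to_p\eta$, gives $\hat J\to_p J$. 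The continuous mapping theorem then yields $N\widehat{\operatorname{Var}}(\hat P)\to_p J\Sigma J'=V$. The only place I expect care is needed is the selection issue in (iv): $\hat J$ must be computed on the limiting cell set, and this again holds on an event with probability tending to one.
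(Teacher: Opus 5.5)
Your proposal is correct and follows the paper's proof essentially step for step. Both arguments use the cluster CLT for $\hat\eta$, smoothness of $G$ on the event in (iv) via ratios with positive limiting denominators, nonsingular $Q^*$ and nonzero counterfactual denominators, the delta method, the block expansion of $J\Sigma J'$, and $N\widehat{\operatorname{Var}}(\hat P)=\hat J\hat\Sigma\hat J'$ with $\hat J\to_p J$ and $\hat\Sigma\to_p\Sigma$.

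Your interpretive aside is imprecise, though it plays no role in the proof. Spouses who are both parents enter the cross-gender block of $\Sigma_{PP}$, not $\Sigma_{PD}$. Also, when no cluster holds both roles, $\Sigma_{PD}=-\eta_P\eta_D'$ is still nonzero. It is the weighted cross terms $J_P\Sigma_{PD}J_D'$ that vanish, because $J_D\eta_D=0$.
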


The first two terms in \eqref{eq:two-sample-var} account for sampling variation in the parent and donor moments, and the cross terms arise when the same cluster enters both, as when a person is observed as a donor and later as a parent. $\Sigma_{PP}$ also includes covariances across genders and event times. The child penalty is the difference between the male and female estimates. When spouses belong to the same cluster, a household shock moves both, so the variance of the difference is not the sum of the two variances, and positive covariance lowers it while negative covariance raises it. Covariances across event times also enter the variance of an average penalty. Conventional regression standard errors that ignore donor reuse and dependence within sampling clusters can therefore be too small or too large.

The Jacobian $J$ in Proposition~\ref{prop:cell-clt} makes explicit that the regression coefficients vary with the sample through two channels.\footnote{Appendix~\ref{app:analytical-implementation} gives the complete derivatives, including those of the counterfactual denominator.} For the unscaled regression, let $\mathbf D$ be the matrix with rows $\mathbf d_r'$, $w$ the limiting row weights, $\mu$ the vector of distinct population cell means, $L$ the matrix that assigns each row its cell mean, $Q=\mathbf D'\operatorname{diag}(w)\mathbf D$ and $\mathbf b^*=Q^{-1}\mathbf D'\operatorname{diag}(w)L\mu$. Then
\[
d\mathbf b=Q^{-1}\!\left\{\mathbf D'\operatorname{diag}(w)L\,d\mu
       +\mathbf D'\operatorname{diag}(dw)(L\mu-\mathbf D\mathbf b^*)\right\}.
\]
The first term is the effect of sampling variation in the cell means. The second is the effect of the estimated weights, which vanishes when the additive regression fits the cell means exactly. Scaled penalties add a third channel, the sampling variation of the counterfactual denominator, which \citet{kleven2025child} treats as nonstochastic. $J$ includes the derivative with respect to the denominator, so $V$ accounts for this variation too.

A cluster bootstrap that resamples clusters and repeats the full estimation has the same first-order limiting distribution under the conditions of Proposition~\ref{prop:cell-clt}, provided that bootstrap samples also select the limiting set of cells with conditional probability tending to one.\footnote{Consistency of bootstrap SEs additionally requires the uniform-integrability condition in Remark~\ref{rem:bootstrap} in Appendix~\ref{app:proof-cell-clt}. This condition can fail when the counterfactual denominator of a scaled penalty is close to zero in some bootstrap samples.} The analytical standard error (SE) I derive is practically useful since it avoids the repeated estimation required by a bootstrap.

\section{Extensions}\label{sec:extensions}

The baseline design matches on discrete cells, constructs pre-birth outcomes from childless donors while post-birth outcomes come from whichever parents appear at each event time, and extrapolates age/year effects through an additive regression specification. This section develops three extensions, each addressing one of these limitations. The first is a doubly robust imputation that admits richer covariates. The second matches parents observed at birth both backward and forward and adds synthetic type fixed effects. The third is a fully nonparametric estimator that accommodates heterogeneous treatment effects.

\subsection{Doubly Robust Estimation}\label{sec:dr}

The pseudo-panel estimator can be biased by the parent--donor gap $\Delta_g$ because of unobserved heterogeneity. Here I develop a doubly robust (DR) imputation that accommodates richer, possibly continuous, covariates. It does not weaken conditional independence. Rather, it makes the assumption more credible by conditioning on a richer covariate vector. I view the imputation as a two-sample problem because pre-birth outcomes of parents are imputed using a sample of childless individuals observed at an earlier date, whose covariate distribution may differ from that of the parents.\footnote{\cite{santanna2026composition} develop doubly robust DiD estimators for repeated cross-sections whose covariate composition changes over time. Treatment-group membership is observed before treatment in their setting.} Fix a gender $g$, a first-birth cohort $f$ and a pre-birth event time $\tau < 0$, so the donor year is $t=f+\tau$. Let $S_i = 1$ indicate membership in the target sample of cohort-$f$ parents observed at birth, and $S_j = 0$ membership in the donor sample of childless individuals observed in year $t$. The fixed cohort and donor year are suppressed in the notation.

Let $Z_i=(X_i, V_i)$ denote a \emph{rich} covariate vector that includes both the discrete variables $X_i$ and additional covariates $V_i$, which may include finely discrete variables and continuous variables. Conditioning on $V_i$ is useful when it is correlated with the unobserved heterogeneity. The DR framework adjusts for $V_i$ through models. The goal is to produce an improved pre-birth imputation $\hat{Y}^{DR}_{i,\tau}$ that replaces the donor-cell mean $\bar{Y}^C_{(X_i, F_i+\tau)}$. The DR-imputed pseudo-panel is then fed into the same event-study regression \eqref{eq:event-study}.

For DR estimation, I maintain Assumptions~\ref{assu:NA} and~\ref{assu:CI} with the richer pre-treatment vector $Z = (X, V)$ in place of $X$. Neither restriction at this finer conditioning level follows automatically from its original version conditional on $X$.\footnote{Conditional mean independence is not monotone in the conditioning set, and mean anticipation effects that vanish within an $X$ cell need not vanish within its $Z$ subcells.} An overlap condition is also required.
\begin{assumption}[Overlap]\label{assu:overlap}
For all $z$ in the support of $Z_i \mid S_i = 1$:
\begin{equation}\label{eq:overlap}
0 < \Pr(S_i = 1 \mid Z_i = z, \, g) < 1.
\end{equation}
\end{assumption}
Assumption~\ref{assu:overlap} requires that for every covariate profile $z$ observed among parents, there exist childless donors with similar profiles. Weak overlap is commonly addressed by trimming.\footnote{Trimming can exclude parents observed at birth, whose counts determine the pre-birth cell weights. It can therefore change the population represented by the estimate.}

The imputation combines two nuisance models. The first models the conditional mean outcome among childless donors as a function of the covariates:
\begin{equation}\label{eq:outcome-model-dr}
m_g(z) = \E[Y_i \mid S_i = 0, \, Z_i = z, \, g] \ \textit{(Outcome model, OM)}. 
\end{equation}
The second models the probability that an individual belongs to the parent sample, conditional on the covariates:
\begin{equation}\label{eq:propensity}
e_g(z) = \Pr(S_i = 1 \mid Z_i = z, \, g)  \ \textit{(Propensity score, PS)}.
\end{equation}
Both models are directly estimable. The outcome model is estimated by regressing outcomes on $Z$ within the donor sample. The propensity score is estimated by a binary choice model (e.g., probit or logit) for $S_i$ on $Z$ in the pooled sample of parents and donors. Intuitively, the propensity score measures how ``parent-like'' a covariate profile is. Donors resembling parents (high $e_g(z)$) receive larger weights.

For each parent $i$ at pre-birth event time $\tau < 0$, the DR imputation is:
\begin{equation}\label{eq:dr-estimator}
\hat{Y}^{DR}_{i,\tau}
=
\hat{m}_g(Z_i)
+
\frac{\sum_{j \in \mathcal{D}} \hat{w}_g(Z_j) \big(Y_{jt} - \hat{m}_g(Z_j)\big)}{\sum_{j \in \mathcal{D}} \hat{w}_g(Z_j)}, \quad \hat{w}_g(Z_j) = \frac{\hat{e}_g(Z_j)}{1 - \hat{e}_g(Z_j)},
\end{equation}
where $\mathcal{D}$ denotes the donor pool in the donor year, and $\hat{w}_g(Z_j)$ is the inverse probability weight, an estimate of the odds $w_g(z) = e_g(z)/(1-e_g(z))$. The weights are normalized to sum to one. The first term predicts the untreated outcome at parent $i$'s covariates using the outcome model. The second term corrects for misspecification of $m_g$ by adding a weighted average of donor residuals.\footnote{The weighted residual correction follows the augmented inverse-probability-weighting approach of \citet{robins1994estimation}. \citet{santanna2020doubly} use this principle to construct DR estimators of the ATT in panel and repeated-cross-section DiD settings.} When $V_i = \emptyset$ and $m_g$ is saturated in covariates, the expression reduces to the donor-cell mean $\bar{Y}^C_{(X_i, F_i+\tau)}$, recovering the baseline pseudo-OLS. The average imputation over parents is doubly robust as it remains consistent if either the outcome model or the propensity score is correctly specified.

The DR estimator of the mean pre-birth outcome also has an efficiency property when both nuisance models are correct. I state it for an independent two-sample setting, separating the precision of one imputed mean from inference for the subsequent event-study regression.

\begin{prop}[Double Robustness and Efficiency]\label{prop:dr}
Fix $g$, $f$ and $\tau$ as above, and let $\mu^0_{g,\tau}(z) := \E[Y_{i,f+\tau}(\infty) \mid Z_i = z, S_i = 1, g]$ and $\mu^0_{g,\tau} := \E[\mu^0_{g,\tau}(Z_i) \mid S_i = 1, g]$ be the parents' conditional and mean untreated pre-birth outcomes. Write $E_s$ and $\operatorname{Var}_s$ for expectation and variance conditional on $S=s$, and $P^Z_s$ for the distribution of $Z$ given $S=s$. Suppose Assumptions~\ref{assu:NA} and~\ref{assu:CI} hold with $Z$ in place of $X$, together with Assumption~\ref{assu:overlap}. Let the estimated nuisance functions have probability limits $m^\dagger_g$ and $w^\dagger_g$ with $E_0[w^\dagger_g]>0$ and finite second moments, and suppose that sample averages of the estimated functions and of the weighted residuals converge to their population limits. Write $\hat\mu^{DR}_{g,\tau}=n_1^{-1}\sum_{i:S_i=1}\hat Y^{DR}_{i,\tau}$ for the average imputation over the $n_1$ parents. Then:
\begin{enumerate}
\item[\textup{(a)}] If the outcome model is correctly specified, so that $m^\dagger_g=m_g$ and $\hat{m}_g(Z_i)-m_g(Z_i) \to_p 0$ for each parent $i$, and the weights are uniformly bounded, then $\hat{Y}^{DR}_{i,\tau} \to_p \mu^0_{g,\tau}(Z_i)$ for each parent $i$, and consequently $\hat\mu^{DR}_{g,\tau} \to_p \mu^0_{g,\tau}$.
\item[\textup{(b)}] If $w^\dagger_g \propto w_g$, as when the propensity score is correctly specified, then $\hat\mu^{DR}_{g,\tau} \to_p \mu^0_{g,\tau}$.
\item[\textup{(c)}] Suppose the parents' covariates $\{Z_i: S_i=1\}$ and the donor observations $\{(Y_j,Z_j): S_j=0\}$ are two mutually independent i.i.d.\ samples with sizes $n_1,n_0$, where $n_s/N\to\lambda_s\in(0,1)$ and $N=n_1+n_0$. Under correctly specified, smooth finite-dimensional nuisance models and the regularity conditions stated in Appendix~\ref{app:proof-dr}, the estimator $\hat\mu^{DR}_{g,\tau}$ is semiparametrically efficient for $\mu^0_{g,\tau}$ in the unrestricted two-sample observed-data model. Writing $\rho_g=dP_1^Z/dP_0^Z$, its limiting variance is
\begin{equation}\label{eq:dr-eff-bound}
\begin{aligned}
\sqrt N(\hat\mu^{DR}_{g,\tau}-\mu^0_{g,\tau})
&\xrightarrow{d}\mathcal N(0,V^{\mathrm{eff}}_{g,\tau}),\\
V^{\mathrm{eff}}_{g,\tau}
&=\frac{\operatorname{Var}_1[m_g(Z)]}{\lambda_1}
+\frac{E_0[\rho_g(Z)^2\operatorname{Var}_0(Y\mid Z)]}{\lambda_0}.
\end{aligned}
\end{equation}
\end{enumerate}
\end{prop}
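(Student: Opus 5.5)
The argument starts by reducing the target to an observable quantity. Under Assumptions~\ref{assu:NA} and~\ref{assu:CI} with $Z$ in place of $X$, the parents' conditional untreated pre-birth mean equals the donor regression:
\[
\mu^0_{g,\tau}(z)=\E[Y_{f+\tau}(\infty)\mid F=f,Z=z]=\E[Y_{f+\tau}(\infty)\mid F>f+\tau,Z=z]=m_g(z).
\]
The first equality is the definition of the untreated mean for cohort $f$. The second uses CI with $f'$ ranging over all cohorts later than $f+\tau$, averaged by their shares. Donors are childless, so their realized outcome equals $Y(\infty)$. Overlap (Assumption~\ref{assu:overlap}) ensures $m_g$ is defined on the parents' support. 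Consequently $\mu^0_{g,\tau}=E_1[m_g(Z)]$, which is the representation used in all three parts.

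For (a) and (b), I will write the average imputation as
\[
\hat\mu^{DR}=n_1^{-1}\sum_{S_i=1}\hat m(Z_i)+\frac{n_0^{-1}\sum_{S_j=0}\hat w(Z_j)(Y_j-\hat m(Z_j))}{n_0^{-1}\sum_{S_j=0}\hat w(Z_j)},
\]
and use the assumed convergence of sample averages to obtain the limit
\[
E_1[m^\dagger(Z)]+\frac{E_0[w^\dagger(Z)(m(Z)-m^\dagger(Z))]}{E_0[w^\dagger]}.
\]
Here I iterate expectations over $Y\mid Z$ in the donor sample.

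In case (a), $m^\dagger=m$, so the correction term vanishes. The pointwise claim also holds, because bounded weights make the weighted residual average converge to zero.

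In case (b), $w^\dagger=c\,w_g$. Bayes' rule gives $w_g(z)=\frac{\Pr(S=1)}{\Pr(S=0)}\rho_g(z)$, so $E_0[w^\dagger h]/E_0[w^\dagger]=E_0[\rho_g h]=E_1[h]$. The correction term is then $E_1[m-m^\dagger]$, and adding the first term gives $E_1[m]=\mu^0_{g,\tau}$.

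For (c), I proceed in two steps: derive the influence function of the estimator, then show it equals the efficient influence function. With correctly specified smooth parametric nuisances, I linearize by a standard M-estimation expansion. The DR structure makes the derivative of the limit with respect to the nuisance parameters vanish at the truth, since the limit functional is locally insensitive to both $m$ and $w$. Estimation of $\hat m$ and $\hat e$ therefore contributes nothing at first order. This leaves
\[
\sqrt N(\hat\mu-\mu^0)=\frac{\sqrt N}{n_1}\sum_{S=1}(m(Z_i)-\mu^0)+\frac{\sqrt N}{n_0}\sum_{S=0}\rho_g(Z_j)(Y_j-m(Z_j))+o_p(1).
\]
The two samples are independent, so applying the CLT to each sum gives the stated variance.

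For efficiency, the plan is to compute the efficient influence function in the nonparametric two-sample model. The parameter is $\mu=\int m\,dP^Z_1$, with data $P^Z_1$ from sample 1 and $P_{Y,Z\mid 0}$ from sample 0. I perturb each component along a score and differentiate. Perturbing $P^Z_1$ gives the gradient $m(Z)-\mu$. Perturbing $P_{Y\mid Z,0}$ gives $\rho_g(Z)(Y-m(Z))$, using $\int \rho_g\,\E_0[(Y-m)s\mid Z]\,dP_0^Z$. Perturbing $P^Z_0$ gives zero. Scaling each piece by $1/\lambda_s$ in the two-sample tangent space yields the bound $V^{\mathrm{eff}}$, and it matches the estimator's influence function.

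The main obstacle is justifying the negligibility of nuisance estimation. This needs the regularity conditions in Appendix~\ref{app:proof-dr} (differentiability, dominated derivatives, bounded $\rho_g$ under overlap) together with the cross-sample structure, in which $\hat e$ is fitted on pooled data while the weights are normalized within donors. I would handle this by a direct Taylor expansion in the finite-dimensional parameters, showing each derivative has expectation zero at the truth.
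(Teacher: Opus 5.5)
Your proposal is correct and follows essentially the same route as the paper's proof: the same probability limit of the average imputation and the same Bayes-rule identity for (a)--(b), and for (c) the same Taylor expansion in the finite-dimensional nuisance parameters (derivatives with vanishing population limits, hence $O_p(N^{-1/2})$, times root-$N$ nuisance errors), followed by the same pathwise-derivative computation of the canonical gradient $(m-\mu,\ \rho(Y-m))$. One small correction: $\E[Y_{f+\tau}\mid S=0,Z]=\E[Y_{f+\tau}(\infty)\mid S=0,Z]$ does not hold because donors are childless---the donor pool contains future parents whose realized outcome is $Y_{f+\tau}(F)$---but because conditional no anticipation at $Z$, averaged over the later cohorts, equates their means, which is exactly where Assumption~\ref{assu:NA} enters.
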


\noindent In part~(c), the two terms of \eqref{eq:dr-eff-bound} reflect uncertainty about the parents' covariate distribution and about donor outcomes given those covariates. Under the correctly specified parametric models and regularity conditions assumed there, nuisance estimation has no first-order effect on the estimated pre-birth mean. The estimator therefore attains the efficiency bound.

To implement the DR estimator while controlling for the cell variables $X$ nonparametrically, I apply the correction within donor cells. For a donor cell $c=(X,t)$, let $\mathcal D_c$ contain its childless donors. The additional covariates $V$ vary within $c$. Both nuisance models are indexed by $c$, so a parent's covariates are compared with donors observed in the donor year. Write $m_{g,c}(v)=\E[Y\mid S=0,c,V=v,g]$ and let $w_{g,c}(v)$ be proportional to the within-cell density ratio $dP(V\mid S=1,c,g)/dP(V\mid S=0,c,g)$. The within-cell imputation is
\begin{equation}\label{eq:cell-dr}
\hat Y^{DR}_{i,\tau}=\hat m_{g,c}(V_i)+
\frac{\sum_{j\in\mathcal D_c}\hat w_{g,c}(V_j)\{Y_j-\hat m_{g,c}(V_j)\}}
     {\sum_{j\in\mathcal D_c}\hat w_{g,c}(V_j)},
\end{equation}
which gives larger weights to donors whose covariate profiles are more typical of the parents in the cell. Proposition~\ref{prop:dr} applies cell by cell, with $Z=(X,V)$ and overlap within each cell.

DR replaces the pre-birth donor-cell means with imputed outcomes but leaves the event-study regression unchanged. The cell-level regression still applies, because regressors and weights remain constant within regression cells even though DR imputations differ across parents. If every imputed cell mean is consistent, the coefficients converge to the oracle projection of Section~\ref{sec:consistency}. DR imputations depend on the estimated outcome and propensity models. Valid inference must account jointly for their estimation errors and the sampling variation in cell means, counts and the counterfactual means used to scale the penalty. 

For finite-dimensional nuisance models, let $\zeta$ collect the cell moments, OM and PS parameters, and nuisance-dependent sums entering the child penalty or scaled penalty. Now write the estimator as $G_{DR}(\hat\zeta)$ in place of $G(\hat\eta)$ in Proposition~\ref{prop:cell-clt}. Let $\Psi_\ell(\zeta)$ collect cluster $\ell$'s contributions to the estimating equations for the outcome and propensity models, cell moments, and sums. The following proposition derives the asymptotic distribution of the DR penalty estimator and a consistent estimator of its variance.

\begin{prop}[Parametric DR Inference]\label{prop:dr-inference}
Under Assumption~\ref{assu:reg} and the regularity conditions in Appendix~\ref{app:proof-dr-inference}, let $\hat\zeta\to_p\zeta^*$ satisfy
\[
N^{-1}\sum_\ell\Psi_\ell(\hat\zeta)=o_p(N^{-1/2}),
\qquad E[\Psi_\ell(\zeta^*)]=0.
\]
Assume that $\zeta$ is finite dimensional, $A=E[\partial_\zeta\Psi_\ell(\zeta^*)]$ is nonsingular, and $G_{DR}$ is continuously differentiable in a neighborhood of $\zeta^*$. Then
\[
\sqrt N\{G_{DR}(\hat\zeta)-G_{DR}(\zeta^*)\}\ \xrightarrow{d}\ N(0,V_{DR}),\qquad
V_{DR}=E[\psi^{DR}_\ell\psi^{DR}_\ell{}'],
\]
where $\psi^{DR}_\ell=-DG_{DR}(\zeta^*)A^{-1}\Psi_\ell(\zeta^*)$. Define
\[
\hat A=N^{-1}\sum_{\ell=1}^N
\partial_\zeta\Psi_\ell(\hat\zeta),
\qquad
\hat\psi^{DR}_\ell
=-DG_{DR}(\hat\zeta)\hat A^{-1}\Psi_\ell(\hat\zeta).
\]
Let $\widehat{\operatorname{Var}}\{G_{DR}(\hat\zeta)\}$ be
the covariance estimator \eqref{eq:analytical-block-cov}
computed using $\hat\psi^{DR}_\ell$. Then
\[
N\,\widehat{\operatorname{Var}}\{G_{DR}(\hat\zeta)\}
\xrightarrow{p}V_{DR}.
\]
\end{prop}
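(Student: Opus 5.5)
The argument is the standard Z-estimator expansion followed by the delta method, applied to the stacked system of estimating equations.

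\emph{Expansion of $\hat\zeta$.} First linearize the stacked moment condition around $\zeta^*$. By the mean value theorem applied row by row, there is a $\bar\zeta$ between $\hat\zeta$ and $\zeta^*$ with
\[
N^{-1}\sum_\ell\Psi_\ell(\hat\zeta)=N^{-1}\sum_\ell\Psi_\ell(\zeta^*)+\Big[N^{-1}\sum_\ell\partial_\zeta\Psi_\ell(\bar\zeta)\Big](\hat\zeta-\zeta^*).
\]
The regularity conditions of Appendix~\ref{app:proof-dr-inference} should supply continuity of $\partial_\zeta\Psi_\ell$ near $\zeta^*$ and a dominating envelope. A uniform law of large numbers over a neighborhood, together with $\hat\zeta\to_p\zeta^*$, then gives $N^{-1}\sum_\ell\partial_\zeta\Psi_\ell(\bar\zeta)\to_p A$. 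Since $A$ is nonsingular, this sample Jacobian is invertible with probability tending to one. Combining this with the approximate-zero condition yields
\[
\sqrt N(\hat\zeta-\zeta^*)=-A^{-1}N^{-1/2}\sum_\ell\Psi_\ell(\zeta^*)+o_p(1).
\]
The clusters are i.i.d.\ by Assumption~\ref{assu:reg}(i), $E[\Psi_\ell(\zeta^*)]=0$, and finite second moments are assumed. The multivariate Lindeberg--L\'evy CLT therefore gives $N^{-1/2}\sum_\ell\Psi_\ell(\zeta^*)\to_d N(0,E[\Psi\Psi'])$.

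\emph{Delta method.} Apply the delta method to $G_{DR}$, which is continuously differentiable near $\zeta^*$:
\[
\sqrt N\{G_{DR}(\hat\zeta)-G_{DR}(\zeta^*)\}=N^{-1/2}\sum_\ell\psi^{DR}_\ell+o_p(1).
\]
The summands $\psi^{DR}_\ell$ are i.i.d.\ with mean zero, so the CLT gives the stated limit with $V_{DR}=E[\psi^{DR}_\ell\psi^{DR}_\ell{}']$. Cell moments enter $\Psi$ as the linear equations $H_\ell-\eta$, so the nonsmooth ratio structure of Proposition~\ref{prop:cell-clt} is absorbed into $G_{DR}$. Assumption~\ref{assu:reg}(iv) lets us work on the event that the limiting set of cells is selected; this event has probability approaching one, so it does not affect limits in probability or distribution.

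\emph{Variance consistency.} Write
\[
N\widehat{\operatorname{Var}}=\tfrac{N}{N-1}\Big\{N^{-1}\sum_\ell\hat\psi_\ell\hat\psi_\ell'-\bar\psi\bar\psi'\Big\}.
\]
By the continuous mapping theorem, $\hat A\to_p A$ (from the same uniform LLN) and $DG_{DR}(\hat\zeta)\to_p DG_{DR}(\zeta^*)$. Define $\hat B=-DG_{DR}(\hat\zeta)\hat A^{-1}$, so that $\hat\psi_\ell=\hat B\Psi_\ell(\hat\zeta)$. Then
\[
N^{-1}\sum_\ell\hat\psi_\ell\hat\psi_\ell'=\hat B\Big[N^{-1}\sum_\ell\Psi_\ell(\hat\zeta)\Psi_\ell(\hat\zeta)'\Big]\hat B'.
\]

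The main obstacle is showing $N^{-1}\sum_\ell\Psi_\ell(\hat\zeta)\Psi_\ell(\hat\zeta)'\to_p E[\Psi\Psi']$. The plan here is to assume an envelope condition: $\Psi_\ell(\zeta)\Psi_\ell(\zeta)'$ is continuous in $\zeta$ and dominated by an integrable function on a neighborhood of $\zeta^*$. A uniform LLN then applies, and convergence of the evaluation at $\hat\zeta$ follows. Alternatively, bound
\[
\|\Psi_\ell(\hat\zeta)-\Psi_\ell(\zeta^*)\|\le M_\ell\|\hat\zeta-\zeta^*\|
\]
with $E[M_\ell^2]<\infty$ and use Cauchy--Schwarz on the cross terms. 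For the centering term, $\bar\psi=\hat B\,N^{-1}\sum_\ell\Psi_\ell(\hat\zeta)=o_p(N^{-1/2})$, so $\bar\psi\bar\psi'$ is negligible. Finally, $N/(N-1)\to1$, which gives $N\widehat{\operatorname{Var}}\to_p V_{DR}$.
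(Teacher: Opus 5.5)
Your proposal is correct and follows the paper's own argument. You expand the stacked estimating equations around $\zeta^*$ using the derivative uniform LLN and the nonsingularity of $A$, then apply the cluster CLT and the delta method for $G_{DR}$, and you get variance consistency from the square-integrable envelopes on $\Psi_\ell$ and $\partial_\zeta\Psi_\ell$, which give exactly your Lipschitz bound with $E[M_\ell^2]<\infty$. Your explicit handling of the centering term $\bar\psi=o_p(N^{-1/2})$, the factor $N/(N-1)$, and the cell-selection event adds detail that the paper's terse proof leaves implicit.
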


\noindent The proposition centers the DR penalty estimate $G_{DR}(\hat\zeta)$ at its pseudo-true value $G_{DR}(\zeta^*)$ without assuming that both models are correct. The variance estimator accounts for estimation of both nuisance models, since these estimation effects need not vanish when only one model is correctly specified.

\begin{remark}[Cross-fitting]\label{rem:crossfit}
The nuisance functions can be estimated by machine learning with Neyman orthogonalization following \citet{chernozhukov2018double}, without imposing parametric restrictions. This procedure requires cross-fitting. Clusters are divided into folds, and the estimating equations for each fold are evaluated using nuisance models estimated on the other folds.
\end{remark}

\subsection{Symmetric Matching and Synthetic Type FE}\label{sec:symmetric}

The baseline pseudo-panel construction is asymmetric. Pre-birth outcomes are imputed for the parents observed at birth, whereas post-birth outcomes come from whichever parents are observed at each event time, so the composition of parents can change with event time. I consider an alternative construction that matches the parents observed at birth symmetrically in both directions.\footnote{This construction is suggested in footnote~13 of \citet{kleven2025child}.} For pre-birth outcomes, parents observed at first birth are matched to childless donors, as in the baseline. For post-birth outcomes, they are instead matched to parents with the same first-birth year and observed characteristics. This symmetric construction produces a synthetic history for each parent type, defined by gender, first-birth year, age at first birth and matching characteristics. At each event time, each type is weighted by the number of its parents observed at birth, limiting changes in composition across event times. These synthetic histories also allow type fixed effects to absorb persistent differences in outcome levels across types.

The symmetric construction uses fewer post-birth parents than the baseline. A post-birth parent contributes to the pseudo-panel only if a parent of the same type is observed at birth. For example, if the parents observed at birth are drawn only from 2011 onward, symmetric matching cannot use post-birth outcomes of parents who gave birth before 2011, even when those outcomes are observed after 2011. The baseline construction still uses these cohorts at later event times. In the repeated-cross-section applications, I therefore include parents observed at birth from the first survey year, even if their full pre-birth history cannot be reconstructed. For these parents, pre-birth outcomes are imputed only for event times with available donors.

For gender $g$, define the synthetic type as $\theta=(F_\theta,\tilde a_\theta,X_\theta)$, containing first-birth year, exact age at first birth and the matching characteristics. Let $N^P_{\theta,g,0}$ be the number of its parents observed at birth. For each available event time, form the outcome
\[
\bar Y^{B,g}_{\theta,\tau}=
\begin{cases}
\bar Y^C_{(X_\theta,\,F_\theta+\tau)},&\tau<0,\\
\bar Y^P_\theta(\tau),&\tau\ge0.
\end{cases}
\]
The first line uses the same backward mapping as the baseline, and the second averages observed parents of that type. Each type’s mean outcome at each available event time is weighted by $W^g_{\theta,\tau}=N^P_{\theta,g,0}$, the number of parents observed at birth.\footnote{By the WLS equivalence of Section~\ref{sec:estimation}, this cell regression reproduces the regression on individual-level imputations in which each parent observed at birth carries the synthetic history of his or her type.} These weights replace changing post-birth parent counts in the baseline pseudo-panel. Let $\Theta^g_\tau$ denote the set of types included at event time $\tau$. The realized type share is then $N^P_{\theta,g,0}\mathbf1\{\theta\in\Theta^g_\tau\}/
\sum_{\kappa\in\Theta^g_\tau}N^P_{\kappa,g,0}.$
The share is constant only when the same types are included at every event time.

When the composition of observed types changes, persistent differences in their untreated outcomes can appear as changes caused by childbirth. Type fixed effects absorb these differences. Including type fixed effects, however, requires an additional restriction to identify the event-time coefficients. Because $t=F_\theta+\tau$, the type, year and event-time effects are linearly dependent. With only the base-period (e.g., $\tau^* = -2$) coefficient normalized to 0, adding $c(\tau+2)$ to every event-time coefficient, subtracting $ct$ from the year effects, and adding $c(F_\theta-2)$ to the type effects leaves all cell means estimated by the regression unchanged for any constant $c$. Footnote~12 of \citet{kleven2025child} raises this concern for individual fixed effects and argues that additivity makes separate identification feasible. It does not, as \citet[Section~2.3]{thakral2026child} establish for that case.

To identify the event-time coefficients, I set two pre-birth coefficients to zero. For instance, one can impose $\beta^g_{-3}=\beta^g_{-2}=0$ and estimate
\begin{equation}\label{eq:anchored-synth-fe}
\bar Y^{B,g}_{\theta,\tau}
=\alpha^g_\theta+
\sum_{k\notin\{-3,-2\}}\beta^{g}_k\mathbf1\{\tau=k\}
+\phi^g_{\tilde a_\theta+\tau}+\psi^g_{F_\theta+\tau}+u^g_{\theta,\tau},
\end{equation}
using the weights $W^g_{\theta,\tau}$. The two zero restrictions determine the event profile's level and linear trend. They require childbirth to have no effect at the restricted base periods.\footnote{Since each type fixes the individual's year of birth, $t-a=F_\theta-\tilde a_\theta$. For any constant $d$, adding $da$ to the age effects, subtracting $dt$ from the year effects and adding $d(F_\theta-\tilde a_\theta)$ to the type effects leaves both the event-time coefficients and the cell means estimated by the regression unchanged. Hence the two zero restrictions do not remove this dependence. The remaining collinearity is handled by choosing a linearly independent set of age, year and type indicators, without imposing further restrictions on the event-time coefficients.} The following conditions connect the constructed outcomes to population cell means and ensure a well-defined regression. Let $\mathcal R^{S,g} =\{(\theta,\tau):\tau\in\mathcal T,\ \theta\in\Theta^g_\tau\}$. For $r=(\theta,\tau)\in\mathcal R^{S,g}$, write $a_r=\tilde a_\theta+\tau$ and $t_r=F_\theta+\tau$.

\begin{assumption}[Representativeness of Observed Parents]\label{assu:sym-repr}
For every $r\in\mathcal R^{S,g}$ with $\tau\ge0$,
\[
\E[Y_{t_r} \mid r,\; \text{observed},\; g]
=
\E[Y_{t_r}(F_\theta) \mid r,\; g].
\]
\end{assumption}

\begin{assumption}[Symmetric Matching Regularity]\label{assu:sym-reg}
For every $r\in\mathcal R^{S,g}$, Assumption~\ref{assu:lln} holds and $N^P_{\theta,g,0}/N\to_p p^g_\theta>0$ for every included type, where $N$ is the number of clusters. For each regression considered, impose its zero restrictions on event-time coefficients and drop redundant age, year and type indicators. Each omitted indicator must be a linear combination of the retained fixed-effect indicators. Let $\mathbf d_r^{S,g}$ denote the resulting regressor vector. Assume that
\[
\frac{
\sum_{r\in\mathcal R^{S,g}}N^P_{\theta(r),g,0}\,
\mathbf d_r^{S,g}(\mathbf d_r^{S,g})'
}{
\sum_{r\in\mathcal R^{S,g}}N^P_{\theta(r),g,0}
}
\xrightarrow{p}Q^{S,g},
\qquad Q^{S,g}\succ0.
\]
\end{assumption}

\begin{prop}[Oracle Projection]\label{prop:sym-fe-oracle}
Let $\mu^{S,g,*}_r$ be the untreated pre-birth or treated post-birth population mean of $r$, and let Assumptions~\ref{assu:NA}--\ref{assu:CI} and~\ref{assu:sym-repr}--\ref{assu:sym-reg} hold. The event-time coefficient estimates and the cell means estimated by regression~\eqref{eq:anchored-synth-fe} converge to those of the weighted population projection
\begin{equation}\label{eq:anchored-synth-proj}
\min_{\alpha,\beta,\phi,\psi}\sum_{r\in\mathcal R^{S,g}}p^g_{\theta(r)}
\left(\mu^{S,g,*}_r-\alpha_{\theta(r)}
-\sum_{k\notin\{-3,-2\}}\beta_k\mathbf1\{\tau_r=k\}
-\phi_{a_r}-\psi_{t_r}\right)^2,
\end{equation}
which are unique even though the nuisance effects are not.
\end{prop}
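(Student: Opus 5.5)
The plan is to reduce the feasible weighted regression to a finite-dimensional continuous mapping of cell means and weights, and then pass to the limit. Fix $g$ and write the regression \eqref{eq:anchored-synth-fe}, after imposing the two zero restrictions and dropping redundant indicators as in Assumption~\ref{assu:sym-reg}, as the weighted least squares problem with design rows $\mathbf d^{S,g}_r$, responses $\bar Y^{B,g}_r$ and weights $W^g_{\theta(r),\tau}=N^P_{\theta(r),g,0}$. The solution is
\[
\hat{\mathbf b}^{S,g}=\hat Q^{-1}\hat q,\qquad \hat Q=\sum_r \hat p_r\mathbf d_r\mathbf d_r',\quad \hat q=\sum_r \hat p_r\mathbf d_r\bar Y^{B,g}_r,\quad \hat p_r=\frac{N^P_{\theta(r),g,0}}{\sum_{r'}N^P_{\theta(r'),g,0}} .
\]
The set $\mathcal R^{S,g}$ is fixed and finite, so consistency follows from the continuous mapping theorem once each $\hat p_r$ and each $\bar Y^{B,g}_r$ has the right limit.

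\emph{Weights.} By Assumption~\ref{assu:sym-reg}, $N^P_{\theta,g,0}/N\to_p p^g_\theta>0$. Hence $\hat p_r\to_p p^g_{\theta(r)}/\sum_{r'}p^g_{\theta(r')}$. This normalizing constant does not affect the minimizer of \eqref{eq:anchored-synth-proj}. Assumption~\ref{assu:sym-reg} also gives $\hat Q\to_p Q^{S,g}\succ0$, so $\hat Q^{-1}$ exists with probability tending to one and converges.

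\emph{Responses.} For pre-birth cells ($\tau<0$), $\bar Y^{B,g}_r$ is a donor-cell mean. Assumption~\ref{assu:lln}(i) gives $\bar Y^{B,g}_r\to_p\E[Y_{t_r}\mid X_\theta,F>t_r,g]$. By \eqref{eq:CI-cell}, which uses Assumptions~\ref{assu:NA}--\ref{assu:CI}, this limit equals $\E[Y_{t_r}(\infty)\mid F=F_\theta,X=X_\theta,g]$, the untreated mean $\mu^{S,g,*}_r$. The age at birth $\tilde a_\theta$ is pinned down by $(X_\theta,F_\theta)$, because $X$ contains year of birth, so conditioning on $\theta$ is the same as conditioning on $(F_\theta,X_\theta)$. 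For post-birth cells, the law of large numbers gives convergence to $\E[Y_{t_r}\mid r,\text{observed},g]$, and Assumption~\ref{assu:sym-repr} identifies this with $\E[Y_{t_r}(F_\theta)\mid r,g]=\mu^{S,g,*}_r$. Therefore $\hat{\mathbf b}^{S,g}\to_p (Q^{S,g})^{-1}\sum_r p_r\mathbf d_r\mu^{S,g,*}_r$, the solution of the projection \eqref{eq:anchored-synth-proj} in the retained parametrization. The fitted cell means $\mathbf d_r'\hat{\mathbf b}^{S,g}$ converge along with it.

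\emph{Uniqueness.} This step is the delicate part. I would argue that the column space of the full (unreduced) dummy design equals that of the retained design, since each omitted indicator is a linear combination of retained ones. The weighted projection of $\mu^*$ onto a fixed column space is unique because all $p_r>0$, so the fitted cell means are unique. The event-time coefficients are then unique because, under the two zero restrictions, the only remaining null directions of the full design are the age--year--type dependence in the footnote above. Those directions leave every $\beta_k$ unchanged, so any two minimizers differ only in the nuisance effects $\alpha,\phi,\psi$. Nonsingularity of $Q^{S,g}$ is what guarantees that the event-time dummies are not in the span of the fixed-effect indicators.

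The main obstacle is this last step. One must verify that the reduced coordinates recover the $\beta_k$ of every minimizer of \eqref{eq:anchored-synth-proj}, not merely of one normalization. I would handle it by showing that any null vector of the full weighted design with zero event-time components maps to zero under the reduction, and that positive definiteness of $Q^{S,g}$ rules out null vectors with nonzero event-time components.
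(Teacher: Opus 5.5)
Your proposal is correct and follows essentially the same route as the paper's proof: the cell means converge to $\mu^{S,g,*}$ by the law of large numbers together with conditional no anticipation, conditional independence and representativeness; the reduced weighted regression then converges by the rank condition and continuous mapping; and uniqueness follows because the retained indicators span the same column space. Your uniqueness step is slightly more explicit than the paper's. Your claim that the footnote's age--year--type dependence is the only remaining null direction overstates things, since the full dummy design also has the usual level normalizations among type, age and year effects. Your final argument, that full column rank of the reduced design forces every null vector of the full design to have zero event-time components, covers those directions as well.
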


As for the baseline regression, the coefficients recover causal effects if (i) each type's untreated mean is the sum of a type-specific constant and common age and year effects; and (ii) the treatment effect at each event time is the same across types. When treatment effects differ across types, part of that variation can be absorbed by the age and year effects. An event-time coefficient can then contain effects from different event times with weights that may be negative, so it need not equal an average effect at that event time. Appendix~\ref{app:bidirectional-comparisons} shows how the regression weights treatment effects and illustrates why its coefficients can differ from average effects, with or without type FE.

\subsection{A Fully Nonparametric Cohort-Time Estimator}\label{sec:np}

So far I have focused on alternative constructions of the pseudo-panel on which an additive regression is run. I now propose a nonparametric estimator that relies neither on backward or forward matching nor on the regression. Instead of imputing a parent's pre-birth outcome, it estimates parents' mean untreated current outcome from childless donors observed in the same year with the same characteristics, under a parallel-trends assumption as follows.
\begin{assumption}[Conditional Parallel Trends, CPT]\label{assu:CPT}
For all $s < f \le t$, and all $x$ in the common support of $X$ given $F=f$ and given $F>t$:
\begin{equation}\label{eq:CPT}
\E[Y_t(\infty) - Y_s(\infty) \mid F = f, \, X = x] = \E[Y_t(\infty) - Y_s(\infty) \mid F > t, \, X = x].
\end{equation}
\end{assumption}
Assumption~\ref{assu:CPT} is neither weaker nor stronger than Assumption~\ref{assu:PT} (PT). Equal cohort averages can hide within-cell differences that offset across cells. Conversely, equal trends within cells need not aggregate to equal cohort averages because each cohort differs in the distribution of $X$ and untreated trends vary with characteristics.\footnote{Write $\Delta_f(x)=\E[Y_t(\infty)-Y_s(\infty)\mid F=f,X=x]$ and $\Delta_{>t}(x)=\E[Y_t(\infty)-Y_s(\infty)\mid F>t,X=x]$. Assumption~\ref{assu:CPT} sets $\Delta_f(x)=\Delta_{>t}(x)$ for every $x$, but the cohort-level trends are $\sum_x\Pr(X=x\mid F=f)\Delta_f(x)$ and $\sum_x\Pr(X=x\mid F>t)\Delta_{>t}(x)$, which can differ when $\Pr(X=x\mid F=f)\neq\Pr(X=x\mid F>t)$ and $\Delta_f(x)$ varies with $x$.} Conditioning on covariates is the standard way to make parallel trends more credible when treated and control units differ in characteristics that predict their untreated trends \citep{heckman1997matching,santanna2020doubly,callaway2021difference}.

\begin{prop}[Conditional ATT]\label{prop:conditional ATT}
	Let Assumptions~\ref{assu:NA}, \ref{assu:CI}, and \ref{assu:CPT} hold. For every cohort $f\ge2$, post-birth date $t\ge f$ and $x$ in the common support, the conditional cohort-time ATT is identified by a within-period comparison:
	\begin{equation}\label{eq:np-identification}
		\begin{aligned}
\text{ATT}_x(f,t) &:= \E[Y_t(f) - Y_t(\infty) \mid F = f, X=x] \\
&= \E[Y_t \mid F=f, X=x] - \E[Y_t \mid F > t, X=x].
\end{aligned}
	\end{equation}
\end{prop}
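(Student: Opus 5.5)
The plan is to start from the right-hand side of \eqref{eq:np-identification} and rewrite each observed mean as a potential-outcome mean, then bridge the two groups with a pre-birth date. For the first term, parents with $F=f$ observed at $t\ge f$ have realized outcome $Y_t=Y_t(f)$, so $\E[Y_t\mid F=f,X=x]=\E[Y_t(f)\mid F=f,X=x]$. For the second term, those with $F>t$ are untreated at $t$, so $\E[Y_t\mid F>t,X=x]=\E[Y_t(\infty)\mid F>t,X=x]$. Subtracting $\E[Y_t(\infty)\mid F=f,X=x]$ from both, the claim reduces to
\[
\E[Y_t(\infty)\mid F=f,X=x]=\E[Y_t(\infty)\mid F>t,X=x].
\]

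To prove this, I will fix a pre-birth date $s=f-1<f\le t$, which exists because $f\ge 2$. By Assumption~\ref{assu:CPT},
\[
\E[Y_t(\infty)\mid F=f,X=x]-\E[Y_t(\infty)\mid F>t,X=x]
=\E[Y_s(\infty)\mid F=f,X=x]-\E[Y_s(\infty)\mid F>t,X=x].
\]
It then remains to show that the date-$s$ untreated means agree across the two groups. Assumption~\ref{assu:CI} equates $\E[Y_s(\infty)\mid F=f',X=x]$ across all cohorts $f'>s$, including $\infty$. The group $\{F>t\}$ is a mixture of the cohorts $f'\in\{t+1,\dots,T\}\cup\{\infty\}$, each of which exceeds $s$. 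By the law of iterated expectations, $\E[Y_s(\infty)\mid F>t,X=x]$ is a convex combination of these cohort means, weighted by $\Pr(F=f'\mid F>t,X=x)$. Each of these equals $\E[Y_s(\infty)\mid F=f,X=x]$, since $f>s$ as well. So the difference is zero.

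Assumption~\ref{assu:NA} is not needed for the chain above, because it works directly with $Y_s(\infty)$. It enters only if one wants to link the untreated means to observable pre-birth outcomes, as in \eqref{eq:CI-cell}. I will note this and invoke it only for that interpretation. The main point of care is the mixture step: the common-support requirement in Assumption~\ref{assu:CI} has to hold for each constituent cohort in $\{F>t\}$ at $x$. I will handle this by restricting the mixture to cohorts with positive conditional mass at $x$. The standing assumption that the conditional expectations are well defined on the relevant support ensures those cohorts lie in the common support with cohort $f$.
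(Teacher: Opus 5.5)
Your route is the paper's: write the parents' mean as $\E[Y_t(f)\mid F=f,X=x]$, apply Assumption~\ref{assu:CPT} with base date $s=f-1$, and use Assumption~\ref{assu:CI} to equate the date-$(f-1)$ untreated means of cohort $f$ and of $\{F>t\}$. Your explicit mixture over the cohorts $f'>t$ is more careful than the paper's one-line appeal to CI. The gap is the step you treat as definitional, $\E[Y_t\mid F>t,X=x]=\E[Y_t(\infty)\mid F>t,X=x]$, together with your claim that Assumption~\ref{assu:NA} is not needed.

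In this framework the realized outcome is $Y_t=Y_t(F)$. A future parent with $t<F=f'<\infty$ therefore has $Y_t=Y_t(f')$, not $Y_t(\infty)$. Being ``untreated at $t$'' does not by itself equate the observed outcome with the never-treated potential outcome; that holds automatically only for $f'=\infty$. Assumptions~\ref{assu:CI} and~\ref{assu:CPT} restrict only $Y(\infty)$ and say nothing about anticipation. Without NA, your argument shows that the right-hand side of \eqref{eq:np-identification} equals
\[
\text{ATT}_x(f,t)-\sum_{t<f'<\infty}\Pr(F=f'\mid F>t,X=x)\,\E[Y_t(f')-Y_t(\infty)\mid F=f',X=x],
\]
so the proposition fails whenever the comparison cohorts have nonzero mean anticipation effects at $t$.

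The fix is the paper's final step. Decompose $\E[Y_t\mid F>t,X=x]$ over the cohorts $f'>t$, including $\infty$, with weights $\Pr(F=f'\mid F>t,X=x)$. Apply Assumption~\ref{assu:NA} to each finite $f'$; this is valid since $t<f'$ and $x$ lies in the support of $X\mid F=f'$ whenever the weight is positive. Then reaggregate. NA is thus used on the comparison group at the post-birth date $t$, not merely to interpret pre-birth donor means.
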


The nonparametric estimator aggregates parent--donor differences across cohorts and types with the numbers of parents observed at birth as weights, as in symmetric matching and in the spirit of \citet{callaway2021difference}. Let $\theta = (X, \tilde{a})$ be a type defined by the discrete matching variables $X$ and exact age at first birth $\tilde{a}$. For cohort $f$, type $\theta$, and event time $\tau$, the current age is $a = \tilde{a} + \tau$ and the calendar year is $t = f + \tau$. The type-specific post-birth effect is
\[
\widehat{\text{ATT}}^g(f,\tau \mid \theta)
=
\bar{Y}^{P,g}_{f,\tau}(\theta) - \bar{Y}^{C,g}_{t}(X_\theta, \tilde{a}_\theta + \tau),
\qquad t = f + \tau,
\]
where $\bar{Y}^{P,g}_{f,\tau}(\theta)$ is the mean outcome of cohort-$f$ parents of gender $g$ and type $\theta$ observed at event time $\tau$, and $\bar{Y}^{C,g}_{t}(X_\theta, \tilde{a}_\theta + \tau)$ is the mean outcome of childless donors in the corresponding current-age/current-year cell. For a fixed set $\mathcal S_g$ of cohort--type pairs, common to all reported post-birth event times, define
\[
\hat q^g(f,\theta)=\frac{N^{P,g}_{f,\theta,0}}
{\sum_{(f',\theta')\in\mathcal S_g}N^{P,g}_{f',\theta',0}},
\qquad (f,\theta)\in\mathcal S_g,
\]
where $N^{P,g}_{f,\theta,0}$ is the number of cohort-$f$ parents of gender $g$ and type $\theta$ observed at birth. The aggregate is
\[
\widehat{\mathrm{ATT}}^g(\tau)=\sum_{(f,\theta)\in\mathcal S_g}
 \hat q^g(f,\theta)\widehat{\mathrm{ATT}}^g(f,\tau\mid\theta).
\]
The scaled effect divides $\widehat{\mathrm{ATT}}^g(\tau)$ by the average of the donor means $\bar{Y}^{C,g}_{t}(X_\theta, \tilde{a}_\theta + \tau)$ under the same weights, which estimates the counterfactual level, and the scaled penalty is the difference between the scaled effects of men and women. When included types change over event time, I average the effects of the included types within each cohort, weighting by the numbers of parents observed at birth. I then average the included cohort effects using each cohort’s total number of parents observed at birth, including those from types unavailable at that event time (Appendix~\ref{app:analytical-implementation}). Proposition~\ref{prop:cell-clt} applies to the nonparametric estimator and its scaled penalty for inference because both are smooth functions of the cluster sums and counts.

This estimator has advantages over the standard pre-birth imputation approach. First, as its donors are contemporaneous, the estimator needs no pre-birth years for imputation. Parents observed in the first few years of the survey remain usable, which enlarges the sample.\footnote{For parents observed with first birth in the first survey year, date $f-1$ precedes the data. Proposition~\ref{prop:conditional ATT} then requires Assumptions~\ref{assu:NA}, \ref{assu:CI} and~\ref{assu:CPT} to hold from that date.} Second, it takes differences within cells and therefore does not extrapolate age and year effects through an additive regression. Third, it allows treatment effects to differ across cohorts and types. It estimates the effect of each cohort and type separately and averages them using positive weights constructed from the counts of parents observed at birth and normalized to sum to one. Hence it avoids the negative-weights problem in the baseline regression approach. 

However, using contemporaneous donors also means that, at later event times, parents are compared with older childless individuals, who are more likely to be never-parents. These donors may have different untreated outcomes from parents. Assumptions~\ref{assu:CI} and~\ref{assu:CPT} may therefore be harder to justify at later event times. The pseudo-OLS regression instead uses donors at pre-birth ages, where the pool still contains many future parents, at the cost of imposing an additive specification.

\section{Monte Carlo Simulation}\label{sec:simulation}

Parents and childless individuals may differ in ways that matching variables cannot capture. The composition of parents observed can also change after birth, and treatment effects may differ across cohorts. I examine these problems by simulating individual fertility and outcome histories, including proxies for the unobserved characteristics that drive selection. I use these individual histories to compare panel and pseudo-panel estimates, examine when synthetic type FE help, assess the coverage of conventional, analytical and bootstrap confidence intervals, and measure computation time for individual-level pseudo-OLS and cell-level WLS. Appendix~\ref{app:sim-coverage-all} reports additional inference results for the remaining estimators.

\subsection{Benchmark Design and Selection}\label{sec:sim-design}

I generate Monte Carlo samples of $N=50{,}000$ people for $20$ years, $t=0,\ldots,19$. Gender and $X_i\sim\operatorname{Bernoulli}(0.5)$ are independently drawn, initial age is uniform on $\{22,\ldots,30\}$, and the latent individual component is $\alpha_i\sim N(0,1)$. Two observed continuous proxies are $V_{1i}=0.7\alpha_i+\eta_{1i}$ and $V_{2i}=0.5\alpha_i+\eta_{2i}$,
with independent $\eta_{1i}\sim N(0,0.5^2)$ and $\eta_{2i}\sim N(0,0.4^2)$. Untreated potential outcomes are computed by:
\[
Y^0_{it}=1.2+0.065a_{it}-0.0007a_{it}^2+0.16X_i
 +0.018t+\zeta_t+0.70\alpha_i+\varepsilon_{it},
\]
where $\zeta_t\sim N(0,0.04^2)$ is a common calendar shock and $\varepsilon_{it}\sim N(0,0.25^2)$ is individual noise. The continuous proxies predict outcomes through their correlation with $\alpha_i$. For $t=2,\ldots,18$, the first-birth hazard is
\[
\Pr(F_i=t\mid F_i\geq t,X_i,\alpha_i,g_i)
 =\operatorname{logit}^{-1}\{-2.1+0.18X_i-0.016(a_{it}-30)^2+\lambda_{g_i}\alpha_i\}.
\]
The female first-birth effect is $\pi_w(\tau)=-0.30(1-e^{-0.55(\tau+1)})$ and the male effect is $\pi_m(\tau)=0.02(1-e^{-0.45(\tau+1)})$ for $\tau\geq0$. Both are zero before birth. The positive male-minus-female penalty averages $0.28$ over event times 0 to 10. I call this the benchmark design, and use variations of the design for further simulations.

I use the event-time window $[-5,10]$. The panel and pseudo-panel regressions use first-birth cohorts $F_i=5,\ldots,9$, whose full event window falls inside the observation period. The nonparametric estimator (NP) retains the additional cohorts $F_i=2,3,4$ as it does not require pre-birth imputations. I match parents to donors based on gender, $X$, age and calendar year. I use the available cells at each event time, without requiring every type to be observed throughout the event window. Panel OLS uses parents’ observed outcome histories to estimate a regression with event-time, age and year indicators, omitting the indicator for $\tau=-2$. Panel FE additionally includes an individual fixed effect and normalizes the coefficients at $-3$ and $-2$ to zero. Pseudo-OLS, DR, and DR with cross-fitting (DR-CF) regress the constructed pseudo-panel outcomes on event-time, age and year indicators, omitting the indicator for $\tau=-2$. NP uses the contemporaneous donor comparison.

DR uses both continuous proxies. Within each gender, covariate cell, and pre-birth event time, its donor outcome model is
\begin{equation}\label{eq:sim-outcome}
m_c(V)=\alpha_c+b_1V_1+b_2V_2+b_3V_1^2+b_4V_1V_2,
\end{equation}
and its propensity model includes cell intercepts:
\begin{equation}\label{eq:sim-propensity}
\Pr(S=1\mid c,V)=\operatorname{logit}^{-1}(\gamma_c+\gamma_1V_1+\gamma_2V_2).
\end{equation}
DR-CF divides individuals into two folds, estimates both models on each fold and evaluates them using the other fold. All observations of an individual are assigned to the same fold, regardless of whether they appear as a parent or a donor.

I generate 500 replications in each of three scenarios: no selection with $(\lambda_m,\lambda_w)=(0,0)$, symmetric selection with $(0.25,0.25)$, and asymmetric selection with $(0.25,0.15)$. Under symmetric selection the donor bias is the same for both genders, so the biases cancel in the gender difference. Under asymmetric selection, positive selection into parenthood is stronger for men, so the bias does not cancel in the difference. 

\begin{table}[htbp]
\centering
\setlength{\belowcaptionskip}{6pt}
\caption{Selection scenarios and estimation error}\label{tab:mc-summary}
\footnotesize
\setlength{\tabcolsep}{3pt}\renewcommand{\arraystretch}{1.05}
\begin{tabular}{lrrrrrrrrrrr}
\toprule
 & \multicolumn{3}{c}{Selection gap} & \multicolumn{6}{c}{RMSE} & \multicolumn{2}{c}{Pseudo-OLS bias} \\
\cmidrule(lr){2-4}\cmidrule(lr){5-10}\cmidrule(lr){11-12}
Scenario & Men & Women & M$-$W & \shortstack{Panel\\OLS} & \shortstack{Panel\\FE} & \shortstack{Pseudo\\OLS} & NP & DR & DR-CF & $\tau=0$ & $\tau=5$ \\
\midrule
No selection & 0.000 & 0.000 & 0.000 & 0.053 & 0.042 & 0.045 & 0.011 & 0.051 & 0.051 & 0.001 & 0.002 \\
Symmetric & 0.370 & 0.371 & -0.001 & 0.051 & 0.044 & 0.045 & 0.011 & 0.050 & 0.050 & -0.001 & -0.002 \\
Asymmetric & 0.370 & 0.227 & 0.143 & 0.071 & 0.042 & 0.108 & 0.094 & 0.078 & 0.078 & 0.065 & 0.101 \\
\bottomrule
\end{tabular}
\begin{minipage}{\linewidth}\vspace{0.35em}\footnotesize\textit{Notes:} 500 replications with $N=50{,}000$. The selection gap is the difference in the mean of $\alpha_i$ between parents and childless individuals. RMSE is the root mean squared error of the unscaled penalty over event times 0--10, computed within each replication and averaged over replications. The last two columns report the bias of pseudo-OLS at event times 0 and 5.\end{minipage}
\end{table}

\begin{figure}[p!]
\centering
\includegraphics[width=\textwidth]{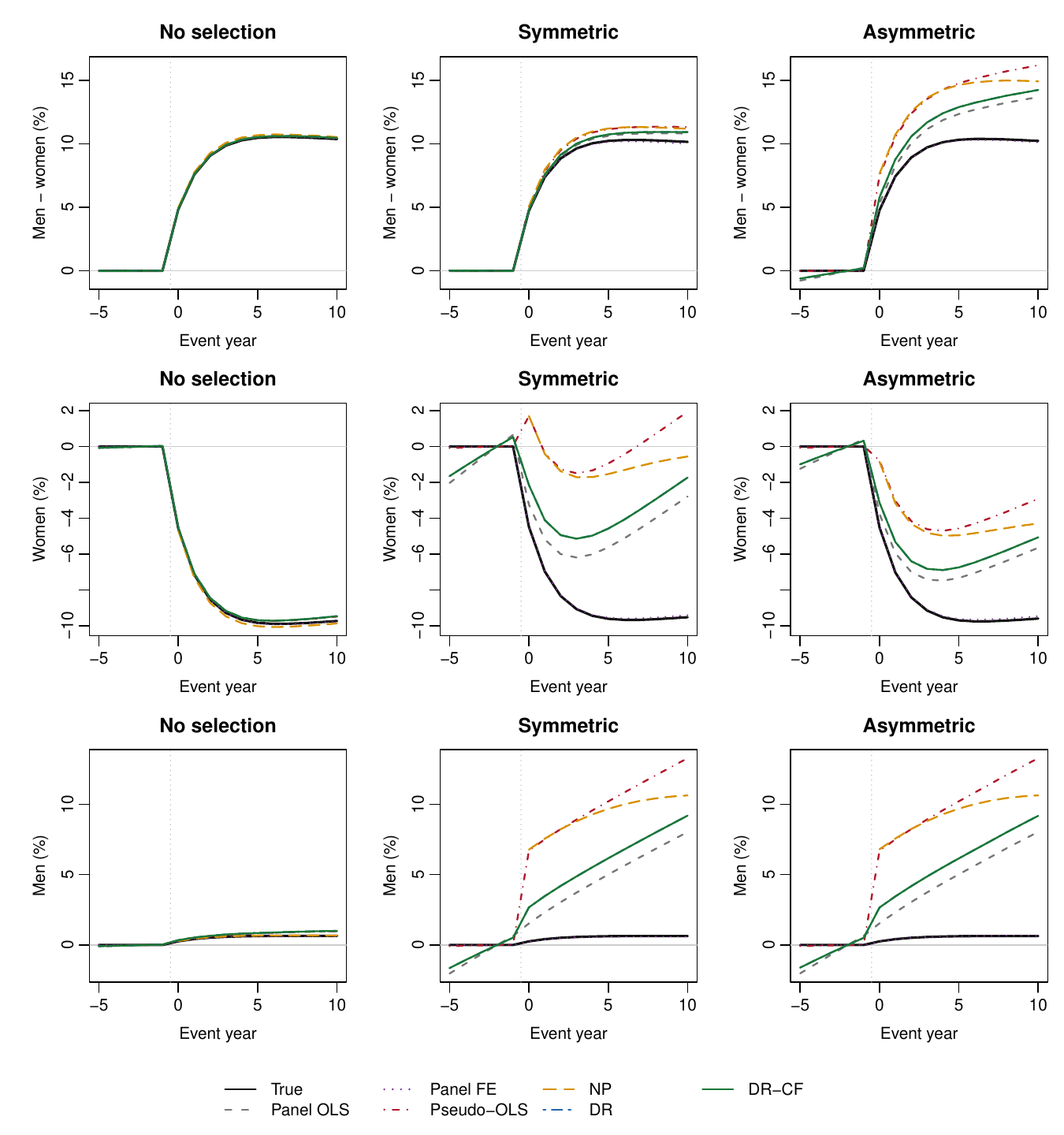}
\caption{Scaled event-time paths in the simulations. Rows show the men-minus-women penalty, women and men. NP has no pre-birth path and is shown only after birth.}
\label{fig:sim-profiles}
\end{figure}

For each scenario, Table~\ref{tab:mc-summary} reports the selection gap (the difference in mean $\alpha_i$ between parents and individuals who remain childless throughout the observation period) by gender and the average root mean squared error (RMSE) of each estimator across the eleven post-birth event times, and over replications. Without selection, or with symmetric selection, every estimator is unbiased and they differ only in precision. Under asymmetric selection, pseudo-OLS acquires a bias, and DR lowers the RMSE by more than a quarter.\footnote{In a version of the asymmetric design with 30,000 people, the RMSE of DR rises from 0.086 to 0.111, the value for pseudo-OLS in that design, as the proxy loadings fall from $(0.7,0.5)$ to zero, so the gain from DR depends on how informative the covariates are about $\alpha_i$, and uninformative covariates do no harm in this design. Cross-fitting gives the same results.} Panel OLS is also biased in the same direction, and its bias grows with event time. The mean of $\alpha_i$ differs across first-birth cohorts. At a given calendar year, parents who had their first child earlier are observed at later event times. Comparisons across event times therefore also involve comparisons across first-birth cohorts with different average $\alpha_i$. Panel FE removes this bias and has the lowest RMSE among the regression estimators in every scenario.

Figure~\ref{fig:sim-profiles} reports the scaled penalties and the separate effects for women and men. Under symmetric selection, pseudo-OLS shifts both women's and men's estimated effects upward, understating women's loss and overstating men's gain. The biases cancel in the unscaled gender difference, but the scaled penalty remains above the true value because the counterfactual means used as denominators are underestimated.

\subsection{Type Fixed Effects and Heterogeneous Treatment Effects}\label{sec:sim-type-fe}\label{app:sim-fixed-cell}

I first examine the role of type FE when untreated outcome levels differ across types. I use the benchmark design without selection. I add a type-specific constant, \(\lambda h(B_i)\), to women's untreated outcomes. Here \(B_i\) is the individual's year of birth. I use a quadratic function so that age and year fixed effects alone cannot absorb these differences across types. The function is standardized to have mean zero and variance one, so \(\lambda\) measures the standard deviation of the added constant across types.\footnote{With initial ages uniform on $\{22,\ldots,30\}$, the years of birth are $B_i\in\{-30,\ldots,-22\}$ and $h(B_i)=\{(B_i+26)^2-20/3\}/\sqrt{308/9}$. I use $\lambda\in\{0,0.2,0.4,0.6\}$.} The level is identical for a parent and her matched donors, who share her age and calendar year, so the donor comparison remains valid and the effect on women is unchanged. I use the same 500 simulated samples for every value of $\lambda$.

\begin{table}[htbp]
\centering
\caption{Persistent type levels and type fixed effects}\label{tab:fe-type-benefit}

\small\setlength{\tabcolsep}{6pt}\renewcommand{\arraystretch}{1.08}
\begin{tabular}{@{}lrrrr@{}}
\toprule
 & \multicolumn{2}{c}{Pseudo-OLS / Sym} & \multicolumn{2}{c}{Sym+FE} \\
\cmidrule(lr){2-3}\cmidrule(lr){4-5}
Type-level SD ($\lambda$) & Bias & RMSE & Bias & RMSE \\
\midrule
0.0 & 0.003 & 0.050 & 0.002 & 0.046 \\
0.2 & 0.012 & 0.052 & 0.002 & 0.046 \\
0.4 & 0.022 & 0.059 & 0.002 & 0.046 \\
0.6 & 0.032 & 0.068 & 0.002 & 0.046 \\
\bottomrule
\end{tabular}
\begin{minipage}{\linewidth}\vspace{0.35em}\footnotesize\textit{Notes:} 500 replications with $N=50{,}000$. Unscaled 0--10 average with true value 0.28. Sym+FE fixes $-3,-2$ and the other regressions omit $-2$.\end{minipage}
\end{table}

Table~\ref{tab:fe-type-benefit} reports the bias and RMSE of the average penalty for increasing values of $\lambda$. The bias of pseudo-OLS grows with $\lambda$ and reaches 0.032 at $\lambda=0.6$, because additive age and year effects cannot absorb the type-specific level.\footnote{Symmetric matching without FE (Sym) gives the same estimates as pseudo-OLS here, because the parents observed at birth are observed at every later event time and no compositional change occurs.} These biases arise even when every parent is observed at every event time. I also shorten the observation period by discarding the last four survey years, so that only earlier first-birth cohorts remain observable at later event times. At \(\lambda=0.6\), the bias of pseudo-OLS and Sym increases from 0.032 to 0.040. 

Sym+FE absorbs the added levels exactly through type fixed effects, so its penalty estimates do not change with $\lambda$. Adding type fixed effects requires estimating a separate intercept for each type and identifying event-time effects from changes within types. This can reduce precision when few observations are available for each type. However, in this design, the type FE cost no precision. With or without added levels, the RMSE of Sym+FE is lower than that of pseudo-OLS. The 90 types per gender have about 78 parents observed at birth on average.

I next allow treatment effects to differ across first-birth cohorts, using the benchmark design without selection and with no added type levels ($\lambda=0$). I multiply women's post-birth effects by 0.5, 0.75, 1, 1.25 and 1.5 for cohorts $F=5,\ldots,9$, respectively, and leave men's effects unchanged. All estimators, including NP, use these same cohorts. Every parent is observed at every event time.

\begin{table}[htbp]
\centering
\caption{Matching regressions and cohort heterogeneity}\label{tab:bidirectional-mc}
\small\setlength{\tabcolsep}{4pt}\renewcommand{\arraystretch}{1.08}
\begin{tabular}{@{}lrrrrrr@{}}
\toprule
 & \multicolumn{2}{c}{Pseudo-OLS / Sym} & \multicolumn{2}{c}{Sym+FE} & \multicolumn{2}{c}{NP} \\
\cmidrule(lr){2-3}\cmidrule(lr){4-5}\cmidrule(lr){6-7}
Design & Bias & RMSE & Bias & RMSE & Bias & RMSE \\
\midrule
Common effect & 0.003 & 0.050 & 0.002 & 0.046 & -0.001 & 0.015 \\
Heterogeneous effects & -0.313 & 0.317 & -0.299 & 0.303 & -0.001 & 0.015 \\
\bottomrule
\end{tabular}
\begin{minipage}{\linewidth}\vspace{0.35em}\footnotesize\textit{Notes:} 500 replications with $N=50{,}000$. All estimators use first-birth cohorts 5--9. Errors are measured against their common ATT, averaged over event times 0--10 in unscaled units: 0.280 with common effects and 0.257 with heterogeneous effects. Sym+FE fixes $-3,-2$; pseudo-OLS and Sym omit $-2$.\end{minipage}
\end{table}

As reported in Table~\ref{tab:bidirectional-mc}, all estimators have little bias under common effects. With heterogeneous effects, the biases of pseudo-OLS and Sym are $-0.313$ and that of Sym+FE is $-0.299$, against an average true penalty of 0.257. The estimated penalties are therefore negative even though the true penalty is positive in every cohort. Type FE cannot absorb differences in treatment effects that begin at birth. NP shows no bias in either case and a low RMSE of 0.015 with heterogeneous effects, which it is designed to accommodate. 

\subsection{Inference from Complete Histories}\label{sec:sim-coverage}

I examine whether confidence intervals using analytical inference attain their nominal coverage for pseudo-OLS and NP when complete histories are available. The data are generated from the benchmark design without the continuous proxies, with $N=10{,}000$ people and 500 replications under either no or symmetric selection. I compare three inference methods for pseudo-OLS: heteroskedasticity-robust standard errors from the individual-level pseudo-panel regression, the analytical inference of Proposition~\ref{prop:cell-clt}, and a cluster bootstrap that resamples complete individual histories with 199 bootstrap samples per replication. NP is assessed with the analytical and bootstrap standard errors. The 95\% confidence intervals are computed for the average unscaled penalty across event times from $0$ to $10$.

\begin{table}[htbp]
\centering
\caption{Coverage of 95\% confidence intervals for pseudo-OLS and NP}\label{tab:coverage}

\footnotesize\setlength{\tabcolsep}{3pt}\renewcommand{\arraystretch}{1.08}
\begin{tabular}{@{}llcccccc@{}}
\toprule
 &  & \multicolumn{3}{c}{No selection} & \multicolumn{3}{c}{Symmetric selection} \\
\cmidrule(lr){3-5}\cmidrule(lr){6-8}
Estimator & Variance estimate & SD & Mean SE & Cov. (\%) & SD & Mean SE & Cov. (\%) \\
\midrule
Pseudo-OLS & Robust SE & 0.112 & 0.039 & 51.0 & 0.115 & 0.039 & 47.4 \\
 & Analytical SE & 0.112 & 0.113 & 94.4 & 0.115 & 0.111 & 94.8 \\
 & Cluster bootstrap & 0.112 & 0.113 & 94.8 & 0.115 & 0.112 & 94.2 \\
\addlinespace
NP & Analytical SE & 0.025 & 0.027 & 96.6 & 0.027 & 0.026 & 94.4 \\
 & Cluster bootstrap & 0.025 & 0.027 & 96.4 & 0.027 & 0.026 & 95.0 \\
\bottomrule
\end{tabular}
\begin{minipage}{\linewidth}\vspace{0.35em}\footnotesize\textit{Notes:} 500 replications with $N=10{,}000$. The target is the unscaled penalty averaged over event times 0--10. SD is the standard deviation of the estimates across replications, Mean SE is the average estimated standard error, and Cov.\ is the coverage of the 95\% confidence interval. Robust SE is the heteroskedasticity-robust standard error of the individual-level regression. The cluster bootstrap resamples complete person histories with 199 bootstrap samples.\end{minipage}
\end{table}

Table~\ref{tab:coverage} reports the Monte Carlo dispersion of the estimate, the mean standard error from each method and the coverage of the resulting confidence intervals. Confidence intervals based on robust standard errors from the individual-level pseudo-panel regression cover the truth only about half of the time. Confidence intervals based on the analytical inference and the cluster bootstrap both attain nominal coverage, for NP as well as for pseudo-OLS, and they agree with each other. Coverage is unchanged under symmetric selection. Appendix~\ref{app:sim-coverage-all} applies the analytical inference to DR, Sym+FE and the scaled penalty. Coverage is between 94\% and 97\%, except for the scaled NP penalty under symmetric selection.

\subsection{Selective Response and Compositional Drift}\label{sec:sim-symmetric}

In the benchmark design without selection on $\alpha_i$, I allow parents in first-birth cohorts 2--9 not to respond to survey waves from time to time. At event times $\tau=1,\ldots,10$, their response probability is $\operatorname{logit}^{-1}\{3-0.5\tau+2X_i+\mathbf1(\tilde a_i>27)\}$. Conditional on $X_i$ and age at first birth, responses are independent across waves, so a parent can return after missing a wave. All outcomes at and before childbirth and all donor observations remain available.

I consider three scenarios. In S1, all parents respond in every wave, and treatment effects at each event time are the same across types within each gender. S2 introduces selective nonresponse and retains homogeneous effects. S3 additionally allows women’s treatment effects to depend on $X_i$ and age at first birth:
\[
\pi_w(\tau,X_i,\tilde a_i)=
 [-0.45+0.35X_i+0.18\cdot\mathbf1(\tilde a_i>27)](1-e^{-0.55(\tau+1)}).
\]
Parents with $X_i=1$ or an age at first birth above 27 are more likely to respond and have smaller penalties. Panel OLS, Panel FE, pseudo-OLS, Sym and Sym+FE all use first-birth cohorts 5–9. I estimate the two panel regressions using all available observations and, separately, only parents observed at every post-birth event time. For NP, I report estimates using cohorts 2–9 and the common cohorts 5–9.

\begin{table}[htbp]
\centering
\caption{Selective response and heterogeneous effects}\label{tab:symmetric-sim}

\small\setlength{\tabcolsep}{6pt}\renewcommand{\arraystretch}{1.08}
\begin{tabular}{@{}lrrrr@{}}
\toprule
 & \multicolumn{1}{c}{S1} & \multicolumn{1}{c}{S2} & \multicolumn{2}{c}{S3} \\
\cmidrule(lr){2-2}\cmidrule(lr){3-3}\cmidrule(lr){4-5}
Estimator & RMSE & RMSE & RMSE & Bias (in S3) \\
\midrule
Panel OLS & 0.061 & 0.061 & 0.063 & -0.010 \\
Panel OLS, balanced & 0.061 & 0.130 & 0.183 & -0.128 \\
Panel FE & 0.046 & 0.046 & 0.046 & 0.005 \\
Panel FE, balanced & 0.046 & 0.099 & 0.169 & -0.137 \\
\addlinespace
Pseudo-OLS & 0.051 & 0.051 & 0.054 & -0.012 \\
NP (cohorts 2--9) & 0.011 & 0.011 & 0.022 & 0.018 \\
NP (cohorts 5--9) & 0.015 & 0.015 & 0.015 & -0.001 \\
Sym & 0.051 & 0.052 & 0.056 & 0.019 \\
Sym+FE & 0.047 & 0.048 & 0.050 & 0.015 \\
\bottomrule
\end{tabular}
\begin{minipage}{\linewidth}\vspace{0.35em}\footnotesize\textit{Notes:} 500 replications with $N=50{,}000$. RMSE and bias of the unscaled penalty averaged over event times 0--10, relative to the full population of first-birth cohorts 5--9. S1 has full response and common effects, S2 selective response and common effects, and S3 selective response and heterogeneous effects. In S2 and S3, cohorts 2--9 follow the same response rule. The balanced versions keep only parents who respond at every post-birth event time. Panel FE and Sym+FE fix the coefficients at $-3$ and $-2$.\end{minipage}
\end{table}

Table~\ref{tab:symmetric-sim} reports RMSE and bias relative to the average penalty of the full parent population in cohorts 5--9. In S3, selective nonresponse lowers the average penalty among respondents by 0.027, but all estimators using the unbalanced data have absolute biases below 0.02. Pseudo-OLS has a lower RMSE than Panel OLS in all three scenarios. On the common cohorts, NP has an RMSE of 0.015 in all three scenarios and a bias of $-0.001$ in S3. Balancing the panel excludes more than three quarters of parents and disproportionately retains those with smaller penalties. Both panel regressions then underestimate the full-population penalty by about 0.13.

\subsection{Computational Comparison}\label{sec:sim-computation}

Lastly, I compare the regression time of individual-level pseudo-OLS and cell-level WLS as the sample size and the number of matching categories increase. I modify the benchmark design so that first-birth probabilities depend only on age. The matching covariate has $K$ equally sized categories. At a given sample size, I vary $K$ while keeping each individual's fertility and outcome histories fixed, so that only the matching partition changes.

\begin{table}[htbp]
\centering
\caption{Regression time of the individual-level and cell-level regressions}\label{tab:timing}

\footnotesize\setlength{\tabcolsep}{5pt}\renewcommand{\arraystretch}{1.08}
\begin{tabular}{@{}rrrrrrr@{}}
\toprule
 &  & \multicolumn{2}{c}{Regression rows} & \multicolumn{2}{c}{Seconds} & Time ratio \\
\cmidrule(lr){3-4}\cmidrule(lr){5-6}
People $N$ & $K$ & Individual & Cell & Individual & Cell &  \\
\midrule
\multicolumn{7}{l}{\emph{A. Increasing people, $K=4$}} \\
10,000 & 4 & 43,840 & 5,744 & 0.044 & 0.024 & 1.8 \\
25,000 & 4 & 108,240 & 5,760 & 0.086 & 0.024 & 3.6 \\
50,000 & 4 & 214,320 & 5,760 & 0.158 & 0.024 & 6.6 \\
100,000 & 4 & 430,224 & 5,760 & 0.297 & 0.024 & 12.4 \\
200,000 & 4 & 865,664 & 5,760 & 0.612 & 0.024 & 25.5 \\
\midrule
\multicolumn{7}{l}{\emph{B. Increasing matching categories, $N=50{,}000$}} \\
50,000 & 2 & 214,320 & 2,880 & 0.157 & 0.019 & 8.5 \\
50,000 & 4 & 214,320 & 5,760 & 0.158 & 0.024 & 6.6 \\
50,000 & 8 & 214,320 & 11,520 & 0.155 & 0.035 & 4.4 \\
50,000 & 16 & 214,320 & 22,976 & 0.153 & 0.056 & 2.7 \\
50,000 & 32 & 214,320 & 44,896 & 0.154 & 0.096 & 1.6 \\
50,000 & 64 & 214,320 & 80,400 & 0.157 & 0.166 & 0.9 \\
\bottomrule
\end{tabular}
\begin{minipage}{\linewidth}\vspace{0.35em}\footnotesize\textit{Notes:} Medians of six runs on one thread. $N$ is the number of people, and the data contain $20N$ person-year records. $K$ is the number of categories of the matching covariate. Times cover the regressions for both genders and the scaling, and exclude data preparation and inference. The time ratio is individual over cell.\end{minipage}
\end{table}

Table~\ref{tab:timing} reports the number of observations used in each regression and the computation time. Panel A increases the number of people from 10,000 to 200,000 at $K=4$. Panel B varies $K$ from 2 to 64 at $N=50{,}000$. Both regressions use the window $[-5,10]$, the base period $-2$, and the same parents and donors. The computational advantage of WLS grows with sample size. At a fixed $K$, the number of cells remains nearly constant, while the number of individual observations increases with the number of parents. Finer matching categories create more cells and reduce this advantage. At $K=64$, the two regressions take about the same time.

\section{Empirical Application: Child Penalties in Korea}\label{sec:empirical}

I apply the proposed estimators to three Korean survey datasets and estimate the child penalties for employment and earnings. Korea provides an interesting setting. It has the largest gender wage gap among OECD countries \citep{oecd2026korea}, and offers both a long-running household panel and several repeated cross-sectional surveys. Korea is also the country with the largest child penalty and the largest gap between pseudo-panel and panel estimates in the validation of \citet[Figure~3]{kleven2025atlas}. 

For the panel validation, I use the 2006--2024 waves of the Korean Labor and Income Panel Study (KLIPS), an annual household panel run by the Korea Labor Institute. It follows household members aged 15 and over, recording their employment and earnings over time. I also use two repeated cross-sectional surveys. The Household Income and Expenditure Survey (HIES), conducted by Statistics Korea, collects information on household income and spending. The Household Finance and Living Conditions Survey (HFLC) interviews around 20,000 households annually about their assets, debts and income, as well as the employment of household members. The samples cover 2006--2025 for HIES and 2012--2025 for HFLC.

\subsection{Implementation}\label{sec:empirical-data}

In KLIPS, the first-birth year comes from the respondent's direct report and, when there is none, from the earliest birth year of an own child in the household roster. I exclude respondents who report different first-birth years across waves and those whose roster child was born at least two years before the reported first birth.\footnote{Dating the excluded respondents from the roster changes either penalty by at most 0.09 percentage points. Excluding respondents who reported no previous birth although an own child was listed in the household roster in the same wave changes the employment and earnings penalties by at most 0.51 and 0.80 percentage points, respectively.} Employment equals one for respondents who worked in the week before the interview or were temporarily absent from a job. Event time is the number of years since childbirth at the interview. For matching, I subtract event time from the survey year and the parent's age to obtain cohort year and age at birth. Earnings are annual pretax labor income for the preceding calendar year, deflated to 2020 prices using the consumer price index. Event time for earnings is therefore computed based on income year. The earnings sample includes respondents with zero labor income. The donor pool includes observations before the assigned first-birth year. It also includes respondents without an assigned birth year whose most recent fertility response, as of the interview, indicates no previous birth. 

HIES and HFLC record household heads and spouses. I infer the first-birth year from the age of the oldest unmarried child in the household roster, and use respondents with no such child as donors.\footnote{The HIES roster includes only co-resident children, whereas HFLC also records children living apart for work or study. In HIES, I also exclude donors with reported nonresident unmarried children. Birth years inferred from children's ages in whole years can be off by one year.} HFLC provides annual records with household identifiers. For HIES employment, I use the annual file, whereas for earnings I use monthly average labor income from the quarterly files. This is because the annual files contain no individual labor income from 2017 onward.\footnote{Household serial numbers do not link the two sets of files. Re-estimating the pseudo-OLS and DR employment penalties on the quarterly files changes them by less than two percentage points.} 

As in the simulations, the event window is $[-5,10]$ with base period $-2$. I use two years before birth as the base period because employment may already change during pregnancy in the year before birth. I study first births at ages 20--45, use pre-birth donors aged 15--45, and follow parents up to age 55. Following \citet{kleven2025child}, I use parent observations from five years after each survey begins for pseudo-OLS and DR: 2011 in KLIPS and HIES, and 2017 in HFLC. In KLIPS, I also restrict first births to 2011 onward. The KLIPS panel regression applies the same first-birth cohort restriction, keeps all pre-birth observations, and requires at least one outcome before and one after birth.

For matching, I use two covariate resolutions. \textit{Coarse} cells are defined by gender, age, survey year and marital status or spouse presence, and \textit{fine} cells further condition on education in four broad categories.\footnote{The categories are below elementary school, elementary or middle school, high school or two-year college, and university or above in KLIPS, and below high school, high school, college or university, and graduate school in HIES and HFLC. Observations without a matching variable are dropped.} In KLIPS, parents observed at birth for both outcomes are interviewed less than one year after the first birth. Table~\ref{tab:data-summary} reports the survey years, the numbers of household interviews, parents observed at birth and donor records, and the share of those parents with donors one year before birth. Nearly all parents are matched with coarse cells, and adding education to the cell definition lowers the match rate by 1 to 5 percentage points.

\begin{table}[t]
\centering
\caption{Data and estimation samples}\label{tab:data-summary}
\footnotesize
\begin{tabular}{llrrrr}
\toprule
Survey & Years & \shortstack{Household\\interviews/year} & \shortstack{Parents at\\birth (W/M)} & Donors & \shortstack{Match rate\\(coarse / fine)} \\
\midrule
KLIPS & 2006--2024 & 4,999--12,132 & 970 / 972 & 45,539 & 96.1\% / 92.0\% \\
HIES & 2006--2025 & 8,947--13,529 & 1,413 / 1,352 & 33,888 & 99.2\% / 95.6\% \\
HFLC & 2012--2025 & 17,863--19,744 & 820 / 793 & 33,099 & 99.6\% / 98.5\% \\
\bottomrule
\end{tabular}
\begin{minipage}{\linewidth}\vspace{0.3em}\footnotesize\textit{Notes:} Employment samples for pseudo-OLS and DR. Parent counts (women/men) and donor counts are observations pooled across years using coarse cells. For Sym and Sym+FE, the parent counts are 2,154/2,068 in HIES and 1,896/1,850 in HFLC; KLIPS is unchanged. The match rate is the share of parents observed at birth with at least one donor one year before birth.\end{minipage}
\end{table}

I apply the DR correction in KLIPS using additional covariates including detailed education (the survey's original attainment code with nine levels), urban residence, province and the spouse age gap. I use detailed education in HIES, and detailed education, metropolitan residence and the spouse age gap in HFLC. The correction is applied within coarse cells, because the median number of donors per fine cell is only 4 to 11 (Appendix Table~\ref{tab:dr-overlap}). Province is excluded from the propensity model because some province categories contain only parents or only donors within matching cells. 

Symmetric matching with type fixed effects (Sym+FE) weights each type by its number of parents observed at birth. Parents observed after birth are included only if their type is also represented at birth. If I excluded parents observed at birth in the first five years, all later observations of those first-birth cohorts would also be excluded. Together with other types not represented at birth, this would remove 45\% of the post-birth employment observations used by pseudo-OLS in HIES and 69\% in HFLC. For symmetric matching in HIES and HFLC I therefore include parents observed at birth or after birth from the first survey year onward. I impute pre-birth outcomes whenever matching donors are available. With coarse cells, symmetric matching then has 93--97\% as many post-birth records as pseudo-OLS and includes 77--82\% of the post-birth records used by pseudo-OLS. NP compares parents with childless donors of the same age and matching characteristics in the same survey year. I therefore extend the donor age limit to 55. NP excludes first-birth cohorts with fewer than two parents observed at birth. In KLIPS, both Sym+FE and NP use first births from 2011 onward, matching the panel's cohort restriction.

I report both unscaled and scaled penalties, averaged over event times $0$--$10$. Standard errors are computed using the analytical variance formulas in Propositions~\ref{prop:cell-clt} and~\ref{prop:dr-inference}. I define a cluster for inference as a set of respondents connected by person or household identifiers across waves in KLIPS, a household in HFLC, and a household--year for employment or a household--quarter for earnings in HIES.

\subsection{Estimation Results}\label{sec:empirical-results}\label{sec:empirical-klips}\label{sec:empirical-rcs}

\begin{table}[t]
\centering
\caption{KLIPS child penalties: estimation window $[-5,10]$}\label{tab:klips}\label{tab:balanced-sensitivity}

\footnotesize\setlength{\tabcolsep}{4pt}\renewcommand{\arraystretch}{1.08}
\begin{tabular}{@{}lrrrr@{}}
\toprule
 & \multicolumn{2}{c}{Employment} & \multicolumn{2}{c}{Annual earnings} \\
\cmidrule(lr){2-3}\cmidrule(lr){4-5}
Estimator & Unscaled (SE) & Scaled (SE) & Unscaled (SE) & Scaled (SE) \\
\midrule
\multicolumn{5}{l}{\emph{Actual-panel estimators, first births from 2011}} \\
Panel OLS & 0.389 (0.036) & 46.9\% (2.94) & 2,014 (190) & 65.6\% (4.96) \\
Balanced [-5,10] & 0.797 (0.229) & 66.3\% (7.94) & 3,282 (1,880) & 126.2\% (133.61) \\
Balanced [-2,10] & 0.747 (0.196) & 70.1\% (10.67) & 3,149 (1,458) & 114.3\% (79.75) \\
Balanced [-2,5] & 0.407 (0.060) & 48.5\% (4.34) & 2,201 (354) & 72.3\% (9.90) \\
\addlinespace
\multicolumn{5}{l}{\emph{Pseudo-panel estimators}} \\
Pseudo-OLS (fine) & 0.315 (0.027) & 42.1\% (2.53) & 1,861 (158) & 64.7\% (4.26) \\
Pseudo-OLS (coarse) & 0.291 (0.027) & 40.1\% (2.62) & 1,756 (164) & 62.2\% (4.46) \\
DR (coarse) & 0.296 (0.027) & 40.5\% (2.60) & 1,834 (155) & 64.1\% (4.22) \\
Sym+FE & 0.298 (0.077) & 41.1\% (6.90) & 1,862 (405) & 65.4\% (11.91) \\
NP (fine) & 0.266 (0.023) & 37.9\% (2.62) & 1,122 (197) & 46.9\% (5.33) \\
NP (coarse) & 0.240 (0.026) & 35.4\% (2.97) & 970 (187) & 42.5\% (5.32) \\
\bottomrule
\end{tabular}
\begin{minipage}{\linewidth}\vspace{0.3em}\footnotesize\textit{Notes:} First births from 2011. Men-minus-women penalties averaged over event times 0--10, with base period $-2$. Sym+FE also fixes the coefficient at $-3$. Unscaled employment penalties are differences in employment rates, and unscaled earnings penalties are in 10,000 won at 2020 prices. Standard errors of the scaled penalties are in percentage points. Balanced $[a,b]$ keeps parents observed at or before $a$ and at or after $b$. The four panel rows use 870/827, 116/115, 195/208 and 455/440 women/men for employment and 944/903, 111/107, 201/200 and 490/471 for earnings.\end{minipage}
\end{table}

Table~\ref{tab:klips} reports the estimated penalties in KLIPS. Pseudo-OLS with fine cells gives penalties close to the panel estimates for both outcomes.\footnote{I also estimate the panel regression with individual FE, which performs well in the simulations. Here it is imprecise and gives penalties of 37.4\% (SE 9.84) for employment and 27.9\% (SE 11.60) for earnings. I normalize the coefficients at $-3$ and $-2$ to zero. The estimates are sensitive to these restrictions, and the panel shows strong pre-birth trends for both men and women.} Relative to fine-cell pseudo-OLS, pseudo-OLS with coarse cells gives slightly lower penalties, while DR and Sym+FE give similar penalties for both outcomes. Sym+FE has a standard error about three times as large as those of the other estimators.\footnote{With type FE, the restrictions at $-3$ and $-2$ identify the linear trend in the event-time coefficients. Sampling error in the donor means at these dates and in the numbers of parents observed at birth used as weights affects the post-birth coefficients.} These estimates lie within the panel's 95\% confidence intervals, except for the employment estimates from coarse-cell pseudo-OLS and DR. NP employment penalties are close to the other estimates, although the fine- and coarse-cell estimates fall 3.2 and 5.7 percentage points below the lower bound of the panel's confidence interval. For earnings, NP estimates substantially lower penalties.\footnote{With fine cells, pseudo-OLS estimates that fathers' earnings rise by 16.8\% after the first birth and NP estimates 4.5\%. Mothers' losses are estimated at 47.9\% and 42.3\%, respectively. The gap for fathers arises because childless men of the same age in the same year earn more than the counterfactual that the regression extrapolates from pre-birth donors.} As in the simulations, balancing the panel can sharply reduce the sample and substantially change the estimate. Requiring $[-2,10]$ leaves only around 200 parents of each gender from the earliest cohorts, raises the employment penalty to 70.1\% and makes the earnings penalty insignificant. The shorter window $[-2,5]$ keeps about half of the parents and stays close to the unbalanced estimate.

\begin{table}[t]
\centering
\caption{Repeated-cross-section child penalties: $[-5,10]$}\label{tab:rcs-results}
\footnotesize
\begin{tabular}{lcccccc}
\toprule
 & \multicolumn{2}{c}{HFLC employment} & \multicolumn{2}{c}{HIES employment} & \multicolumn{2}{c}{HIES earnings} \\
\cmidrule(lr){2-3}\cmidrule(lr){4-5}\cmidrule(lr){6-7}
Estimator & Scaled & SE & Scaled & SE & Scaled & SE \\
\midrule
Pseudo-OLS (fine) & 32.0\% & 2.1 & 33.5\% & 1.5 & 49.7\% & 1.3 \\
Pseudo-OLS (coarse) & 30.9\% & 2.0 & 33.3\% & 1.4 & 51.1\% & 1.3 \\
DR (coarse) & 32.3\% & 2.0 & 34.3\% & 1.4 & 51.1\% & 1.3 \\
Sym+FE & 37.5\% & 5.2 & 35.1\% & 6.5 & 59.8\% & 4.4 \\
NP (fine) & 26.7\% & 2.0 & 32.6\% & 1.3 & 52.6\% & 1.6 \\
NP (coarse) & 27.1\% & 1.8 & 32.4\% & 1.3 & 56.0\% & 1.6 \\
\bottomrule
\end{tabular}
\begin{minipage}{\linewidth}\vspace{0.3em}\footnotesize\textit{Notes:} Scaled penalties in percent, averaged over event times 0--10, with analytical standard errors in percentage points. Sym+FE fixes the coefficients at $-3$ and $-2$. Sym+FE and NP use parents from the first survey year.\end{minipage}
\end{table}

For the two repeated cross-sections, Table~\ref{tab:rcs-results} reports the penalties. Pseudo-OLS and DR estimates for employment penalties are 31--34\% in both surveys, below the KLIPS estimates, and move by less than two percentage points with the cell specification or the DR adjustment. HIES earnings penalties are about 50\%. NP is close to the other estimates for HIES employment, 4 to 5 percentage points lower in HFLC and up to 5 percentage points higher than pseudo-OLS for HIES earnings. Sym+FE is again imprecise, with standard errors about two and a half to four and a half times those of pseudo-OLS.\footnote{Relative to Sym on the same coarse cells, the employment SE increases by factors of 2.6 in KLIPS, 5.0 in HIES and 3.0 in HFLC. HIES and HFLC have larger samples of parents observed at birth and smaller absolute SEs.}

Figure~\ref{fig:klips} plots the scaled event-time coefficients for the three surveys. In the KLIPS panel, women's employment falls by almost 50\% in the birth year and recovers only slightly over the next decade, while men's employment rises slightly after birth. The pseudo-panel estimators trace similar post-birth paths, but after the first few years they generally lie above the panel for both genders.\footnote{Averaged over years 0--10, pseudo-OLS estimates women's loss at 41.1\% against 46.2\% in the panel and men's gain at 1.0\% against 0.7\%.} Both shifts are upward, so they partly offset in the gender difference. The repeated cross-sections show similar patterns. Women's outcomes drop by 37--67\% in the birth year and then recover partly, faster than in KLIPS. The estimators agree for women and differ for men. Sym+FE estimates a larger post-birth rise for men, up to 8\% for HIES employment and 18\% for HIES earnings, which is why its penalty estimates are larger than the others.

Pseudo-panel methods estimate different pre-birth behaviors from the actual panel. In the KLIPS panel, men's employment rises over the five years before birth, while women's falls, by 24\% in the year before birth alone. Among parents whose year of marriage is observed in KLIPS, 56\% have their first child in the year of marriage or the year after, so for most parents the years before birth are the years of marriage and pregnancy. Men's employment rises from 92\% to 95\% around marriage, regardless of when the first birth occurs. Women's employment falls near marriage mainly when they have their first birth soon, from 81\% to 40\% for women who give birth in the year of marriage or before and from 87\% to 81\% for women who give birth two or more years later (Appendix~\ref{app:marriage}). The pseudo-panel shows no such patterns. Its donors are matched on the parents’ marital status at birth and are therefore mostly married. With fine cells, women’s imputed employment rate two years before birth is 67\%, already lower than the 76\% observed in the panel. 

\begin{figure}[p]
\centering
\includegraphics[width=\linewidth,height=.83\textheight,keepaspectratio]{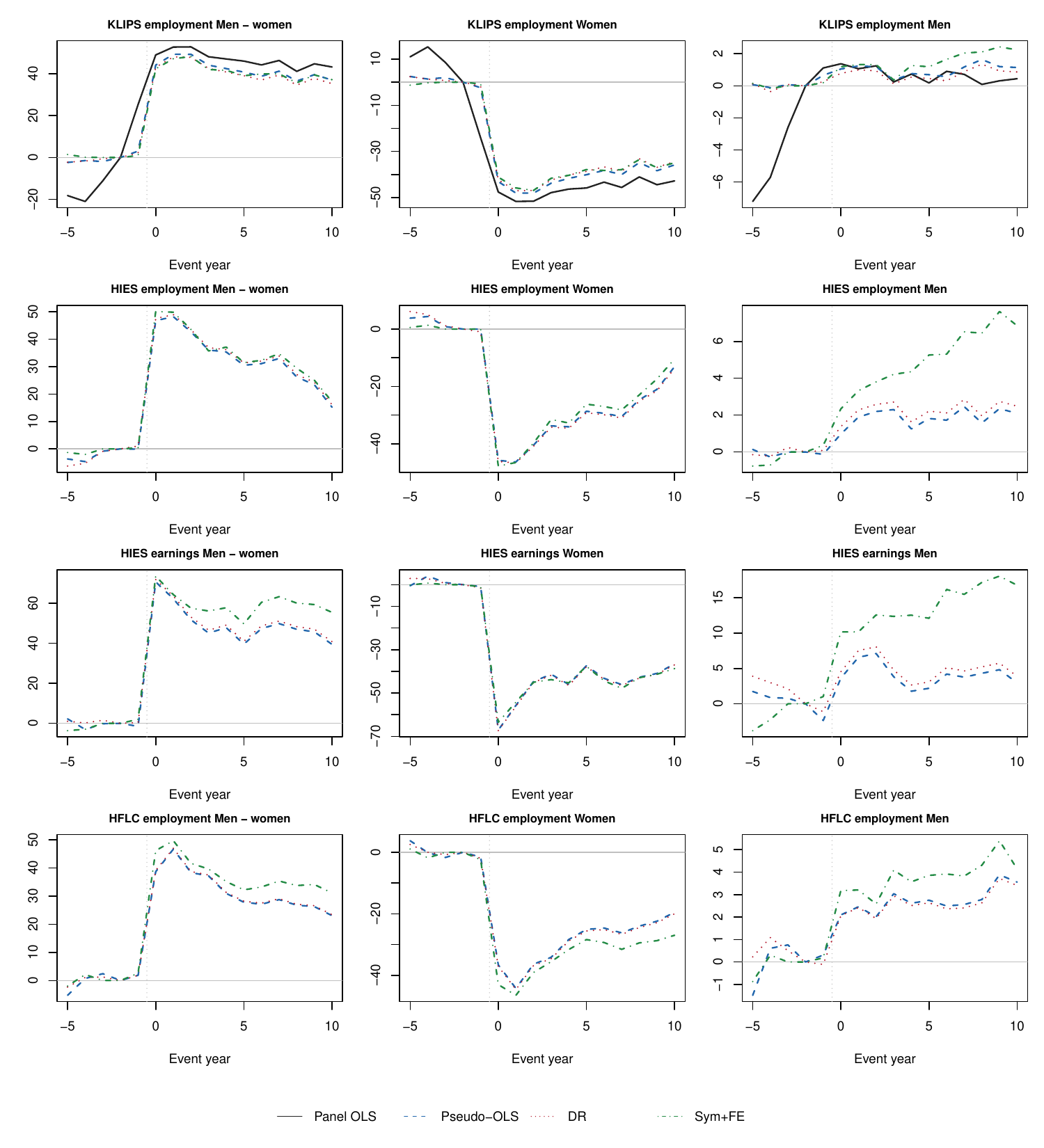}
\caption{Scaled event-time paths in the panel and the pseudo-panel. The window is $[-5,10]$ with base period $-2$, and Sym+FE also fixes $-3$. Pseudo-OLS uses fine cells, and DR and Sym+FE coarse cells.}\label{fig:klips}
\end{figure}

In sum, I revisit the country with the largest discrepancy between panel and pseudo-panel estimates in the validation of \citet{kleven2025atlas}. In my KLIPS sample, the pseudo-panel estimates are close to the average panel penalties.\footnote{My sample differs from theirs in the survey years, the first-birth cohorts and the employment measure, so my results are not directly comparable to theirs.} Women's employment in the panel begins to fall before birth, while the pseudo-panel places most of the decline at birth. Measured relative to two years before birth, the average post-birth penalties are nevertheless close to each other. The Japanese panel in \citet[Figure~3D]{kleven2025atlas} also shows similar average penalties despite differences in the pre-birth estimates.

\subsection{Robustness Analysis}\label{sec:empirical-robustness}

To assess the robustness of the headline results, I conduct seven additional analyses, each changing one element of the pseudo-panel implementation: the donor pool, the DR specification, the composition of parents, the event window, the cohorts used by NP, the survey weights and the method of inference. I also report the penalties by subgroup.

\emph{Donor pool.} The donor pool has two distinct groups, future parents and never-parents. One might suspect that future parents are a more comparable control group than never-parents. In Appendix~\ref{app:future-never}, using the KLIPS panel, I compare donors with an observed later birth and donors observed to remain childless through age 45. Within cells defined by age, year, marital status, education and urban residence, future fathers are 8.7 percentage points (SE 4.2) more likely to work and earn 5.0 million won (SE 2.2 million) more per year than men who remain childless, while the differences for women are statistically insignificant. Using fine cells and keeping the same parents and pre-birth event times, restricting donors to respondents with an observed later birth lowers the pseudo-OLS employment penalty from 42.1\% to 41.8\% and lowers the earnings penalty from 65.8\% to 65.7\%. The corresponding DR estimates fall from 41.5\% to 40.9\% and from 66.1\% to 65.4\%. All these changes are smaller than two percentage points.

\emph{DR specification.} The DR estimates are not sensitive to the model specification. Across five outcome and propensity specifications the KLIPS employment penalty stays at 40.5--40.7\%.\footnote{The specifications include a saturated or a cubic-age, linear-year outcome model with propensity scores on detailed education and urban residence, on broad education and urban residence, or on detailed education alone, and one uses the outcome model only.} Dropping the spouse age gap from both models changes the KLIPS penalties from 40.5\% to 40.8\% for employment and from 64.1\% to 63.0\% for earnings. Two-fold cross-fitting changes the penalties by 0.3 to 4.6 percentage points in KLIPS and by at most 1.5 points in HIES and HFLC on the same samples (Appendix~\ref{app:coarse-cell}).

\emph{Composition.} In pseudo-OLS, the composition of parents changes over event time. Table~\ref{tab:composition-restored} shows that, among the mothers it uses, the share who gave birth after age 30 falls from 66\% at birth to 57\% ten years later in KLIPS, from 62\% to 38\% in HIES and from 67\% to 46\% in HFLC. Weighting each post-birth cell by the number of parents of that type observed at birth reduces the fall to 7 percentage points in KLIPS, 9 in HIES and 6 in HFLC. Some types observed at birth have no observations ten years later, so symmetric matching does not fully preserve the composition at birth. Nevertheless, the penalty estimates differ little from pseudo-OLS (Appendix~\ref{app:bidirectional-comparisons}).

\begin{table}[htbp]
\centering
\caption{Composition of mothers by event time}\label{tab:composition-restored}
\footnotesize
\begin{tabular}{llrrrrrr}
\toprule
 & & \multicolumn{3}{c}{Mean age at birth} & \multicolumn{3}{c}{Share above age 30 at birth} \\
\cmidrule(lr){3-5}\cmidrule(lr){6-8}
Survey & Weights & 0 & 5 & 10 & 0 & 5 & 10 \\
\midrule
KLIPS & Pseudo-OLS & 32.1 & 31.6 & 31.1 & 66.2\% & 63.6\% & 56.9\% \\
 & Sym & 32.1 & 31.8 & 31.2 & 66.2\% & 64.7\% & 59.0\% \\
\addlinespace
HIES & Pseudo-OLS & 31.6 & 30.5 & 29.7 & 62.0\% & 47.9\% & 38.2\% \\
 & Sym & 30.9 & 30.4 & 30.0 & 52.6\% & 48.0\% & 43.2\% \\
\addlinespace
HFLC & Pseudo-OLS & 32.3 & 31.1 & 30.3 & 67.4\% & 56.4\% & 46.2\% \\
 & Sym & 31.5 & 31.1 & 30.8 & 60.5\% & 57.5\% & 54.1\% \\
\bottomrule
\end{tabular}
\begin{minipage}{\linewidth}\vspace{0.3em}\footnotesize\textit{Notes:} Women in the employment samples, coarse cells. Columns are event times. Pseudo-OLS weights the post-birth records that are observed, and Sym weights each type by its number of parents observed at birth.\end{minipage}
\end{table}

\emph{Event window.} Appendix~\ref{app:tau05} re-estimates every model on the window $[-2,5]$, except Sym+FE, which also requires one more pre-birth date. Employment penalties averaged over years 0--5 are larger than the ten-year averages, consistent with larger employment losses in the first years after birth. The KLIPS employment penalty is 51.3\% in the panel and 45.3\% in the standard pseudo-panel regression, and its HIES and HFLC employment penalties rise to 40.8\% and 37.6\%. The gap between NP and the regressions for KLIPS earnings shrinks from about 18 to about 5 percentage points, so the two counterfactuals for fathers diverge mainly at later event times.

\emph{NP cohorts.} For comparison with the panel, the main KLIPS NP estimates use first births from 2011 onward. Since NP requires no pre-birth donors, I also estimate it including parents observed at birth from 2006. Without the cohort restriction the coarse NP penalty is 34.5\% instead of 35.4\% for employment and 52.5\% instead of 42.5\% for earnings. With fine cells the penalties are 36.9\% instead of 37.9\% and 54.9\% instead of 46.9\%. 

\emph{Survey weights.} Using the survey weights in the donor means, the parent counts and the regressions changes the HIES and HFLC pseudo-OLS penalties by at most 1.3 percentage points (Appendix Table~\ref{tab:weighted-rcs}).

\emph{Inference.} I compare the analytical standard errors with the standard deviations of a cluster bootstrap that repeats the entire estimation procedure on each bootstrap sample (Appendix~\ref{app:resampling-restored}). For coarse cells, the bootstrap standard deviations are within 19\% of the analytical standard errors in most cases. For Sym+FE, they are 46\% larger for KLIPS employment, 26\% for KLIPS earnings and 35\% for HIES employment. For NP, they are 55\% larger for KLIPS earnings. The analytical standard errors use a first-order approximation that may miss some finite-sample variation, while bootstrap standard deviations can be sensitive to extreme estimates in relatively small samples.

\emph{Subgroups.} By education and age at first birth, the pseudo-panel employment penalties in KLIPS vary less than the panel estimates, and the panel penalty is 61.9\% for parents aged 30 or younger at birth against 44.6\% under pseudo-OLS, so the agreement in the aggregate penalty does not extend to every subgroup (Appendix~\ref{app:heterogeneity}).

\section{Conclusion}\label{sec:conclusion}

Pseudo-panel event studies are widely applicable to settings where only repeated cross-sections are available. Even where panel data are available, small samples and limited coverage over time or across population groups can restrict the analysis. Selective attrition can also introduce survivor bias. Larger repeated cross-sections, therefore, can support more detailed analyses. In this paper, I show that the pseudo-panel design identifies cohort-time treatment effects when untreated pre-birth outcomes can be imputed from matched childless donors to future parents. I compute the pseudo-panel estimates using weighted least squares with covariate-cell means and derive analytical standard errors that account for the repeated use of observations. In simulations, the resulting confidence intervals attain near-nominal coverage, while conventional intervals cover about half as often. I therefore recommend the cell-level WLS regression with analytical standard errors as the default inference option. 

The proposed extensions each address distinct weaknesses of the baseline approach. Doubly robust estimation accommodates continuous covariates and variables too detailed for exact matching. Symmetric matching limits changes in composition among the types that remain observed, and synthetic type FE absorb persistent differences in their untreated outcomes. When treatment effects are heterogeneous, the nonparametric estimator computes cohort- and type-specific effects and aggregates them with positive weights. In the Korean panel, the standard pseudo-panel regression gives an employment penalty close to the panel estimate. For applied work, I recommend reporting both unscaled and scaled penalties together with the event-study estimates for women and men. With pseudo-panels, pre-birth coefficients close to zero do not rule out anticipation among parents. Comparing analytical standard errors with standard deviations from a cluster bootstrap is particularly useful for the extensions such as Sym+FE and NP when estimates depend heavily on a few observations or small counterfactual denominators. 

The validity of the pseudo-panel design hinges on the conditional independence assumption, which rules out selection on unobservables. A fruitful avenue for future research would be a sensitivity analysis that bounds the penalty when this assumption is relaxed. The framework can also be applied to other settings such as divorce, retirement, and the onset of a disability, where surveys record past event dates but cannot reveal when untreated respondents will experience the event. The marriage application in Appendix~\ref{app:marriage} illustrates the importance of informative matching variables. Marital status helps account for differences between parents and childless donors, but cannot be used for matching when marriage itself is the event. Where panel data are available, examining pre-event paths and subgroup estimates can reveal differences that are not apparent in the average effects.

\bibliographystyle{chicago}
\bibliography{refs}

\clearpage
\pagenumbering{arabic}\renewcommand{\thepage}{A-\arabic{page}}
\appendix
\counterwithin{table}{section}\counterwithin{figure}{section}\numberwithin{equation}{section}
\counterwithin{theorem}{section}\counterwithin{prop}{section}\counterwithin{corollary}{section}\counterwithin{lemma}{section}
\counterwithin{assumption}{section}\counterwithin{definition}{section}\counterwithin{remark}{section}
\makeatletter
\@for\cnt:=table,figure,equation,theorem,prop,corollary,lemma,assumption,definition,remark\do{%
  \expandafter\edef\csname theH\cnt\endcsname{\noexpand\thesection.\noexpand\arabic{\cnt}}}
\makeatother

\begin{center}
{\Large\bf Supplementary Appendix}

\vspace{0.8em}
{\large for ``Pseudo-Panel Event Studies: Econometric Foundations for Credible Applications''}

\vspace{0.8em}
Dongwoo Kim (Simon Fraser University and Korea University)
\end{center}

\vspace{1.5em}

\FloatBarrier
\section{Technical Proofs}\label{app:proofs}

\subsection{Proof of Theorem~\ref{thm:identification} (Identification of ATT)}\label{app:proof-identification}

\begin{proof}
Take $(f,t)$ with $2 \le f \le t \le T$. Then
\begin{align*}
     \text{ATT}(f,t) &= \E[Y_t(f) - Y_t(\infty) \mid F = f]\\
    &= \E[Y_t(f) - Y_{f-1}(f) \mid F = f] + \E[Y_{f-1}(f) - Y_t(\infty) \mid F = f]\\
    &= \E[Y_t(f) - Y_{f-1}(f) \mid F = f] + \E[Y_{f-1}(\infty) - Y_t(\infty) \mid F = f]\\
    &= \E[Y_t(f) - Y_{f-1}(f) \mid F = f] + \E[Y_{f-1}(\infty) - Y_t(\infty) \mid F > t]\\
    &= \E[Y_t - Y_{f-1} \mid F = f] + \E[Y_{f-1} - Y_t \mid F > t]\\
    &= \E[Y_t \mid F = f] - \E[Y_t \mid F > t]
    - \E[Y_{f-1} \mid F = f] + \E[Y_{f-1} \mid F > t],
\end{align*}
where the third equality uses Assumption~\ref{assu:NA} averaged over $X$, the fourth uses Assumption~\ref{assu:PT} averaged over cohorts with $F>t$, and the fifth uses $Y_t=Y_t(F)$ together with Assumption~\ref{assu:NA} averaged over $X$ and those cohorts. From the fifth line on, all outcomes are realized outcomes. Thus parallel trends is used only at the cohort level. In the last line, the first two expectations are identified directly from the data in year $t$. The last two terms are not. To identify them, take any cohort $h\ge f$. Assumptions~\ref{assu:NA} and~\ref{assu:CI} give
\begin{align*}
\E[Y_{f-1} \mid F=h, X=x]
&= \E[Y_{f-1}(\infty) \mid F=h, X=x] \\
&= \E[Y_{f-1}(\infty) \mid F>f-1, X=x]
= \E[Y_{f-1} \mid F>f-1, X=x].
\end{align*}
The first and last equalities use conditional no anticipation, with the last equality also averaging over future cohorts. The middle equality uses conditional independence. Setting $h=f$ and integrating over the covariate distribution of cohort $f$ yields
\begin{align*}
    \E[Y_{f-1} \mid F = f] &= \E_x\!\left[\E[Y_{f-1} \mid F > f-1, \, X = x] \,\middle|\, F = f\right],
\end{align*}
where the right-hand side is identified (the inner expectation is identified from childless individuals in year $f-1$, and the outer expectation integrates over the distribution of $X$ conditional on $F = f$). Similarly, applying the display above to every cohort $h>t$ and averaging over these cohorts conditional on $X=x$ gives $\E[Y_{f-1}\mid F>t,X=x]=\E[Y_{f-1}\mid F>f-1,X=x]$, so
\begin{align*}
    \E[Y_{f-1} \mid F > t] &= \E_x\!\left[\E[Y_{f-1} \mid F > f-1, \, X = x] \,\middle|\, F > t\right],
\end{align*}
where the outer expectation is over the distribution of $X$ conditional on $F > t$, which is identified from the data at time $t$. Substituting the two expressions into the last line of the first display gives \eqref{eq:identification}, so $\text{ATT}(f,t)$ is identified.
\end{proof}

\subsection{Proof of Proposition~\ref{prop:oracle} (Convergence to Oracle OLS)}\label{app:proof-oracle}

\begin{proof}
By the WLS equivalence of Section~\ref{sec:estimation}, the pseudo-panel OLS estimator is
\[
\widehat{\mathbf b}^{\textup{OLS}} = \Big(\sum_{r\in\mathcal R^g}w_r\mathbf d_r\mathbf d_r'\Big)^{-1}\sum_{r\in\mathcal R^g}w_r\mathbf d_r\bar Y_r,
\]
with the cells $r = (X, t, \tau)$, cell means $\bar{Y}_r$, regressors $\mathbf d_r$ and weights $w_r$ of that section. I first show that each cell mean $\bar Y_r$ converges to the corresponding oracle mean $\mu^*_r$, the mean of $Y^*_{i\tau}$ in cell $r$ under the limiting distribution of regression observations.

\emph{Step 1: Cell-mean convergence.} Assumption~\ref{assu:lln}(i) applies to each cell numerator and denominator. For post-birth cells ($\tau \ge 0$), their ratio satisfies
\[
\bar{Y}_r \to_p \E[Y_{it} \mid X_i = X, \, F_i = t-\tau] = \E[Y_{it}(F_i) \mid X_i = X, \, F_i = t-\tau],
\]
which is the oracle post-birth population mean. For pre-birth cells ($\tau < 0$), $\bar{Y}_r = \bar{Y}^C_{(X, t)}$ is the donor-cell mean of the $N^C_{(X,t)}$ childless individuals observed in year $t$, so
\[
\bar{Y}_r \to_p \E[Y_{it} \mid X_i = X, \, F_i > t].
\]
By Assumptions~\ref{assu:NA} and~\ref{assu:CI}, this equals $\E[Y_{it}(\infty) \mid X_i = X, \, F_i = f]$ for any $f > t$ (equation~\eqref{eq:CI-cell}). In particular, it equals the untreated mean for parents with $F_i = t - \tau$, whose pre-birth outcomes are imputed from this donor cell. By Assumption~\ref{assu:lln}(i), the oracle pre-birth mean among these parents observed at birth converges to the same limit.

\emph{Step 2: WLS convergence.} Since $X$ includes year of birth, the regressors $\mathbf d_r$ are fixed on the limiting set of included cells. With $M=\sum_rw_r$, Assumption~\ref{assu:lln}(i) gives $w_r/M\to_p p_r$, so $M^{-1}\sum_rw_r\mathbf d_r\mathbf d_r'$ converges to $Q^*=\sum_rp_r\mathbf d_r\mathbf d_r'$. By Step~1, $\bar Y_r\to_p\mu^*_r$ for every cell. The continuous mapping theorem gives
\[
\widehat{\mathbf b}^{\textup{OLS}} \to_p (Q^*)^{-1}\sum_rp_r\mathbf d_r\mu_r^*
=\left(\E[\mathbf d_{i\tau}\mathbf d_{i\tau}']\right)^{-1}\E[\mathbf d_{i\tau}Y^*_{i\tau}],
\]
where both expectations use the limiting distribution of regression observations with cell probabilities $p_r$. Replacing $\bar Y_r$ by the oracle sample cell means gives the same limit, since these means have the same limits and the regressors and weights are unchanged.
\end{proof}

\subsection{Proof of Proposition~\ref{prop:cell-clt} (Asymptotic Normality)}\label{app:proof-cell-clt}

\begin{proof}
By the multivariate i.i.d.\ CLT, $\sqrt N(\hat\eta-\eta)\Rightarrow N(0,\Sigma)$. On the event that the set of included cells equals its limit, cell means, type shares and row weights are ratios with positive limiting denominators. The limiting design matrix $Q^*$ is nonsingular. With nonzero population counterfactual denominators, the composition $G$ is continuously differentiable near $\eta$. The delta method gives
\[
\sqrt N\{G(\hat\eta)-G(\eta)\}
 =J\sqrt N(\hat\eta-\eta)+o_p(1)\Rightarrow N(0,J\Sigma J').
\]
This event has probability tending to one, so the result also holds for the rule that selects the cells in the sample. Partitioning $J$ and $\Sigma$ yields \eqref{eq:two-sample-var}. Since $H_\ell$ contains the cells of both genders, $\Sigma$ includes the covariance between the men's and the women's cell means, and $V$ is the variance of the men-minus-women penalty.

For the covariance estimator, $DG$ is continuous at $\eta$ and $\hat\eta\to_p\eta$, so $\hat J\to_p J$, and the sample covariance $\hat\Sigma$ of $H_\ell$ converges to $\Sigma$ under finite second moments. Since \eqref{eq:analytical-block-cov} equals $\hat J\hat\Sigma\hat J'/N$, $N\widehat{\operatorname{Var}}(\hat P)\to_p V$.
\end{proof}

\begin{remark}[Bootstrap]\label{rem:bootstrap}
The i.i.d.\ cluster bootstrap consistently estimates the distribution of $\sqrt N(\hat\eta-\eta)$, and the delta method for the bootstrap gives the same limiting distribution for the recomputed statistic. This requires that the bootstrap samples select the limiting set of cells with conditional probability tending to one. For the recomputed bootstrap estimate $\hat P^{\mathrm{boot}}$, consistency of its standard deviation additionally requires conditional uniform integrability in probability of $N\|\hat P^{\mathrm{boot}}-\hat P\|^2$. This can fail for a scaled penalty whose denominator is close to zero in some bootstrap samples.
\end{remark}

\begin{remark}[Fixed annual sample sizes]\label{rem:fixed-year-sampling}
Suppose each cluster belongs to a single survey year. Replace Assumption~\ref{assu:reg}(i) by i.i.d.\ sampling within each year, with finite second moments, independence across years and the number of clusters $N_s$ in each year fixed by design \citep[see also][]{santanna2020doubly}. Write $N=\sum_sN_s$, $N_s/N\to\lambda_s>0$, $\mu_s=\E[H_\ell\mid s]$, $\Sigma_s=\operatorname{Var}(H_\ell\mid s)$, $\eta=\sum_s\lambda_s\mu_s$ and $\eta_N=\sum_s(N_s/N)\mu_s$. Under the remaining conditions of Proposition~\ref{prop:cell-clt}, the within-year central limit theorems and the delta method give
\[
\sqrt N\{\hat P-G(\eta_N)\}\ \xrightarrow{d}\ N(0,V_{\mathrm{fixed}}),\qquad
V_{\mathrm{fixed}}=J\!\left(\sum_s\lambda_s\Sigma_s\right)J',
\]
where $J=DG(\eta)$.\footnote{Centering at $P^*=G(\eta)$ is valid if $N_s/N-\lambda_s=o(N^{-1/2})$ for every $s$. Using $G(\eta_N)$ as the center does not require this additional rate condition.} Let $\hat\Sigma_s$ be the sample covariance of $H_\ell$ within year $s$, using divisor $N_s-1$. A consistent variance estimator is
\[
\widehat{\operatorname{Var}}_{\mathrm{fixed}}(\hat P)
=\frac1N\hat J\!\left(\sum_s\frac{N_s}{N}\hat\Sigma_s\right)\hat J',
\]
with $N\widehat{\operatorname{Var}}_{\mathrm{fixed}}(\hat P)\to_p V_{\mathrm{fixed}}$.

The pooled covariance additionally includes $\sum_s\lambda_s(\mu_s-\eta)(\mu_s-\eta)'$, a positive semidefinite term arising from random annual sample sizes. Under the bootstrap conditions in Remark~\ref{rem:bootstrap}, the corresponding bootstrap draws $N_s$ clusters with replacement within each year and recomputes the statistic.
\end{remark}

\subsection{Derivatives for the Analytical Standard Errors}\label{app:analytical-implementation}

Here I present the derivatives and cluster influences used in Propositions~\ref{prop:cell-clt} and~\ref{prop:dr-inference} to compute the covariance estimator \eqref{eq:analytical-block-cov}. The influence of a cluster includes the contributions of all its observations. A person who is observed in several years, first as a donor and later as a parent, or whose observations also enter the estimation of the DR outcome and propensity models, contributes one term for each of these uses. The derivatives treat the set of cells used in the regression as fixed. Parts (ii) and (iv) of Assumption~\ref{assu:reg} justify this, because the sample selects the limiting set of cells with probability tending to one.

\subsubsection{Regression coefficients and scaled penalties}

Suppress gender and let $\mathbf d_r$ collect the regressors for the estimator under consideration, $\mathbf b$ its coefficient vector, $w_r$ the weight of row $r$, $\mu_r$ its outcome mean, and $Q=\sum_rw_r\mathbf d_r\mathbf d_r'$. Let $e_\tau$ select the event coefficient $\beta_\tau=e_\tau'\mathbf b$, with $e_\tau=0$ for an event time whose coefficient is fixed at zero. Differentiating the first-order condition of the weighted regression, $\sum_rw_r\mathbf d_r(\mu_r-\mathbf d_r'\mathbf b)=0$, gives
\[
d\mathbf b=Q^{-1}\sum_r\mathbf d_r\{w_r\,d\mu_r+(\mu_r-\mathbf d_r'\mathbf b)\,dw_r\}.
\]
This expression permits a nonzero population regression residual. With $W_\tau=\sum_{r:\tau_r=\tau}w_r$ and $\bar{\mathbf d}_\tau=W_\tau^{-1}\sum_{r:\tau_r=\tau}w_r\mathbf d_r$, the counterfactual denominator is $c_\tau=(\bar{\mathbf d}_\tau-e_\tau)'\mathbf b$, and
\begin{align*}
dc_\tau&=(\bar{\mathbf d}_\tau-e_\tau)'d\mathbf b
+\frac{1}{W_\tau}\sum_{r:\tau_r=\tau}
 (\mathbf d_r'\mathbf b-e_\tau'\mathbf b-c_\tau)\,dw_r,\\
d(\beta_\tau/c_\tau)&=\frac{e_\tau'd\mathbf b}{c_\tau}
-\frac{\beta_\tau}{c_\tau^2}\,dc_\tau.
\end{align*}
The second term of $dc_\tau$ reflects the change in the weights with which the estimated counterfactual is averaged. Holding the denominator fixed omits both terms.

The expressions above give the change in a coefficient or a scaled penalty for given changes $d\mu_r$ and $dw_r$ in the cell means and row weights. It remains to express these changes in terms of the observations. Let $f_i$ denote the number of times each observation $i$ is counted. In the observed sample $f_i=1$ for all $i$, and in a bootstrap sample $f_i$ is the number of times the observation is drawn. The derivative of the estimate with respect to $f_i$ at $f_i=1$ measures the influence of observation $i$. Let $a_i$ be the survey weight of the observation, which is held fixed and equals one in the unweighted estimates. The cell mean is $\mu_r=\sum_{i\in r}f_ia_iY_i/\sum_{i\in r}f_ia_i$, where the sums run over the observations contributing to this mean. For such an observation $i$,
\[
\frac{\partial\mu_r}{\partial f_i}
=\frac{a_i(Y_i-\mu_r)}{\sum_{j\in r}f_ja_j}.
\]
The mean derivative is zero for observations not contributing to the mean. The derivative of a regression weight $w_r$ with respect to $f_i$ is $a_i$ when observation $i$ is one of the parents counted in $w_r$, and zero otherwise. Setting $d\mu_r=\partial\mu_r/\partial f_i$ and $dw_r=\partial w_r/\partial f_i$ in the expressions above gives $\partial\hat P/\partial f_i$, the influence of observation $i$ on the penalty. The cluster-level influence in \eqref{eq:analytical-block-cov} is $\hat\psi_\ell=N\sum_{i\in\ell}\partial\hat P/\partial f_i$. Since $\hat P$ does not change when all $f_i$ are multiplied by a common constant, these influences sum to zero. For Sym+FE, $\mathbf d_r$ contains the event-time dummies other than those at $-3$ and $-2$ and the retained type, age and year indicators specified in Assumption~\ref{assu:sym-reg}, and the same formulas apply.

\subsubsection{Nonparametric estimator}

Suppress gender and fix an event time $\tau$. In the application the types included can differ across event times. Let $q_{f\theta}$ be the count of parents of cohort $f$ and type $\theta$ observed at birth and $K_f=\sum_\theta q_{f\theta}$ the count of the cohort. Let $\mathcal S_{f\tau}$ be the types of cohort $f$ with both a parent cell and a donor cell at $\tau$, $U_{f\tau}=\sum_{\theta\in\mathcal S_{f\tau}}q_{f\theta}$ the number of parents observed at birth in those types, $\mathcal F_\tau$ the cohorts with nonempty $\mathcal S_{f\tau}$, and $K_\tau=\sum_{f\in\mathcal F_\tau}K_f$. The estimator averages the included types within each cohort and then the cohorts in $\mathcal F_\tau$. With $\Delta_{f\theta,\tau}=\mu^P_{f\theta,\tau}-\mu^D_{f\theta,\tau}$, the difference between the parent and donor means,
\[
\widehat{\mathrm{ATT}}(\tau)=\sum_{f\in\mathcal F_\tau}\sum_{\theta\in\mathcal S_{f\tau}}\omega_{f\theta,\tau}\Delta_{f\theta,\tau},
\qquad
\omega_{f\theta,\tau}=\frac{K_f}{K_\tau}\frac{q_{f\theta}}{U_{f\tau}}.
\]
Its derivative with respect to the cell means is therefore $\sum_{f\in\mathcal F_\tau}\sum_{\theta\in\mathcal S_{f\tau}}\omega_{f\theta,\tau}(d\mu^P_{f\theta,\tau}-d\mu^D_{f\theta,\tau})$. 

The weights are estimated as well, because they are ratios of the counts of parents observed at birth, so the delta method also needs the derivative with respect to these counts. Write $\Delta_{f\tau}=\sum_{\theta\in\mathcal S_{f\tau}}q_{f\theta}\Delta_{f\theta,\tau}/U_{f\tau}$ for the cohort average, so that $\widehat{\mathrm{ATT}}(\tau)=\sum_{f\in\mathcal F_\tau}K_f\Delta_{f\tau}/K_\tau$. For a type of a cohort in $\mathcal F_\tau$, the derivative with respect to its count is
\[
\frac{\partial\widehat{\mathrm{ATT}}(\tau)}{\partial q_{f\theta}}
=\frac{\Delta_{f\tau}-\widehat{\mathrm{ATT}}(\tau)}{K_\tau}
+\mathbf1\{\theta\in\mathcal S_{f\tau}\}
 \frac{K_f}{K_\tau U_{f\tau}}(\Delta_{f\theta,\tau}-\Delta_{f\tau}).
\]
For the unscaled ATT, this derivative is zero when the differences $\Delta_{f\theta,\tau}$ are the same across all cohorts and types. A type absent at $\tau$ still affects the weight of its cohort, and its absence removes only the second term. The denominator of the scaled effect uses the same formulas with $\Delta_{f\theta,\tau}$ replaced by $\mu^D_{f\theta,\tau}$. The derivative of the scaled penalty then follows from the ratio derivative, the difference between genders and the average over event times. The NP estimator combines cell means and weights through finite sums and ratios with nonzero denominators. It is therefore a continuously differentiable function $G$ of the cluster moments $\hat\eta$, and Proposition~\ref{prop:cell-clt} applies. When every type is included at every event time, $U_{f\tau}=K_f$ and the weights reduce to $\hat q^g(f,\theta)$ of Section~\ref{sec:np}.

\subsubsection{DR estimator}\label{app:dr-variance}

Fix a gender and a pre-birth event time. For the implemented linear outcome model and logit propensity model, both models are estimated on all matching cells jointly, with a separate intercept for each cell and slopes common across cells. Let $v$ and $z$ denote the covariates of the outcome model (OM) and of the propensity score model (PS) that vary within a cell. The OM for the donors of cell $c$ is $m_c(v)=\iota_c+v'\alpha$. The PS is a logit for being a parent, with cell intercepts and slopes $\gamma$, so the weight of donor $j$ within its cell is proportional to $\exp(z_j'\gamma)$. For a pre-birth row $r$ with donor cell $c$, let $\bar v^P_r$ be the mean of $v$ over the parents observed at birth whose pre-birth rows form $r$. Since the weights are normalized within the cell, the cell intercepts of both models cancel from the imputation:
\[
\hat\mu^{DR}_r
=\bar v^P_r{}'\hat\alpha+
 \sum_{j\in\mathcal D_c}\tilde w_j(Y_j-v_j'\hat\alpha),
\qquad
\tilde w_j=\frac{a_j\exp(z_j'\hat\gamma)}
 {\sum_{k\in\mathcal D_c}a_k\exp(z_k'\hat\gamma)}.
\]
The direct parent contribution comes from differentiating $\bar v^P_r$. The direct donor contribution is the normalized weighted residual centered at its weighted mean. With $\bar z^{\tilde w}=\sum_{j\in\mathcal D_c}\tilde w_j z_j$, the nuisance derivatives are
\[
\partial_\alpha\hat\mu^{DR}_r
=\bar v^P_r-\sum_j\tilde w_jv_j,
\qquad
\partial_\gamma\hat\mu^{DR}_r
=\sum_j\tilde w_j(z_j-\bar z^{\tilde w})
 (Y_j-v_j'\hat\alpha).
\]
These terms enter the regression and scaling derivatives given above.

The two derivatives multiply the influence functions of $\hat\alpha$ and $\hat\gamma$, which are those of a regression and of a logit with cell intercepts. The intercepts need not be handled separately. For the OM, subtract from $v_j$ its $a_j$-weighted mean over the donors of the cell and call the result $\tilde v_j$. The Hessian of the slopes is $\sum_ja_j\tilde v_j\tilde v_j'$ and the score of donor $j$ is $a_j\tilde v_j(Y_j-\hat m_j)$, where $\hat m_j$ is the estimated donor mean. For the logit, let $S_i=1$ for parents and $S_i=0$ for donors, let $\hat e_i$ be the estimated probability, and subtract from $z_i$ its cell mean weighted by $a_i\hat e_i(1-\hat e_i)$ to obtain $\tilde z_i$. The Hessian of the slopes is $\sum_i a_i\hat e_i(1-\hat e_i)\tilde z_i\tilde z_i'$ and the score of observation $i$ is $a_i\tilde z_i(S_i-\hat e_i)$, with sums over parents and donors. This within-cell centering is analogous to partialling out fixed effects in a linear regression. It accounts for estimation of the cell intercepts while requiring only the smaller matrices for the slope coefficients to be inverted.

\subsection{Proof of Proposition~\ref{prop:dr} (Double Robustness and Efficiency)}\label{app:proof-dr}

\begin{proof}
Fix the gender $g$, first-birth cohort $f$ and pre-birth event time $\tau$ as in Section~\ref{sec:dr}, and suppress these indices, so that $\mu^0=\mu^0_{g,\tau}$, $m=m_g$ and $Y$ is the donor outcome in the donor year. Write $m^\dagger,w^\dagger$ for the probability limits of the estimated functions. Under the stated laws of large numbers and a positive denominator, the average of the imputations converges to
\[
E_1[m^\dagger(Z)]+\frac{E_0[w^\dagger(Z)\{m(Z)-m^\dagger(Z)\}]}{E_0[w^\dagger(Z)]}.
\]
The donor disturbance $Y-m(Z)$ has conditional mean zero, so its weighted average vanishes in the limit. Conditional NA and CI at $Z$ give $m(Z)=E[Y(\infty)\mid S=1,Z]$ in the donor year, hence $\mu^0=E_1[m(Z)]$.

If $m^\dagger=m$, the correction is zero for any admissible limiting weights. Consistency of the OM at each parent's covariates also gives consistency of the parent-specific imputation. If the odds are correct, Bayes' rule gives
\[
\frac{E_0[w(Z)h(Z)]}{E_0[w(Z)]}=E_1[h(Z)]
\]
for each integrable $h$. Taking $h=m-m^\dagger$ cancels the OM error in the average. The second argument concerns the average only and does not require consistency of each parent's imputation, which need not hold here. Together the two arguments prove parts~(a) and~(b).

For part~(c), recall the sampling scheme of the proposition. The parents' covariates $\{Z_i:S_i=1\}$ are an i.i.d.\ sample of size $n_1$ from a distribution $P_1$, and the donor observations $\{(Y_j,Z_j):S_j=0\}$ are an i.i.d.\ sample of size $n_0$ from a distribution $P_0$. The two samples are independent, and $n_s/N\to\lambda_s$. Write $m(z)=E_0[Y\mid Z=z]$, $\rho(z)=dP_1^Z/dP_0^Z(z)$ and $\mu=E_1[m(Z)]=\mu^0$. Assume $\rho$ is bounded and $E_1[m^2]+E_0[\rho^2(Y-m)^2]<\infty$. The two distributions are otherwise unrestricted. The maintained NA and CI restrictions give $\mu$ its untreated-outcome interpretation.

The estimator attains the semiparametric efficiency bound $V^{\mathrm{eff}}_{g,\tau}$ of \eqref{eq:dr-eff-bound} under the following conditions. Write the estimated functions as $m_{\hat\alpha}$ and $\rho_{\hat\gamma}$ in fixed-dimensional families, with $m_{\alpha_0}=m$ and $\rho_{\gamma_0}=\rho$. Estimated odds may be rescaled by the known sampling-odds constant to estimate $\rho$, and this common factor leaves the normalized weighted average unchanged. Assume the nuisance estimators are regular under local perturbations of the two distributions and $\hat\xi-\xi_0=O_p(N^{-1/2})$, where $\xi=(\alpha',\gamma')'$. In a neighborhood of $\xi_0$, require $\rho_\gamma>0$ and twice continuous differentiability of $m_\alpha$, $\rho_\gamma$ and $\rho_\gamma(Y-m_\alpha)$. Their values and first derivatives have square-integrable envelopes under the relevant distributions, and second derivatives have integrable envelopes. These conditions provide a local uniform law of large numbers and a bounded second derivative for the sample ratio. They concern regular fixed-dimensional estimators at distributions where both models are correct, including full-sample estimation.

Let $\mathbb P_s$ denote the empirical average in sample $s$. The statistic is
\[
T_N(\hat\xi)=\mathbb P_1m_{\hat\alpha}
+\frac{\mathbb P_0[\rho_{\hat\gamma}(Y-m_{\hat\alpha})]}
{\mathbb P_0\rho_{\hat\gamma}}.
\]
At the true nuisance functions, $E_0\rho=1$ and $E_0[\rho(Y-m)]=0$. The ordinary sample CLT therefore gives $\mathbb P_0\rho-1=O_p(N^{-1/2})$ and $\mathbb P_0[\rho(Y-m)]=O_p(N^{-1/2})$. Expanding only this oracle ratio yields
\[
T_N(\xi_0)-\mu=(\mathbb P_1-P_1)m
+\mathbb P_0[\rho(Y-m)]+O_p(N^{-1}).
\]
These root-$N$ statements apply to fixed true functions. To handle estimated functions, differentiate $T_N$ with respect to the nuisance parameters. Dots denote derivatives at $\xi_0$, and $\varepsilon=Y-m(Z)$:
\begin{align*}
\partial_\alpha T_N(\xi_0)
&=\mathbb P_1\dot m-\frac{\mathbb P_0(\rho\dot m)}{\mathbb P_0\rho},\\
\partial_\gamma T_N(\xi_0)
&=\frac{\mathbb P_0(\dot \rho\varepsilon)}{\mathbb P_0\rho}
-\frac{\mathbb P_0(\rho\varepsilon)\mathbb P_0\dot \rho}{(\mathbb P_0\rho)^2}.
\end{align*}
Both derivatives have zero population limits: $E_1\dot m=E_0(\rho\dot m)$ and $E_0(\dot \rho\varepsilon)=0$. The moment conditions give $\partial_\xi T_N(\xi_0)=O_p(N^{-1/2})$. A Taylor expansion with the regular root-$N$ nuisance estimates and bounded Hessian gives $T_N(\hat\xi)-T_N(\xi_0)=O_p(N^{-1})$. This holds when the same observations are used to estimate the nuisance parameters and to compute $T_N$, so no sample splitting is needed. Consequently,
\begin{equation}\label{eq:dr-eff-expansion}
\hat\mu^{DR}-\mu
=\frac1{n_1}\sum_{i=1}^{n_1}\{m(Z_i)-\mu\}
+\frac1{n_0}\sum_{j=1}^{n_0}\rho(Z_j)\{Y_j-m(Z_j)\}
+o_p(N^{-1/2}).
\end{equation}
Independence of the two samples gives the normal limit and variance in \eqref{eq:dr-eff-bound}.

To establish efficiency, consider regular submodels with mean-zero scores $h_1(Z)$ and $h_0(Y,Z)$. Differentiating $\mu=E_1[m(Z)]$ along these directions gives
\[
\dot\mu(h_1,h_0)
=E_1[(m-\mu)h_1]+E_0[\rho(Y-m)h_0].
\]
The donor term follows from $\dot m(z)=E_0[(Y-m)h_0\mid Z=z]$. Since the two distributions are unrestricted, their tangent spaces are the full mean-zero square-integrable spaces. The influence pair $(m-\mu,\,\rho(Y-m))$ is therefore the canonical gradient in the two-sample expansion. Under total-sample normalization its variance is $V^{\mathrm{eff}}_{g,\tau}$. Expansion~\eqref{eq:dr-eff-expansion} shows that, when the outcome model and the propensity model are both correctly specified, the influence function of $\hat\mu^{DR}$ is this canonical gradient. Together with the stated local regularity, this implies that the asymptotic variance of $\hat\mu^{DR}$ equals $V^{\mathrm{eff}}_{g,\tau}$.
\end{proof}

\subsection{Proof of Proposition~\ref{prop:dr-inference} (Parametric DR Inference)}\label{app:proof-dr-inference}

The additional regularity conditions for Proposition~\ref{prop:dr-inference} are as follows. In an open neighborhood of $\zeta^*$, the estimating functions $\Psi_\ell(\zeta)$ are continuously differentiable, and their values and first derivatives have square-integrable envelopes. The sample derivative satisfies a uniform law of large numbers on that neighborhood. For scaled penalties, differentiability of $G_{DR}$ requires nonzero limiting counterfactual denominators.

\begin{proof}
Expand the sample estimating equation around $\zeta^*$. Consistency, the derivative law of large numbers and nonsingularity give
\[
\sqrt N(\hat\zeta-\zeta^*)
=-A^{-1}N^{-1/2}\sum_\ell\Psi_\ell(\zeta^*)+o_p(1).
\]
The CLT for independent clusters and the delta method for $G_{DR}$ yield the stated influence representation and normal limit, with variance $E[\psi^{DR}_\ell\psi^{DR}_\ell{}']$. The local moment envelopes imply convergence of the plug-in score covariance and derivative matrices. Therefore $N$ times the covariance estimator \eqref{eq:analytical-block-cov} converges in probability to this variance.
\end{proof}

\subsection{Proof of Proposition~\ref{prop:sym-fe-oracle} (Oracle Projection)}\label{app:proof-sym-fe-oracle}

\begin{proof}
Let $\bar{\mathbf Y}^{B,g}$ stack the means of the included cells and $\boldsymbol\mu^{S,g,*}$ the population means $\mu^{S,g,*}_r$. For a pre-birth row, the donor-cell mean converges to the mean outcome of the childless in that cell. Assumption~\ref{assu:NA} makes this an untreated mean, and Assumption~\ref{assu:CI} equates it to the untreated mean of the parents of that type. For a post-birth row, the mean of the observed parents converges to the treated mean of the type by Assumption~\ref{assu:sym-repr}. With the law of large numbers and the fixed set of cells of Assumption~\ref{assu:sym-reg}, $\bar{\mathbf Y}^{B,g}\to_p\boldsymbol\mu^{S,g,*}$. Let $D_g$ have rows $(\mathbf d_r^{S,g})'$, retaining the event-time indicators and omitting only redundant fixed-effect indicators as specified in Assumption~\ref{assu:sym-reg}. Define $\Pi_{N,g}=\operatorname{diag}(N^P_{\theta(r),g,0}/N)$ and $\Pi_g=\operatorname{diag}(p^g_{\theta(r)})$. The rank condition of Assumption~\ref{assu:sym-reg} gives
\[
\widehat{\mathbf b}=(D_g'\Pi_{N,g}D_g)^{-1}D_g'\Pi_{N,g}\bar{\mathbf Y}^{B,g}
\to_p(D_g'\Pi_gD_g)^{-1}D_g'\Pi_g\boldsymbol\mu^{S,g,*},
\]
which solves \eqref{eq:anchored-synth-proj}. The cell means estimated by the regression are $D_g\widehat{\mathbf b}$. Omitting different redundant fixed-effect indicators leaves these estimates and the event-time coefficients unchanged: the retained fixed-effect indicators reproduce the same age, year and type effects, while the event-time indicators are unchanged.
\end{proof}

\subsection{Proof of Proposition~\ref{prop:conditional ATT} (Conditional ATT)}\label{app:proof-np}

\begin{proof}
Fix $(f,t,x)$ with $2 \le f \le t \le T$ and $x$ in the common support. Since $Y_t=Y_t(f)$ for cohort $f$,
\[
\textup{ATT}_x(f,t) = \E[Y_t \mid F = f, X = x] - \E[Y_t(\infty) \mid F = f, X = x].
\]
Assumption~\ref{assu:CPT} gives
\begin{align*}
\E[Y_t(\infty) \mid F = f, X = x]
&= \E[Y_{f-1}(\infty) \mid F = f, X = x]\\
&\quad+ \E[Y_t(\infty) - Y_{f-1}(\infty) \mid F > t, X = x].
\end{align*}
By Assumption~\ref{assu:CI}, $\E[Y_{f-1}(\infty) \mid F = f, X = x]=\E[Y_{f-1}(\infty) \mid F > t, X = x]$, and therefore
\[
\E[Y_t(\infty) \mid F = f, X = x] = \E[Y_t(\infty) \mid F > t, X = x].
\]
Finally, Assumption~\ref{assu:NA} for the cohorts after $t$ gives $\E[Y_t(\infty) \mid F > t, X = x]=\E[Y_t \mid F > t, X = x]$, which proves \eqref{eq:np-identification}.
\end{proof}

\FloatBarrier
\section{Symmetric Matching without Type FE}\label{app:exact-sym}\label{app:bidirectional-comparisons}

Section~\ref{sec:symmetric} proposes the symmetric matching pseudo-panel regression with type FE. Here I consider the same regression without type FE (denoted by Sym). I provide the probability limit of Sym, derive how the regression weights heterogeneous treatment effects, and compare the two estimators in the three Korean surveys.

Sym uses the same cells and weights as in Section~\ref{sec:symmetric}:
\begin{equation}\label{eq:anchored-sym-reg}
\bar Y^{B,g}_r=\sum_{k\ne\tau^\star}\beta^{S,g}_k\mathbf1\{\tau_r=k\}
+\phi^g_{a_r}+\psi^g_{t_r}+u^g_r,\qquad \tau^\star=-2.
\end{equation}
Without type FE, the linear dependence among type, year and event-time effects does not arise, so a single base period suffices.

\begin{prop}[Oracle Projection of Sym]\label{prop:sym-anchored-oracle}
Let Assumptions~\ref{assu:NA}--\ref{assu:CI} and \ref{assu:sym-repr}--\ref{assu:sym-reg} hold, and let $\boldsymbol\mu^{S,g,*}$ collect the untreated pre-birth and treated post-birth population means of the included rows. Then the coefficients of \eqref{eq:anchored-sym-reg} converge to the unique WLS projection of $\boldsymbol\mu^{S,g,*}$ on its event-time, age and year regressors, with limiting row weights $p^g_{\theta(r)}$.
\end{prop}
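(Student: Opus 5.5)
The argument follows the proof of Proposition~\ref{prop:sym-fe-oracle} almost verbatim, with the type indicators removed and a single base period. First I establish convergence of the cell means. For a pre-birth row $r=(\theta,\tau)$ with $\tau<0$, $\bar Y^{B,g}_r$ is the donor-cell mean $\bar Y^C_{(X_\theta,F_\theta+\tau)}$. By Assumption~\ref{assu:lln}(i), which Assumption~\ref{assu:sym-reg} imposes for every included row, it converges to $\E[Y_{t_r}\mid X=X_\theta,F>t_r]$. Assumption~\ref{assu:NA} makes this an untreated mean, and Assumption~\ref{assu:CI} equates it to the untreated mean of type-$\theta$ parents, exactly as in \eqref{eq:CI-cell}. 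For a post-birth row, the mean of observed parents of type $\theta$ converges to the observed-parent population mean. Assumption~\ref{assu:sym-repr} then identifies this mean with the treated mean $\E[Y_{t_r}(F_\theta)\mid r,g]$. Since the set of included rows is fixed in the limit, stacking these limits gives $\bar{\mathbf Y}^{B,g}\to_p\boldsymbol\mu^{S,g,*}$ jointly.

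Second I handle the weights and the design. Each row carries weight $N^P_{\theta(r),g,0}$, and by Assumption~\ref{assu:sym-reg} we have $N^P_{\theta,g,0}/N\to_p p^g_\theta>0$. Let $D_g$ stack the regressors of \eqref{eq:anchored-sym-reg}: the event-time dummies other than $\tau^\star=-2$, and age and year dummies with redundant ones dropped. The rank condition $Q^{S,g}\succ0$ of Assumption~\ref{assu:sym-reg}, applied to this regression, gives invertibility of $D_g'\Pi_{N,g}D_g/\mathrm{tr}$ in the limit. The WLS estimator is $(D_g'\Pi_{N,g}D_g)^{-1}D_g'\Pi_{N,g}\bar{\mathbf Y}^{B,g}$. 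By the continuous mapping theorem it converges to $(D_g'\Pi_gD_g)^{-1}D_g'\Pi_g\boldsymbol\mu^{S,g,*}$, which is the WLS projection with row weights $p^g_{\theta(r)}$. Nonsingularity of $D_g'\Pi_gD_g$ gives uniqueness of the projection coefficients.

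The main point needing care is the claim that a single base period suffices. In the regression with type FE, the identity $t=F_\theta+\tau$ creates a linear dependence among type, year and event-time effects; once type indicators are removed, that particular dependence disappears. I would argue this directly: without type effects, a shift $c(\tau+2)$ in event coefficients cannot be offset by year effects alone. The reason is that rows with the same $t$ but different $\tau$ (different cohorts) must receive the same year effect. The remaining age–year–event collinearity is what the rank condition in Assumption~\ref{assu:sym-reg} rules out. Invariance of the event-time coefficients to the choice of dropped redundant age/year indicators follows as in the proof of Proposition~\ref{prop:sym-fe-oracle}, because the retained indicators span the same column space.
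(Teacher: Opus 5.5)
Your proposal is correct and follows the paper's proof essentially step for step. You obtain cell-mean convergence from Assumption~\ref{assu:lln}(i), using NA and CI for the donor rows and Assumption~\ref{assu:sym-repr} for the post-birth rows, then apply the continuous mapping theorem to $(D_g'\Pi_{N,g}D_g)^{-1}D_g'\Pi_{N,g}\bar{\mathbf Y}^{B,g}$, with invertibility and uniqueness supplied by the rank condition $Q^{S,g}\succ0$ in Assumption~\ref{assu:sym-reg}. Your extra argument that a single base period suffices is not needed: it is only heuristic, since it ignores what the age effects could absorb, and, as in the paper, the rank condition alone does that work.
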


\begin{proof}
For each included pre-birth cell, Assumption~\ref{assu:NA} equates the observed and untreated donor means, and Assumption~\ref{assu:CI} equates this mean to that of the type's parents. For each post-birth row, Assumption~\ref{assu:sym-repr} gives the treated population mean. The joint law of large numbers and the fixed finite set of cells in Assumption~\ref{assu:sym-reg} therefore give $\bar{\mathbf Y}^{B,g}\to_p\boldsymbol\mu^{S,g,*}$. Let $D_g$ be the matrix of linearly independent regressors of \eqref{eq:anchored-sym-reg}, and define $\Pi_{N,g}=\operatorname{diag}(N^P_{\theta(r),g,0}/N)$ and $\Pi_g=\operatorname{diag}(p^g_{\theta(r)})$. Then
$\widehat{\mathbf b}^{S,g}=(D_g'\Pi_{N,g}D_g)^{-1}D_g'\Pi_{N,g}\bar{\mathbf Y}^{B,g}
\to_p(D_g'\Pi_gD_g)^{-1}D_g'\Pi_g\boldsymbol\mu^{S,g,*}$.
The inverse is continuous by the rank condition of Assumption~\ref{assu:sym-reg}.
\end{proof}

The proposition gives a probability limit, not a causal interpretation. The limit is the best additive approximation to the cell means, whether or not these means are additive in event time, age and year. It equals the causal effects when two conditions hold. The untreated means are the sum of age and year effects common to all types, with no type-specific levels, and the treatment effect at each event time is the same across types. The first condition is stronger than the one for Sym+FE, which allows type-specific levels.

In addition, the weights hold the mix of types fixed only when every type has a cell at every event time. A type without a cell at some event time drops out of the regression there, so the mix of types still changes over event time.

Proposition~\ref{prop:cell-clt} applies to Sym and Sym+FE with their respective designs. The row weights $N^P_{\theta,g,0}$ are counts contained in $\hat\eta$, so the analytical variance of Appendix~\ref{app:analytical-implementation} accounts for their sampling variation, as well as for donor reuse, repeated observations of the same person and the dependence between women and men in the same household.

The projection is not an average of the type-specific effects with the weights of the parents observed at birth when these effects differ. Write $\delta^g_\theta(\tau)=\E[Y_{F_\theta+\tau}(F_\theta)-Y_{F_\theta+\tau}(\infty)\mid\theta,g]$ for the effect of type $\theta$ and suppose that the untreated means are additive,
\begin{equation}\label{eq:additive-proj}
\E[Y_{F_\theta+\tau}(\infty)\mid\theta,g]=\alpha^{0,g}_\theta+
\phi^{0,g}_{\tilde a_\theta+\tau}+\psi^{0,g}_{F_\theta+\tau}.
\end{equation}
In the Sym+FE regression, let $C$ span the type, age and year effects, let $E$ contain the free event-time indicators, let $W$ be the diagonal matrix of row weights and let $M_C$ be the weighted residual-maker for $C$. The population event-time coefficients are
\[
\beta^{g,*}=(E'WM_CE)^{-1}E'WM_Cd^g,
\qquad d^g_{\theta,\tau}=\mathbf{1}\{\tau\ge0\}\delta^g_\theta(\tau).
\]
They equal the treatment effect at each event time when it is the same across types. Otherwise the constant row weights do not make them a convex average of the effects at each event time. For example, take equally weighted cohorts $F=2010,\ldots,2014$, birth ages $24,\ldots,30$, and event times $-5,\ldots,5$. Set all untreated means to zero and all post-birth effects to one for $F=2010$ and zero for the other cohorts. With both $\beta_{-3}$ and $\beta_{-2}$ fixed to zero, the regression with type FE gives $\beta_{-5}=0.169$, $\beta_0=0.305$, and $\beta_5=1.028$, whereas the effects weighted by the parents observed at birth are zero before birth and $0.2$ afterward. Two valid pre-birth restrictions therefore do not remove the contamination from heterogeneous effects. For Sym, the same projection formula applies when untreated means contain only age and year effects, using its own event-time indicators and $C$ containing only the linearly independent age and year indicators.

The simulation in Section~\ref{sec:sim-type-fe} illustrates this distortion using individual fertility and outcome histories.

\begin{table}[htbp]
\centering
\caption{Symmetric matching with and without type fixed effects}\label{tab:bidirectional-comparison}

\footnotesize\setlength{\tabcolsep}{3pt}\renewcommand{\arraystretch}{1.08}
\begin{tabular}{@{}lrrrrrr@{}}
\toprule
 & \multicolumn{3}{c}{Coarse cells} & \multicolumn{3}{c}{Fine cells} \\
\cmidrule(lr){2-4}\cmidrule(lr){5-7}
Sample & Sym & Sym+FE & FE SD & Sym & Sym+FE & FE SD \\
\midrule
KLIPS employment & 41.1\% & 41.1\% & 10.1 & 43.3\% & 37.6\% & 14.2 \\
KLIPS earnings & 63.1\% & 65.4\% & 15.0 & 66.1\% & 67.7\% & 24.1 \\
HIES employment & 36.0\% & 35.1\% & 8.7 & 37.0\% & 40.2\% & 9.3 \\
HIES earnings & 50.5\% & 59.8\% & 5.2 & 50.4\% & 55.5\% & 6.7 \\
HFLC employment & 33.8\% & 37.5\% & 6.1 & 34.7\% & 43.2\% & 8.7 \\
\bottomrule
\end{tabular}
\begin{minipage}{\linewidth}\vspace{0.3em}\footnotesize\textit{Notes:} Scaled penalties, averages over years 0--10. Both estimators use the same cells and weights. FE SD is the bootstrap standard deviation of Sym+FE in percentage points. SDs use 200 draws.\end{minipage}
\end{table}

Table~\ref{tab:bidirectional-comparison} reports the Sym and Sym+FE penalties in the Korean surveys on the same included cells. With coarse cells, Sym differs from pseudo-OLS by about 3 percentage points at most. Sym+FE differs from Sym by up to 4 points for employment and is 2 to 9 points higher for earnings. Moving from coarse to fine cells changes Sym by up to 3.0 percentage points and Sym+FE by up to 5.7 points. The bootstrap standard deviation of Sym+FE in KLIPS also increases with fine cells. With coarse cells, fixing the coefficients at $-5$ and $-2$ instead of $-3$ and $-2$ reduces the KLIPS employment and earnings analytical SEs from 6.90 and 11.91 to 3.99 and 6.74 percentage points, respectively, with the cells and weights unchanged. 

\begin{table}[htbp]
\centering
\caption{Share of parents observed at birth represented at each event time}\label{tab:anchor-support-restored}
\footnotesize
\begin{tabular}{llrrrrr}
\toprule
Survey & Gender & $-5$ & $-2$ & $0$ & $5$ & $10$ \\
\midrule
KLIPS & Women & 85.3\% & 95.3\% & 100.0\% & 68.6\% & 30.9\% \\
 & Men & 82.9\% & 94.8\% & 100.0\% & 68.6\% & 32.3\% \\
\addlinespace
HIES & Women & 62.5\% & 84.8\% & 100.0\% & 79.5\% & 60.3\% \\
 & Men & 60.7\% & 84.1\% & 100.0\% & 81.2\% & 62.0\% \\
\addlinespace
HFLC & Women & 41.8\% & 75.7\% & 100.0\% & 81.3\% & 45.3\% \\
 & Men & 41.1\% & 75.4\% & 100.0\% & 81.6\% & 44.3\% \\
\bottomrule
\end{tabular}
\begin{minipage}{\linewidth}\vspace{0.3em}\footnotesize\textit{Notes:} Employment samples with coarse cells. Each entry is the share of the parents observed at birth whose type has a cell at that event time.\end{minipage}
\end{table}

Table~\ref{tab:anchor-support-restored} shows why constant weights do not hold the mix of types fixed. The types with a cell ten years after birth cover only 31--32\% of the parents observed at birth in KLIPS, 60--62\% in HIES and 44--45\% in HFLC.

\section{Additional Simulations}\label{app:additional-sims}

\subsection{Analytical Coverage for All Estimators}\label{app:sim-coverage-all}

Section~\ref{sec:sim-coverage} reports the analytical standard errors of the unscaled penalty and their coverage for pseudo-OLS and NP. This experiment extends the analysis to DR and Sym+FE and to the scaled penalty, whose standard error includes the uncertainty in the counterfactual denominators. The design is the same. I generate $N=10{,}000$ complete individual histories from the equations of Section~\ref{sec:sim-design}, without selection or with symmetric selection. I also include the two proxies of that section so that DR has covariates to adjust for.

Table~\ref{tab:coverage-all-analytical} reports the Monte Carlo standard deviation, the mean analytical standard error (SE) and the coverage of 95\% confidence intervals for the unscaled and the scaled penalty. The analytical SEs match the Monte Carlo standard deviations, and coverage is between 94\% and 97\% with one exception. The exception is the scaled NP penalty under symmetric selection, whose coverage is 86\%. This shortfall reflects a bias in the estimate, not an error in the SE. Symmetric selection cancels out in the unscaled penalty, but it biases each gender's counterfactual level and hence the scaled penalty, by 0.4 to 0.8 percentage points for every estimator. The SE of NP is only 0.98 percentage points, against about 4 points for the other estimators, so the same bias lowers its coverage.

\begin{table}[htbp]
\centering
\caption{Coverage of confidence intervals based on analytical standard errors}\label{tab:coverage-all-analytical}

\footnotesize\setlength{\tabcolsep}{4pt}\renewcommand{\arraystretch}{1.08}
\begin{tabular}{@{}llrrrrrr@{}}
\toprule
 &  & \multicolumn{3}{c}{Unscaled average} & \multicolumn{3}{c}{Scaled average (\%)} \\
\cmidrule(lr){3-5}\cmidrule(lr){6-8}
Scenario & Estimator & SD & Mean SE & Cov. (\%) & SD & Mean SE & Cov. (\%) \\
\midrule
No selection & Pseudo-OLS & 0.112 & 0.113 & 95.2 & 3.62 & 3.65 & 94.8 \\
 & DR & 0.130 & 0.132 & 95.4 & 4.22 & 4.28 & 96.2 \\
 & Sym+FE & 0.106 & 0.105 & 95.0 & 3.42 & 3.39 & 94.2 \\
 & NP & 0.026 & 0.027 & 96.4 & 0.89 & 0.90 & 96.8 \\
\addlinespace
Symmetric selection & Pseudo-OLS & 0.106 & 0.111 & 95.4 & 4.06 & 4.29 & 94.6 \\
 & DR & 0.123 & 0.130 & 95.8 & 4.40 & 4.66 & 96.0 \\
 & Sym+FE & 0.100 & 0.104 & 96.6 & 3.81 & 3.97 & 95.8 \\
 & NP & 0.026 & 0.026 & 96.0 & 0.98 & 0.98 & 86.4 \\
\bottomrule
\end{tabular}
\begin{minipage}{\linewidth}\vspace{0.35em}\footnotesize\textit{Notes:} 500 replications with $N=10{,}000$. SD is the standard deviation of the estimates across replications, Mean SE is the average analytical standard error, and Cov.\ is the share of replications in which the 95\% confidence interval contains the true value. Both penalties are averages over years 0--10. The true unscaled penalty is 0.28. The true scaled penalty is 9.5\% without selection and 9.3\% under symmetric selection, and 9.7\% and 9.4\% for NP, which uses more cohorts. Scaled target values are approximated by simulation.\end{minipage}
\end{table}

\FloatBarrier

\FloatBarrier
\section{Marriage and First Birth}\label{app:marriage}

Section~\ref{sec:empirical-results} shows that employment in the panel moves sharply before the first birth. To understand why, I examine how marriage and the first birth are linked. The year of marriage is the year of the first interview in which a respondent who was previously never married reports being married. It is therefore unknown for respondents who are already married when they enter the survey. For the marriage-event profiles in Figure~\ref{fig:marriage-profiles}A and Table~\ref{tab:marriage-gender}, I keep marriages from 2011 onward at ages 20--45 and use the event window $[-5,10]$ with base period $-2$.

Marriage and the first birth are close in time. Of the 1,697 parents in the panel sample of Table~\ref{tab:klips}, the year of marriage is observed for 1,094. Most of the others were already married when they were first interviewed. Among these 1,094 parents, 28.9\% have their first child in the year of marriage and another 27.3\% in the following year, while 14.7\% have it before the recorded year of marriage.\footnote{The year of marriage is the year of the first interview that reports it, so a marriage that takes place between two interviews is dated late.} Among the parents observed at birth in the employment sample, 98.8\% of women and 97.1\% of men are married, so matching on marital status compares married parents with married childless donors.

Figure~\ref{fig:marriage-profiles}A plots employment by year relative to marriage, for men and for women grouped by the timing of their first birth. Men's employment rises from 92.0\% in the year before marriage to 95.4\% in the year of marriage. Women's employment falls from 86.4\% to 60.3\% on average, but the fall is concentrated among women whose first birth is close. It falls from 81\% to 40\% for women who give birth in the year of marriage or before. It falls from 90\% to 64\% for women who give birth a year later, many of whom are pregnant in the year of marriage. For women who give birth two or more years after marriage, employment falls from 87\% to 81\% in the year of marriage, and the larger fall comes around the birth. For women with no observed birth it falls from 89\% to 76\%. These profiles therefore do not separate the effects of marriage, pregnancy and the first birth.

\begin{figure}[!htbp]
\centering
\includegraphics[width=\linewidth]{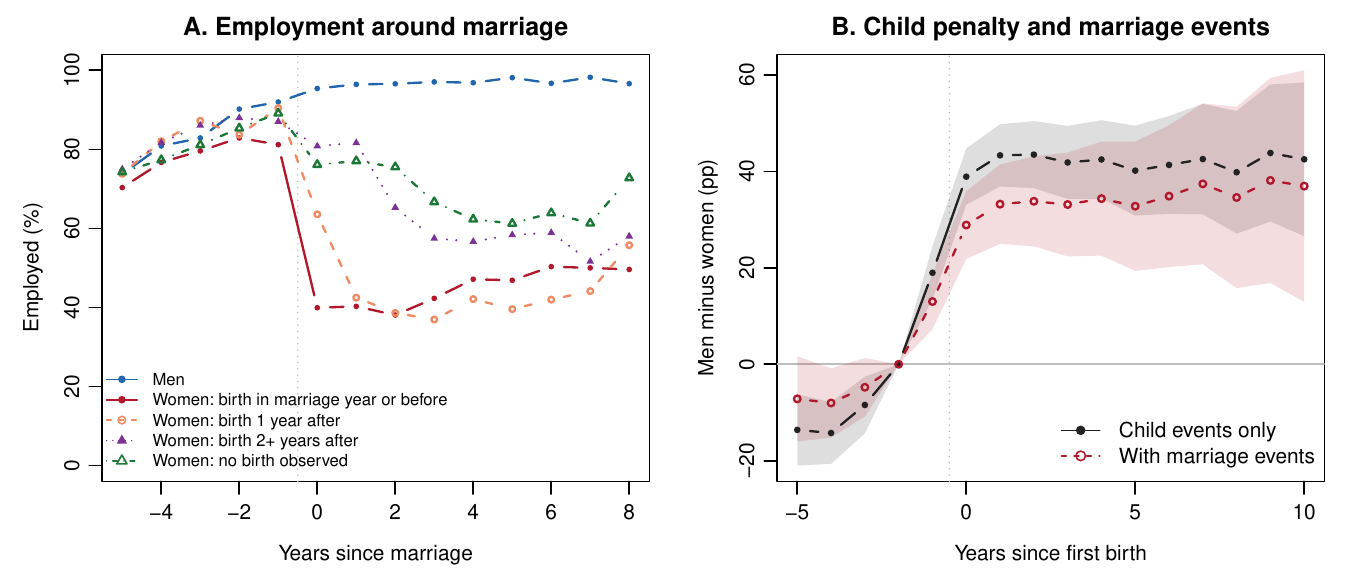}
\caption{Employment around marriage and the child penalty with marriage events. Panel A groups women by the year of their first birth relative to the year of marriage. Panel B plots the men-minus-women child event coefficients from the panel regression with and without indicators for the years relative to marriage. The shaded areas show pointwise 95\% confidence intervals.}\label{fig:marriage-profiles}
\end{figure}

Figure~\ref{fig:marriage-profiles}B shows how the child event study in the KLIPS panel changes when indicators for the years relative to marriage are added to the regression. The regression uses the 1,094 parents whose year of marriage is observed, because the marriage indicators cannot be defined for the others. For this reason the penalty without marriage indicators, 41.8 percentage points (SE 4.3), differs from the 38.9 points in Table~\ref{tab:klips}. Adding the marriage indicators lowers the average post-birth penalty to 34.4 points (SE 6.5) and the coefficient in the year before birth from 19.0 to 13.0 points. For many parents, marriage, pregnancy and the first birth occur close together. Marriage indicators can therefore capture changes associated with pregnancy and childbirth as well as marriage.

I also use marriage as the event to examine whether the agreement between the panel and pseudo-panel extends beyond the childbirth context. In the child penalty analysis, parents are matched to childless donors with the same marital status. For marriage, donors are respondents who are still unmarried, some of whom marry later, and marital status can no longer be used for matching. Pseudo-OLS matches on gender, age, year and education, and DR adjusts for the covariates of Section~\ref{sec:empirical-data}, except the spouse age gap, within cells of gender, age and year. Table~\ref{tab:marriage-gender} reports the scaled effects averaged over years 0--10. Here the pseudo-panel estimates a larger gender difference than the panel. The difference in employment is 54.5\% (SE 3.20) in the pseudo-panel and 43.1\% (SE 3.18) in the panel. Relative to the panel, the pseudo-panel shifts the estimated effects upward for both genders, by 18.5 percentage points for men and 7.1 for women, so the shifts do not cancel in the gender difference. The two designs differ in the counterfactual. In the year of marriage, the estimated employment rate of men had they not married is 79.8\% in the pseudo-panel and 94.1\% in the panel. Unmarried donors work less than men who are about to marry. This pattern is consistent with the selection bias described in Section~\ref{sec:framework}. For earnings, the gap is 86.7\% in the pseudo-panel and 54.9\% in the panel. DR makes no meaningful adjustment in this case. 

\begin{table}[htbp]
\centering
\caption{Employment and earnings after marriage in the panel and the pseudo-panel}\label{tab:marriage-gender}
\footnotesize
\begin{tabular}{@{}lrrrrrr@{}}
\toprule
 & \multicolumn{3}{c}{Employment} & \multicolumn{3}{c}{Annual earnings} \\
\cmidrule(lr){2-4}\cmidrule(lr){5-7}
 & Panel & Pseudo-OLS & DR & Panel & Pseudo-OLS & DR \\
\midrule
Men & 1.6\% & 20.1\% & 21.1\% & 17.1\% & 60.2\% & 60.1\% \\
Women & -41.5\% & -34.4\% & -33.3\% & -37.8\% & -26.5\% & -25.7\% \\
Men $-$ women & 43.1\% & 54.5\% & 54.3\% & 54.9\% & 86.7\% & 85.8\% \\
 & (3.18) & (3.20) & (3.26) & (6.50) & (8.27) & (8.13) \\
\bottomrule
\end{tabular}
\begin{minipage}{\linewidth}\vspace{0.3em}\footnotesize\textit{Notes:} KLIPS, marriages from 2011, window $[-5,10]$ with base period $-2$. Entries are scaled changes in the outcome after marriage, averaged over event times 0--10, and may also reflect changes associated with subsequent childbirth. Analytical SEs of the difference, in percentage points, are in parentheses. Employment sample: 655 women and 693 men in the panel, 744 and 767 observed in the marriage year in the pseudo-panel.\end{minipage}
\end{table}

\FloatBarrier
\section{Additional Robustness Analyses}\label{app:empirical-diagnostics}

\subsection{Donor Pool}\label{app:future-never}\label{app:diagnostics}

\begin{figure}[!b]
\centering
\includegraphics[width=\linewidth]{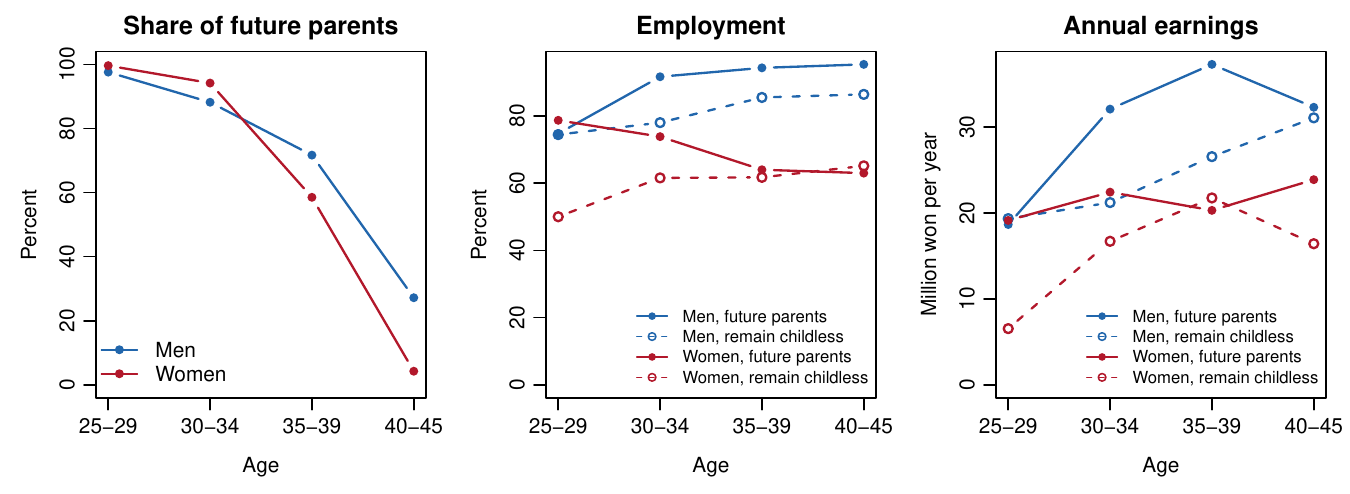}
\caption{Future parents and never-parents among childless donors in KLIPS. Future parents have a birth observed in a later wave, and never-parents remain childless through at least age 45.}\label{fig:future-profiles}
\end{figure}

In KLIPS, I identify future parents among donors who have a birth observed in a later wave and never-parents who remain childless in every wave through at least age 45. The remaining donors, 59\% of the donor records at ages 25--45, are observed neither with a later birth nor up to age 45 and are left out of this comparison. Figure~\ref{fig:future-profiles} compares the two classified groups by age. The left panel plots the share of future parents among them. Below age 35 the share is at least 88\%, and at ages 40--45 it falls to 27\% for men and 4\% for women. The other two panels plot the employment rates and annual earnings of the two groups. At ages 30--45 future fathers are 9 to 14 percentage points more likely to work than never-parents, and at ages 30--39 they earn about 11 million won more per year. Below age 35, future mothers are 12 to 29 percentage points more likely to work than women who remain childless. At later ages the two groups of women work at similar rates.

To see whether these gaps remain within matching cells, I regress each outcome, separately by gender, on an indicator for future parents and a fixed effect for every combination of age, year, marital status, education and urban residence.\footnote{Standard errors are clustered as in Section~\ref{sec:empirical-data}.} For men the gaps remain. Future fathers are 8.7 percentage points (SE 4.2) more likely to work and earn 5.0 million won (SE 2.2 million) more per year than never-parents with the same characteristics. For women the differences are insignificant, $-0.8$ points (SE 11.9) for employment and 2.5 million won (SE 4.9 million) for earnings.

\FloatBarrier
Table~\ref{tab:donor-composition}\phantomsection\label{app:donor-composition} re-estimates the KLIPS penalties using fine cells for both estimators and future parents as the only donors, which excludes both the never-parents and the unclassified donors. The two columns being compared use the same parents and the same pre-birth event times, so only the donor pool changes. The pseudo-OLS employment penalty falls from 42.1\% to 41.8\%, while the earnings penalty falls from 65.8\% to 65.7\%. The restriction slightly reduces men's estimated employment gain, while women's estimated loss is nearly unchanged. For earnings, women's larger estimated loss is offset by men's smaller estimated gain.

\begin{table}[htbp]
\centering
\caption{Restricting donors to future parents}\label{tab:donor-composition}

\footnotesize\setlength{\tabcolsep}{5pt}\renewcommand{\arraystretch}{1.08}
\begin{tabular}{@{}lrrrrrr@{}}
\toprule
 & \multicolumn{3}{c}{Employment} & \multicolumn{3}{c}{Annual earnings} \\
\cmidrule(lr){2-4}\cmidrule(lr){5-7}
Estimator & \shortstack{Full\\sample} & \shortstack{All\\donors} & \shortstack{Future\\parents only} & \shortstack{Full\\sample} & \shortstack{All\\donors} & \shortstack{Future\\parents only} \\
\midrule
Pseudo-OLS & 42.1\% & 42.1\% & 41.8\% & 64.7\% & 65.8\% & 65.7\% \\
DR & 41.6\% & 41.5\% & 40.9\% & 65.2\% & 66.1\% & 65.4\% \\
\bottomrule
\end{tabular}
\begin{minipage}{\linewidth}\vspace{0.3em}\footnotesize\textit{Notes:} Scaled KLIPS penalties, averages over years 0--10, using fine cells for both estimators. The last two columns of each outcome use the same pre-birth cells and all original post-birth parents.\end{minipage}
\end{table}

\FloatBarrier
\subsection{Matching Variables and the DR Adjustment}\label{app:coarse-cell}

Table~\ref{tab:xs-decomposition} reports the KLIPS employment penalty under different sets of matching variables. Matching on age and year alone gives an unscaled penalty of 0.422, and adding marital status lowers it to 0.291. Adding urban residence, education or both changes it by at most 0.03. With both added, the number of pre-birth outcomes that can be imputed falls by 15\%. DR instead uses these variables as covariates in its two models within the coarser cells, so no pre-birth outcome is lost.

\begin{table}[htbp]
\centering
\caption{Matching variables and the employment penalty}\label{tab:xs-decomposition}
\small
\begin{tabular}{lrr}
\toprule
Matching cells & Unscaled penalty & Imputed pre-birth outcomes \\
\midrule
Age and year & 0.422 & 9,693 \\
+ marital status & 0.291 & 8,851 \\
+ marital status, urban & 0.299 & 8,373 \\
+ marital status, education & 0.315 & 8,278 \\
+ marital status, urban, education & 0.316 & 7,557 \\
DR within marital cells & 0.296 & --- \\
DR within marital and urban cells & 0.300 & --- \\
\bottomrule
\end{tabular}
\begin{minipage}{\linewidth}\vspace{0.3em}\footnotesize\textit{Notes:} Unscaled KLIPS employment penalty, average over years 0--10. Every row also matches on gender. The last column is the number of pre-birth outcomes, one per parent and event time, whose cell contains a donor.\end{minipage}
\end{table}

With coarse cells, the DR adjustment raises imputed pre-birth employment by an average of 1.6 percentage points for women and 0.4 for men. For annual earnings it is 650,000 won for women and 200,000 won for men. The adjustment comes almost entirely from the outcome model. The residual correction offsets 4 to 12\% of the outcome-model term, except for men's earnings, where it has the same sign. Table~\ref{tab:dr-overlap} reports how many donors a cell contains. The median is 8 to 30 donors in coarse cells and 4 to 11 in fine cells. In fine cells a single donor often receives the entire weight, which is why I fit DR within coarse cells.

\begin{table}[htbp]
\centering
\caption{Donors per matching cell}\label{tab:dr-overlap}

\footnotesize\setlength{\tabcolsep}{4pt}\renewcommand{\arraystretch}{1.08}
\begin{tabular}{@{}lrrrrrr@{}}
\toprule
 & \multicolumn{3}{c}{Coarse cells} & \multicolumn{3}{c}{Fine cells} \\
\cmidrule(lr){2-4}\cmidrule(lr){5-7}
Sample & Median $n$ & Median ESS & \shortstack{90th max.\\weight} & Median $n$ & Median ESS & \shortstack{90th max.\\weight} \\
\midrule
KLIPS employment & 8.0 & 7.2 & 73.5\% & 4.0 & 3.9 & 100.0\% \\
KLIPS earnings & 8.0 & 7.3 & 67.9\% & 4.0 & 3.9 & 100.0\% \\
HIES employment & 13.0 & 12.7 & 50.0\% & 5.0 & 5.0 & 100.0\% \\
HIES earnings & 30.0 & 28.5 & 23.9\% & 11.0 & 11.0 & 50.0\% \\
HFLC employment & 24.0 & 17.8 & 54.0\% & 8.0 & 5.9 & 98.0\% \\
\bottomrule
\end{tabular}
\begin{minipage}{\linewidth}\vspace{0.3em}\footnotesize\textit{Notes:} Donor cells used to impute pre-birth outcomes. $n$ is the number of donors in a cell and ESS is their effective number under the DR weights, $(\sum_j w_j)^2/\sum_j w_j^2$. The last column in each block is the 90th percentile across cells of the largest weight share of a single donor.\end{minipage}
\end{table}

I further investigate the finite-sample stability of the DR adjustment using cross-fitting. Table~\ref{tab:crossfit} compares the DR estimate in KLIPS, in which both models are estimated on the full sample, with two-fold and five-fold cross-fitting, in which each model is estimated on the other folds. Folds are assigned to clusters of connected persons and households, so all records of a cluster, in every year and as parent or donor, fall in the same fold. A pre-birth cell is retained only if it has donors in every fold; all post-birth parent observations remain. This requirement keeps 87--88\% of the imputed pre-birth outcomes, weighted by parents, with two folds but only 41--42\% with five folds. The cross-fitted estimate therefore differs from the full-sample estimate for two reasons, the smaller set of cells and the cross-fitting itself. To separate them, the third row of the table re-estimates DR without cross-fitting on the cells that cross-fitting keeps. On these cells, two-fold cross-fitting lowers the employment penalty from 41.7\% to 41.3\% and the earnings penalty from 65.7\% to 61.1\%.

\begin{table}[htbp]
\centering
\caption{KLIPS DR with and without cross-fitting}\label{tab:crossfit}

\footnotesize\setlength{\tabcolsep}{4pt}\renewcommand{\arraystretch}{1.08}
\begin{tabular}{@{}lrrrrrr@{}}
\toprule
 & \multicolumn{3}{c}{Employment} & \multicolumn{3}{c}{Annual earnings} \\
\cmidrule(lr){2-4}\cmidrule(lr){5-7}
 & No CF & $K=2$ & $K=5$ & No CF & $K=2$ & $K=5$ \\
\midrule
Scaled penalty & 40.5\% & 41.3\% & 43.1\% & 64.1\% & 61.1\% & 65.1\% \\
Unscaled penalty & 0.296 & 0.303 & 0.320 & 1,834 & 1,703 & 1,886 \\
\addlinespace
No CF on the same cells, scaled & --- & 41.7\% & 42.2\% & --- & 65.7\% & 65.0\% \\
\addlinespace
Imputed pre-birth outcomes kept & 100\% & 86.6\% & 41.1\% & 100\% & 88.2\% & 42.4\% \\
\bottomrule
\end{tabular}
\begin{minipage}{\linewidth}\vspace{0.3em}\footnotesize\textit{Notes:} DR with coarse cells on the window $[-5,10]$, 0--10 averages. No CF estimates the outcome and propensity models on the full sample. $K$ folds are assigned to connected person--household clusters. Cross-fitting retains pre-birth cells with donors in every fold; all post-birth parent observations remain. Retention shares weight pre-birth outcomes by parents. Unscaled employment penalties are differences in employment rates, and unscaled earnings penalties are in 10,000 won at 2020 prices.\end{minipage}
\end{table}

Table~\ref{tab:rcs-crossfit} reports the DR penalties in HIES and HFLC with and without cross-fitting, for coarse and fine cells. Cross-fitting uses two folds to minimize the loss of included cells. In HFLC the folds are assigned to households. In HIES they are assigned to household--years for employment and to household--quarters for earnings. As in KLIPS, the table also reports the estimate without cross-fitting on the cells that cross-fitting keeps. On these cells, cross-fitting changes the penalties by at most 1.5 percentage points.

\begin{table}[htbp]
\centering
\caption{Repeated-cross-section DR and two-fold cross-fitting}\label{tab:rcs-crossfit}

\footnotesize\setlength{\tabcolsep}{4pt}\renewcommand{\arraystretch}{1.08}
\begin{tabular}{@{}lrrrrrr@{}}
\toprule
 & \multicolumn{3}{c}{Coarse cells} & \multicolumn{3}{c}{Fine cells} \\
\cmidrule(lr){2-4}\cmidrule(lr){5-7}
Sample & \shortstack{Full\\sample} & \shortstack{Same\\cells} & \shortstack{Cross-\\fitting} & \shortstack{Full\\sample} & \shortstack{Same\\cells} & \shortstack{Cross-\\fitting} \\
\midrule
HIES employment & 34.3\% & 34.3\% & 35.0\% & 33.4\% & 33.4\% & 33.3\% \\
HIES earnings & 51.1\% & 50.9\% & 51.1\% & 49.6\% & 49.6\% & 49.3\% \\
HFLC employment & 32.3\% & 32.6\% & 32.2\% & 32.0\% & 31.9\% & 30.4\% \\
\bottomrule
\end{tabular}
\begin{minipage}{\linewidth}\vspace{0.3em}\footnotesize\textit{Notes:} DR on the window $[-5,10]$, scaled 0--10 averages. Full sample estimates the outcome and propensity models on all records. Two-fold cross-fitting retains pre-birth cells with donors in both folds; all post-birth parent observations remain. Same cells estimates DR without cross-fitting on that support. Folds are assigned to households in HFLC and to household--years (employment) or household--quarters (earnings) in HIES.\end{minipage}
\end{table}

\FloatBarrier
\subsection{A Shorter Estimation Window: \texorpdfstring{$[-2,5]$}{[-2,5]}}\label{app:tau05}
\suppressfloats[t]

I re-estimate every estimator except Sym+FE on the window $[-2,5]$, with base period $-2$, and average the penalties over years 0--5. The survey years, donor ages, parent eligibility rules and KLIPS first-birth cohorts are the same as in the main analysis, and the panel regression keeps the parents with an outcome before and after birth within this window. Sym+FE is omitted because the window does not contain year $-3$, which Sym+FE fixes at zero. Tables~\ref{tab:klips-tau05} and \ref{tab:rcs-tau05} report the penalties, and Figure~\ref{fig:tau05} plots the event-time paths. The employment penalties are larger than the ten-year averages, consistent with larger employment losses in the first years after birth. In KLIPS the employment penalty is 51.3\% in the panel and 45.3\% in the standard pseudo-panel regression.

\begin{table}[htbp]
\centering
\caption{KLIPS child penalties: estimation window $[-2,5]$}\label{tab:klips-tau05}

\footnotesize\setlength{\tabcolsep}{4pt}\renewcommand{\arraystretch}{1.08}
\begin{tabular}{@{}lrrrr@{}}
\toprule
 & \multicolumn{2}{c}{Employment} & \multicolumn{2}{c}{Annual earnings} \\
\cmidrule(lr){2-3}\cmidrule(lr){4-5}
Estimator & Unscaled (SE) & Scaled (SE) & Unscaled (SE) & Scaled (SE) \\
\midrule
\multicolumn{5}{l}{\emph{Actual-panel estimators, first births from 2011}} \\
Panel OLS & 0.429 (0.030) & 51.3\% (2.49) & 1,851 (134) & 65.3\% (3.65) \\
Balanced [-5,5] & 0.438 (0.060) & 52.3\% (4.52) & 1,810 (323) & 67.7\% (9.94) \\
Balanced [-2,5] & 0.426 (0.047) & 52.0\% (3.65) & 1,952 (236) & 70.5\% (6.77) \\
\addlinespace
\multicolumn{5}{l}{\emph{Pseudo-panel estimators}} \\
Pseudo-OLS (fine) & 0.331 (0.023) & 45.3\% (2.28) & 1,636 (126) & 62.2\% (3.46) \\
Pseudo-OLS (coarse) & 0.313 (0.023) & 43.8\% (2.34) & 1,542 (133) & 60.0\% (3.68) \\
DR (coarse) & 0.320 (0.023) & 44.4\% (2.27) & 1,614 (124) & 61.6\% (3.41) \\
Sym (coarse) & 0.317 (0.024) & 44.4\% (2.36) & 1,583 (140) & 61.0\% (3.95) \\
NP (fine) & 0.310 (0.021) & 43.6\% (2.27) & 1,443 (138) & 57.2\% (3.77) \\
NP (coarse) & 0.283 (0.022) & 41.2\% (2.46) & 1,296 (150) & 53.6\% (4.11) \\
\bottomrule
\end{tabular}
\begin{minipage}{\linewidth}\vspace{0.3em}\footnotesize\textit{Notes:} Men-minus-women averages over years 0--5. Unscaled employment penalties are differences in employment rates, and unscaled earnings penalties are in 10,000 won at 2020 prices. Analytical standard errors of the scaled penalties are in percentage points. Balanced $[a,b]$ keeps parents observed at or before $a$ and at or after $b$. The three panel rows use 820/791, 252/244 and 411/410 women/men for employment and 905/875, 282/264 and 459/449 for earnings.\end{minipage}
\end{table}

\begin{table}[htbp]
\centering
\caption{Repeated-cross-section child penalties: $[-2,5]$}\label{tab:rcs-tau05}
\footnotesize
\begin{tabular}{lcccccc}
\toprule
 & \multicolumn{2}{c}{HFLC employment} & \multicolumn{2}{c}{HIES employment} & \multicolumn{2}{c}{HIES earnings} \\
\cmidrule(lr){2-3}\cmidrule(lr){4-5}\cmidrule(lr){6-7}
Estimator & Scaled & SE & Scaled & SE & Scaled & SE \\
\midrule
Pseudo-OLS (fine) & 37.6\% & 2.1 & 40.8\% & 1.4 & 53.2\% & 1.4 \\
Pseudo-OLS (coarse) & 36.4\% & 2.0 & 40.8\% & 1.4 & 55.0\% & 1.3 \\
DR (coarse) & 37.7\% & 2.0 & 41.8\% & 1.4 & 54.8\% & 1.3 \\
Sym (coarse) & 38.9\% & 1.6 & 42.7\% & 1.2 & 53.0\% & 1.2 \\
NP (fine) & 35.4\% & 1.7 & 40.2\% & 1.2 & 54.8\% & 1.4 \\
NP (coarse) & 34.8\% & 1.7 & 39.1\% & 1.2 & 55.3\% & 1.3 \\
\bottomrule
\end{tabular}
\begin{minipage}{\linewidth}\vspace{0.3em}\footnotesize\textit{Notes:} Scaled men-minus-women averages over years 0--5, with analytical standard errors in percentage points.\end{minipage}
\end{table}

\begin{figure}[p!]
\centering
\includegraphics[width=\linewidth,height=.83\textheight,keepaspectratio]{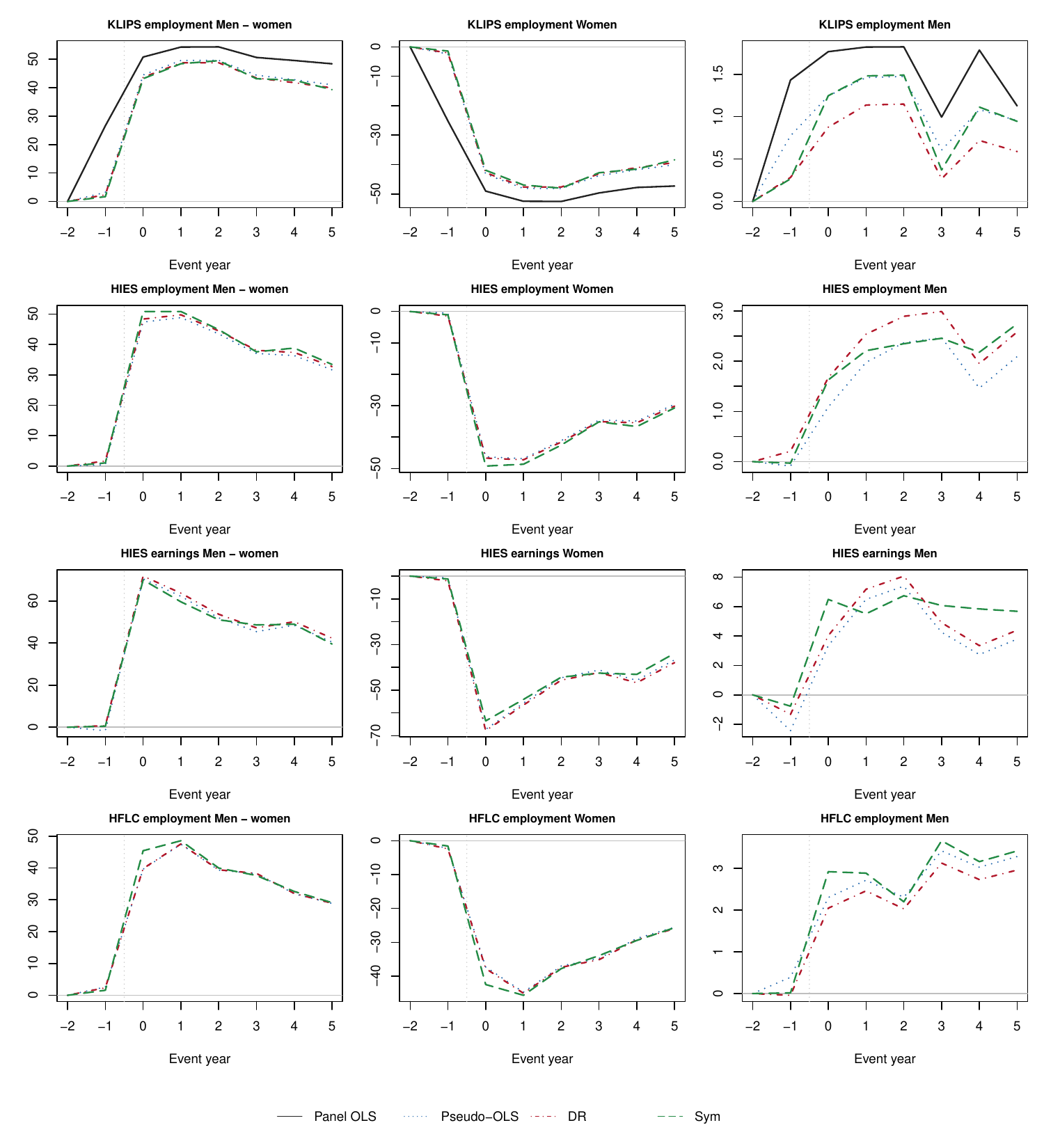}
\caption{Scaled event-time paths estimated on the window $[-2,5]$, with base period $-2$. Pseudo-OLS uses fine cells, and DR and Sym coarse cells.}\label{fig:tau05}
\end{figure}

\subsection{Survey Weights}\label{app:weighted-rcs}

Using the survey weights changes the HIES and HFLC pseudo-OLS penalties by at most 1.3 percentage points (Table~\ref{tab:weighted-rcs}).

\begin{table}[htbp]
\centering
\caption{Survey-weighted versus unweighted pseudo-OLS penalties}\label{tab:weighted-rcs}

\footnotesize\setlength{\tabcolsep}{4pt}\renewcommand{\arraystretch}{1.08}
\begin{tabular}{@{}lrrrrrr@{}}
\toprule
 & \multicolumn{3}{c}{Coarse cells} & \multicolumn{3}{c}{Fine cells} \\
\cmidrule(lr){2-4}\cmidrule(lr){5-7}
Sample & Unweighted & Weighted & Diff. (pp) & Unweighted & Weighted & Diff. (pp) \\
\midrule
HIES employment & 33.3\% & 34.3\% & 1.0 & 33.5\% & 33.8\% & 0.4 \\
HIES earnings & 51.1\% & 50.7\% & -0.5 & 49.7\% & 49.5\% & -0.2 \\
HFLC employment & 30.9\% & 30.3\% & -0.5 & 32.0\% & 30.6\% & -1.3 \\
\bottomrule
\end{tabular}
\begin{minipage}{\linewidth}\vspace{0.3em}\footnotesize\textit{Notes:} Scaled pseudo-OLS penalties, men-minus-women averages over years 0--10. Survey weights enter the donor means, parent counts and regressions. Diff.\ is the weighted minus the unweighted penalty in percentage points.\end{minipage}
\end{table}

\FloatBarrier
\subsection{Analytical and Bootstrap Standard Errors}\label{app:resampling-restored}

Table~\ref{tab:analytical-se} compares the analytical standard errors with the standard deviation of a cluster bootstrap on the same samples, using coarse cells. I draw 200 bootstrap samples of clusters with replacement and repeat the entire estimation procedure on each sample. The clusters are those of Section~\ref{sec:empirical-data}.

\begin{table}[!htbp]
\centering
\caption{Analytical standard errors and bootstrap standard deviations}\label{tab:analytical-se}

\footnotesize\setlength{\tabcolsep}{4pt}\renewcommand{\arraystretch}{1.0}
\begin{tabular}{@{}lrrrrrr@{}}
\toprule
 & \multicolumn{3}{c}{Employment} & \multicolumn{3}{c}{Earnings} \\
\cmidrule(lr){2-4}\cmidrule(lr){5-7}
Estimator & Penalty (\%) & Analytical SE & Bootstrap SD & Penalty (\%) & Analytical SE & Bootstrap SD \\
\midrule
\multicolumn{7}{l}{\emph{KLIPS}} \\
Pseudo-OLS & 40.1 & 2.62 & 2.91 & 62.2 & 4.46 & 4.94 \\
DR & 40.5 & 2.60 & 2.99 & 64.1 & 4.22 & 4.66 \\
Sym+FE & 41.1 & 6.90 & 10.08 & 65.4 & 11.91 & 14.98 \\
NP & 35.4 & 2.97 & 3.47 & 42.5 & 5.32 & 8.27 \\
\addlinespace
\multicolumn{7}{l}{\emph{HIES}} \\
Pseudo-OLS & 33.3 & 1.42 & 1.61 & 51.1 & 1.27 & 1.38 \\
DR & 34.3 & 1.38 & 1.54 & 51.1 & 1.27 & 1.41 \\
Sym+FE & 35.1 & 6.45 & 8.73 & 59.8 & 4.38 & 5.18 \\
NP & 32.4 & 1.27 & 1.48 & 56.0 & 1.56 & 1.56 \\
\addlinespace
\multicolumn{7}{l}{\emph{HFLC}} \\
Pseudo-OLS & 30.9 & 2.00 & 1.98 & --- & --- & --- \\
DR & 32.3 & 1.99 & 2.11 & --- & --- & --- \\
Sym+FE & 37.5 & 5.17 & 6.09 & --- & --- & --- \\
NP & 27.1 & 1.83 & 2.03 & --- & --- & --- \\
\bottomrule
\end{tabular}
\begin{minipage}{\linewidth}\vspace{0.3em}\footnotesize\textit{Notes:} Coarse cells, scaled men-minus-women averages over years 0--10. Standard errors and standard deviations are in percentage points. The bootstrap uses 200 samples of clusters.\end{minipage}
\end{table}

For KLIPS employment Sym+FE, the five most extreme bootstrap estimates account for 19\% of the bootstrap variance. KLIPS earnings NP is sensitive to the inclusion of a high-earning man who supplies two donor cells. Excluding comparisons that use those cells raises the scaled penalty by 7.7 percentage points, with the remaining cell means and counts held fixed and the weights renormalized. A cell with one donor has the same mean whenever he is sampled, but disappears when he is omitted, a change that the first-order approximation does not capture.

\FloatBarrier
\subsection{Heterogeneity}\label{app:heterogeneity}

In the KLIPS panel, employment penalties differ considerably by education and age at first birth as reported in Table~\ref{tab:child-hetero}. The penalty is 52.8\% for parents with a high school or two-year college education, much larger than 39.5\% for university graduates. Parents who had their first child at age 30 or younger also have a larger penalty than older parents, 61.9\% versus 41.2\%. These differences are smaller in the pseudo-panel. With coarse cells, pseudo-OLS estimates fall between 36.2\% and 44.6\% across all four groups. DR gives estimates of 42.9\% for the less educated and 46.9\% for younger parents, but still misses much of the difference between groups. The close aggregate estimates therefore conceal substantial discrepancies for some subpopulations.

\begin{table}[htbp]
\centering
\caption{Child penalty heterogeneity}\label{tab:child-hetero}
\footnotesize
\begin{tabular}{lrrrrr}
\toprule
Group & $n_w$ & $n_m$ & Panel & Pseudo-OLS & DR \\
\midrule
Overall & 898 & 895 & 46.9\% & 40.1\% & 40.5\% \\
High school or two-year college & 413 & 354 & 52.8\% & 43.3\% & 42.9\% \\
University or above & 475 & 537 & 39.5\% & 36.2\% & 38.1\% \\
Age at birth $\leq30$ & 340 & 174 & 61.9\% & 44.6\% & 46.9\% \\
Age at birth $>30$ & 558 & 721 & 41.2\% & 38.5\% & 39.3\% \\
\bottomrule
\end{tabular}
\begin{minipage}{\linewidth}\vspace{0.3em}\footnotesize\textit{Notes:} Scaled employment penalties, averages over years 0--10 with coarse cells. $n_w$ and $n_m$ are the women and men observed at birth in the pseudo-panel. Education groups follow the education recorded in each record, and the donor pool is not split by group.\end{minipage}
\end{table}

\end{document}